\newtheorem{definition}{Definition}
\newtheorem{prop}{Proposition}
\newtheorem{lemma}{Lemma}
\newtheorem{corol}{Corollary}
\newtheorem{theorem}{Theorem}
\newcommand{\norm}[1]{\left|\left|#1\right|\right|}
\begin{document}

\begin{frontmatter}

\title{Entropy-stable hybridized discontinuous Galerkin methods for the compressible Euler and Navier-Stokes equations}

\author[add1,add2]{P. Fernandez\corref{1}}
\ead{pablof@mit.edu}
\author[add1,add2]{N.~C. Nguyen}
\ead{cuongng@mit.edu}
\author[add1,add2]{J. Peraire}
\ead{peraire@mit.edu}
\address[add1]{Department of Aeronautics and Astronautics, Massachusetts Institute of Technology, USA.}
\address[add2]{Center for Computational Engineering, Massachusetts Institute of Technology, USA.}
\cortext[1]{Corresponding author}

\begin{abstract}
In the spirit of making high-order discontinuous Galerkin (DG) methods more competitive, researchers have developed the hybridized DG methods, a class of discontinuous Galerkin methods that generalizes the Hybridizable DG (HDG), the Embedded DG (EDG) and the Interior Embedded DG (IEDG) methods. These methods are amenable to hybridization (static condensation) and thus to more computationally efficient implementations. Like other high-order DG methods, however, they may suffer from numerical stability issues in under-resolved fluid flow simulations. In this spirit, we introduce the hybridized DG methods for the compressible Euler and Navier-Stokes equations in entropy variables. Under a suitable choice of the numerical flux, the scheme can be shown to be entropy stable and satisfy the Second Law of Thermodynamics in an integral sense. The performance and robustness of the proposed family of schemes are illustrated through a series of steady and unsteady flow problems in subsonic, transonic, and supersonic regimes. 
The hybridized DG methods in entropy variables show the optimal accuracy order given by the polynomial approximation space, and are significantly superior to their counterparts in conservation variables in terms of stability and robustness, particularly for under-resolved and shock flows.
\end{abstract}

\begin{keyword}
Compressible flows \sep Discontinuous Galerkin methods \sep Entropy stability \sep Large-eddy simulation \sep Numerical stability \sep Turbulent flows


\MSC[2010] 65M60 \sep 76Fxx \sep 76Hxx \sep 76Jxx \sep 76Kxx \sep 76Lxx
\end{keyword}

\end{frontmatter}

\section{Introduction}


Over the past few years, discontinuous Galerkin (DG) methods have emerged as a promising approach for fluid flow simulations. First, they allow for high-order discretizations on complex geometries and unstructured meshes; which is a critical feature to simulate transitional and turbulent flows over the complex three-dimensional geometries commonly encountered in industrial applications. Second, DG methods are well suited to emerging computing architectures, including graphics processing units (GPUs) and other many-core architectures, due to their high flop-to-communication ratio. The use of DG methods for large-eddy simulation (LES) of transitional and turbulent flows is being further encouraged by successful numerical predictions \cite{Beck:14,Fernandez:17a,Frere:15,Gassner:13,Murman:16,Renac:15,Uranga:11,Wiart:15}.

However, high-order DG methods remain computationally expensive and may suffer from numerical stability issues in under-resolved computations. In order to address the first limitation, researchers have recently developed the hybridized DG methods \cite{Fernandez:17a,Nguyen:15}, a class of discontinuous Galerkin methods that generalizes the Hybridizable DG (HDG) \cite{Cockburn:09a,Fidkowski:2015,Nguyen:12,Peraire:10,Woopen:2015}, the Embedded DG (EDG) \cite{Cockburn:09a,Cockburn:09b,Peraire:11} and the Interior Embedded DG (IEDG) \cite{Fernandez:16a} methods. In order to address the second issue, entropy-stable DG schemes have been proposed for the compressible Euler \cite{Barth:2006a,Hiltebrand:2014,Hou:2007,Jiang:1994} and Navier-Stokes equations \cite{Chandrashekar:2013,Gassner:BR1:2018,May:2015,Williams:17,Zakerzadeh:2017}. From a physical perspective, entropy stability implies the numerical solution satisfies the integral version of the Second Law of Thermodynamics in the computational domain, and this in turn allows for improved robustness in under-resolved computations \cite{Carpenter:2013,Diosady:2015,Fisher:2013,Fjordholm:2012b}.

In this paper, we 
devise entropy-stable hybridized DG methods for the compressible Euler and Navier-Stokes equations. To this end, we use entropy variables for the hybridized DG discretization; which allows us to derive an identity governing the evolution of total entropy in the numerical solution. The entropy stability of the scheme is then ensured by a proper choice of the numerical flux. 
The performance of the entropy-variable hybridized DG methods is illustrated through a series of steady and unsteady flows in subsonic, transonic, and supersonic regimes. Numerical results indicate the entropy-variable hybridized DG methods display optimal accuracy order, and are superior to their conservation-variable counterparts in terms of stability and robustness.

The remainder of the paper is organized as follows. In Section \ref{s:govEq}, we present the compressible Euler and Navier-Stokes equations, as well as a discussion on entropy pairs and symmetrization. The entropy-variable hybridized DG methods are introduced in Section \ref{s:evHybridizedDG}. Theoretical entropy stability results are discussed in Section \ref{s:stability}. Numerical examples for steady and unsteady flows are then presented in Section \ref{s:numExamples}. We conclude with some remarks in Section \ref{s:conclusions}.

\section{\label{s:govEq}Governing equations}

\subsection{The compressible Euler equations}

Let $t_f > 0$ be a final time and let $\Omega \subset \mathbb{R}^d, \, 1 \leq d \leq 3$ be an open, connected and bounded physical domain with Lipschitz boundary $\partial \Omega$. The unsteady compressible Euler equations in strong, conservation form are given by
\begin{subequations}
\label{euler}
\begin{alignat}{2}
\label{e:euler1}
\displaystyle \frac{\partial  \bm{u}}{\partial t} +  \nabla  \cdot  \bm{F}(\bm{u}) = 0 & , \qquad \mbox{in } \Omega \otimes (0, t_f) ,   \\
\label{e:euler2}
\bm{B}(\bm{u}) = 0 & , \qquad \mbox{on } \partial \Omega \otimes (0,t_f) , \\
\label{e:euler3}
\bm{u} - \bm{u}_0 = 0 & , \qquad \mbox{on } \Omega \otimes \{ 0 \} . 
\end{alignat}
\end{subequations}
Here, $\bm{u} = (\rho, \rho V_{j}, \rho E) \in X_u, \, \ j=1,...,d$ is the $m$-dimensional ($m = d + 2$) vector of conservation variables, $\bm{u}_0 \in X_u$ is an initial condition, $X_u \subset \mathbb{R}^m$ is the set of physical states (i.e.\ positive density and pressure), $\bm{B}(\bm{u})$ is a boundary operator, and $\bm{F}(\bm{u})$ are the inviscid fluxes of dimension $m \times d$,
\begin{equation}
\label{flux}
\bm{F}(\bm{u}) = \left( \begin{array}{c}
\rho V_j \\
\rho V_i V_j + \delta_{ij} p \\
 V_j (\rho E + p)
\end{array}
\right) , \qquad i , j = 1 , ... , d , 
\end{equation}
where $p$ denotes the thermodynamic pressure and $\delta_{ij}$ is the Kronecker delta. For a calorically perfect gas in thermodynamic equilibrium, $p = (\gamma - 1) \, \big( \rho E - \rho \, \norm{\bm{V}}^2 / 2 \big)$, where $\gamma = c_p / c_v > 1$ is the ratio of specific heats and in particular $\gamma \approx 1.4$ for air. $c_p$ and $c_v$ are the specific heats at constant pressure and volume, respectively. 
The steady-state compressible Euler equations are obtained by dropping Eq. \eqref{e:euler3} and the first term in Eq. \eqref{e:euler1}.

\subsection{The compressible Navier-Stokes equations}

The unsteady compressible Navier-Stokes equations in strong, conservation form are given by
\begin{subequations}
\label{e:NS}
\begin{alignat}{2}
\label{e:ns1}
\displaystyle \frac{\partial  \bm{u}}{\partial t} +  \nabla  \cdot  \bm{F}(\bm{u}) +  \nabla  \cdot  \bm{G}(\bm{u} , \nabla \bm{u}) = 0 & , \qquad \mbox{in } \Omega \otimes (0, t_f) ,   \\
\label{e:ns2}
\bm{B}(\bm{u} , \nabla \bm{u}) = 0 & , \qquad \mbox{on } \partial \Omega \otimes (0,t_f) , \\
\label{e:ns3}
\bm{u} - \bm{u}_0 = 0 & , \qquad \mbox{on } \Omega \otimes \{ 0 \} . 
\end{alignat}
\end{subequations}
where $\bm{G}(\bm{u} , \nabla \bm{u})$ are the viscous fluxes of dimension $m \times d$,
\begin{equation}
\label{e:viscFlux}
\bm{G}(\bm{u},\nabla \bm{u}) = - \left( \begin{array}{c}
0 \\
\tau_{ij}  \\
V_i \tau_{ij} - f_j
\end{array}
\right) , \qquad i , j = 1 , \dots , d . 
\end{equation}
For a Newtonian fluid with the Fourier's law of heat conduction, the viscous stress tensor and heat flux are given by
\begin{equation}
\label{e:closuresNS}
\tau_{ij} = \mu \, \bigg( \frac{\partial V_i}{\partial x_j}+\frac{\partial V_j}{\partial x_i} - \frac{2}{3}\frac{\partial V_k}{\partial x_k}\delta_{ij} \bigg) + \beta \, \frac{\partial V_k}{\partial x_k}\delta_{ij} ,  \qquad \qquad f_j = - \, \kappa \, \frac{\partial T}{\partial x_j} , 
\end{equation}
respectively, where summation over repeated indices is implied, and where $T$ denotes temperature, $\mu$ the dynamic (shear) viscosity, $\beta$ the bulk viscosity, $\kappa = c_p \, \mu / Pr$ the thermal conductivity, and $Pr$ the Prandtl number. In particular, $Pr \approx 0.71$ for air, and additionally $\beta = 0$ under the Stokes' hypothesis. 

Under these assumptions, the viscous fluxes are linear in $\nabla \bm{u}$ and can be written as
\begin{equation}
G_{ij}(\bm{u},\nabla \bm{u}) = - \big[ \mathcal{K}_{jk} (\bm{u}) \big]_{is} \ \frac{\partial u_s}{\partial x_k} , \quad i, \, s=1,\dots,m , \quad j, \, k=1,\dots,d , 
\end{equation}
where $\bm{\mathcal{K}}_{jk} (\bm{u}) \in \mathbb{R}^{m \times m}$ are symmetric positive semi-definite matrices \cite{Hughes:86}. 
The steady-state compressible Navier-Stokes equations are obtained by dropping Eq. \eqref{e:ns3} and the first term in Eq. \eqref{e:ns1}.


\subsection{Entropy pairs and symmetrization of the governing equations}

Nonlinear hyperbolic systems of conservation laws arising from physical systems, such as the compressible Euler equations, commonly admit a generalized entropy pair $(H(\bm{u}), \bm{\mathcal{F}}(\bm{u}))$ consisting of a convex generalized entropy function $H(\bm{u}) : \mathbb{R}^m \to \mathbb{R}$ and an entropy flux $\bm{\mathcal{F}}(\bm{u}) : \mathbb{R}^m \to \mathbb{R}^d$ that satisfies
\begin{equation}
\label{e:entropyFluxIdentity}
\frac{\partial \mathcal{F}_j}{\partial u_k} = \frac{\partial F_{ij}}{\partial u_k} \frac{\partial H}{\partial u_i}  , \quad i , \, k = 1 , ... , m  , \quad j=1 , ..., d . 
\end{equation}
Entropy pairs exist if and only if the hyperbolic system is symmetrized via the change of variables $\bm{v} (\bm{u}) = \partial H / \partial \bm{u}$ \cite{Godunov:61,Mock:80}, where $\bm{v}$ are referred to as the entropy variables. Such entropy pairs exist, among others, for the Euler equations and the magnetohydrodynamic (MHD) equations. An important property of entropy-symmetrized hyperbolic systems emerges when the inner product of the conservation law is taken with respect to the entropy variables, namely, the following identities hold for smooth solutions \cite{Barth:99}
\begin{equation}
\label{e:vTidentity}
\bm{v}^t \cdot \frac{\partial \, \bm{u}(\bm{v})}{\partial t} = \frac{\partial H}{\partial t} , \qquad \qquad \bm{v}^t \cdot \big( \nabla \cdot \bm{F}(\bm{v}) \big) = \nabla \cdot \bm{\mathcal{F}} (\bm{v}) . 
\end{equation}
These identities will be used for some of the proofs in \ref{s:appProof}. 
The following family of generalized entropy pairs for the Euler equations
\begin{equation}
H = -\rho \, g(s) , \qquad \qquad \bm{\mathcal{F}} = H \bm{V} , 
\end{equation}
was proposed by Harten \cite{Harten:83}, where $\bm{V}$ denotes the velocity vector, $s = \log(p / \rho^{\gamma}) - s_0$ is a non-dimensional thermodynamic entropy, $s_0$ is a baseline entropy level, and $g : \mathbb{R} \to \mathbb{R}$ is any smooth function such that $g' > 0$ and $g'' < g' / \gamma$. 
%
%
Among the entropy pairs in this family, only the subset of affine functions $g(s) = c_0 + c_1 s , \, c_1 > 0$ further symmetrizes the Navier-Stokes equations \cite{Hughes:86}. For this reason, we consider the following entropy pair in this paper
\begin{equation}
H = - \rho s , \qquad \qquad \bm{\mathcal{F}} = - \rho s \bm{V} , 
\end{equation}
which leads to the mapping
\begin{equation}
\bm{v} = \bm{v}(\bm{u}) = (\gamma-1) \left( \begin{array}{c}
\frac{-s}{\gamma-1} + \frac{\gamma+1}{\gamma-1} -\frac{\rho E}{p} \\
\frac{\rho \bm{V}}{p} \\
-\frac{\rho}{p}
\end{array}
\right) . 
\end{equation}
We shall denote the set of physical states in $\bm{v}$ space by $X_v$, i.e.\ $X_v = \bm{v}(X_u)$. The expressions for the inverse mapping $\bm{u} = \bm{u}(\bm{v})$ and the Jacobian matrices $\partial \bm{v} / \partial \bm{u}$ and $\partial \bm{u} / \partial \bm{v}$ are presented in \cite{Hughes:86}. 
We finally note that entropy-satisfying solutions of the Euler and Navier-Stokes equations satisfy
\begin{equation}
\label{e:entrIneq}
\frac{\partial H}{\partial t} + \nabla \cdot \bm{\mathcal{F}} \leq 0
\end{equation}
in the sense of distributions, where equality holds pointwise for smooth (classical) solutions of the Euler equations. Equation \eqref{e:entrIneq} follows from the entropy transport inequality (Second Law of Thermodynamics), and vice versa.

\section{\label{s:evHybridizedDG}The entropy-variable hybridized DG methods}

\subsection{Preliminaries and notation}

\subsubsection{\label{s:FEmesh}Finite element mesh}

We denote by $\mathcal{T}_h$ a collection of stationary, disjoint, non-singular, $p$-th degree curved elements $K$ that partition $\Omega$\footnote{Strictly speaking, the finite element mesh can only partition the problem domain if $\partial \Omega$ is piecewise $p$-th degree polynomial. For simplicity of exposition, and without loss of generality, we assume hereinafter that $\mathcal{T}_h$ actually partitions $\Omega$.}, and set $\partial \mathcal{T}_h := \{ \partial K : K \in \mathcal{T}_h \} $ to be the collection of the boundaries of the elements in $\mathcal{T}_h$. For an element $K$ of the collection $\mathcal{T}_h$, $F= \partial K \cap \partial \Omega$ is a boundary face if its $d-1$ Lebesgue measure is nonzero. For two elements $K^+$ and $K^-$ of $\mathcal{T}_h$, $F=\partial K^{+} \cap \partial K^{-}$ is the interior face between $K^+$ and $K^-$ if its $d-1$ Lebesgue measure is nonzero. We denote by $\mathcal{E}_h^I$ and $\mathcal{E}_h^B$ the set of interior and boundary faces, respectively, and we define $\mathcal{E}_h := \mathcal{E}_h^I \cup \mathcal{E}_h^B$ as the union of interior and boundary faces. Note that, by definition, $\partial \mathcal{T}_h$ and $\mathcal{E}_h$ are different. More precisely, an interior face is counted twice in $\partial \mathcal{T}_h$ but only once in $\mathcal{E}_h$, whereas a boundary face is counted once both in $\partial \mathcal{T}_h$ and $\mathcal{E}_h$.

\subsubsection{Finite element spaces}

Let $\mathcal{P}_{k}(D)$ denote the space of polynomials of degree at most $k$ on a domain $D \subset \mathbb{R}^n$, let $L^2(D)$ be the space of Lebesgue square-integrable functions on $D$, and $\mathcal{C}^0(D)$ the space of continuous functions on $D$. Also, let $\bm{\psi}^p_K$ denote the $p$-th degree parametric mapping from the reference element $K_{ref}$ to an element $K \in \mathcal{T}_h$ in the physical domain, and $\bm{\phi}^p_F$ be the $p$-th degree parametric mapping from the reference face $F_{ref}$ to a face $F \in \mathcal{E}_h$ in the physical domain. We then introduce the following discontinuous finite element spaces in $\mathcal{T}_h$,
\begin{subequations}
\begin{alignat}{2}
& \bm{\mathcal{Q}}_{h}^k &&= \big\{\bm{r} \in [L^2(\mathcal{T}_h)]^{m \times d} \ : \ (\bm{r} \circ \bm{\psi}_K^p )  |_K \in [\mathcal{P}_k(K_{ref})]^{m \times d} \ \ \forall K \in \mathcal{T}_h \big\} , \\
& \bm{\mathcal{V}}_{h}^k &&= \big\{\bm{w} \in [L^2(\mathcal{T}_h)]^m \ : \ (\bm{w} \circ \bm{\psi}_K^p )|_K \in [\mathcal{P}_k(K_{ref})]^m \ \ \forall K \in \mathcal{T}_h \big\} , 
\end{alignat}
\end{subequations}
and the following finite element spaces on the mesh skeleton $\mathcal{E}_{h}$,
\begin{subequations}
\begin{alignat}{2}
&\bm{\widehat{\mathcal{M}}}_{h}^k  &&= \big\{ \bm{\mu} \in [L^2(\mathcal{E}_h)]^m \ : \ (\bm{\mu} \circ \bm{\phi}^p_F) \in [\mathcal{P}^k(F_{ref})]^m \, \ \forall F \in \mathcal{E}_h \big\} , \\
&\bm{\widetilde{\mathcal{M}}}_{h}^k  &&= \big\{ \bm{\mu} \in [\mathcal{C}^0(\mathcal{E}_h)]^m \ : \ (\bm{\mu} \circ \bm{\phi}^p_F) \in [\mathcal{P}^k(F_{ref})]^m \, \ \forall F \in \mathcal{E}_h \big\} . 
\end{alignat}
\end{subequations}
Note that $\bm{\widehat{\mathcal{M}}}_{h}^k$ consists of functions which are discontinuous at the boundaries of the faces, whereas $\bm{\widetilde{\mathcal{M}}}_{h}^k$ consists of functions that are continuous at the boundaries of the faces. We also denote by $\bm{\mathcal{M}}_{h}^k$ a finite element space on $\mathcal{E}_{h}$ that satisfies $\bm{\widetilde{\mathcal{M}}}_{h}^k \subseteq \bm{\mathcal{M}}_{h}^k \subseteq \bm{\widehat{\mathcal{M}}}_{h}^k$. In particular, we define
$$\bm{\mathcal{M}}_{h}^k  = \big\{ \bm{\mu} \in [L^2(\mathcal{E}_h)]^m \ \\
\ : \ (\bm{\mu} \ \circ \ \bm{\phi}^p_F) \in [\mathcal{P}^k(F_{ref})]^m \, \ \forall F \in \mathcal{E}_h , \ \textnormal{and} \ \bm{\mu}|_{\mathcal{E}^{\rm E}_h} \in [\mathcal{C}^0(\mathcal{E}^{\rm E}_h)]^m    \big\} , $$
where $\mathcal{E}^{\rm E}_h$ is a subset of $\mathcal{E}_h$. Note that $\bm{\mathcal{M}}_{h}^k$ consists of functions which are continuous on $\mathcal{E}^{\rm E}_h$ and discontinuous on $\mathcal{E}^{\rm H}_h := \mathcal{E}_h \backslash \mathcal{E}^{\rm E}_h$. Furthermore, if $\mathcal{E}^{\rm E}_h = \emptyset$ then $\bm{\mathcal{M}}_{h}^k = \bm{\widehat{\mathcal{M}}}_{h}^k$, and if $\mathcal{E}^{\rm E}_h = \mathcal{E}_h$ then $\bm{\mathcal{M}}_{h}^k = \bm{\widetilde{\mathcal{M}}}_{h}^k$. Different choices of $\mathcal{E}^{\rm E}_h$ will lead to different discretization methods within the hybridized DG family.



It remains to define inner products associated with these finite element spaces. For functions $\bm{a}$ and $\bm{b}$ in $[L^2(D)]^m$, we denote $(\bm{a},\bm{b})_D = \int_{D} \bm{a} \cdot \bm{b}$  if $D$ is a domain in $\mathbb{R}^d$ and $\left\langle \bm{a},\bm{b}\right\rangle_D = \int_{D} \bm{a} \cdot \bm{b}$ if $D$ is a domain in $\mathbb{R}^{d-1}$. Likewise, for functions $\bm{A}$ and $\bm{B}$ in $[L^2(D)]^{m \times d}$, we denote $(\bm{A},\bm{B})_D = \int_{D} \mathrm{tr}(\bm{A}^t \bm{B})$  if $D$ is a domain in $\mathbb{R}^d$ and $\left\langle \bm{A},\bm{B}\right\rangle_D = \int_{D} \mathrm{tr}(\bm{A}^t \bm{B})$ if $D$ is a domain in $\mathbb{R}^{d-1}$, where $\mathrm{tr} \, ( \cdot) $ is the trace operator of a square matrix. We finally introduce the following inner products
\begin{equation}
\label{e:innerProducts}
(\bm{a},\bm{b})_{\mathcal{T}_h} = \sum_{K \in \mathcal{T}_h} (\bm{a},\bm{b})_K, \qquad (\bm{A},\bm{B})_{\mathcal{T}_h} = \sum_{K \in \mathcal{T}_h} (\bm{A},\bm{B})_K, \qquad \left\langle \bm{a},\bm{b}\right\rangle_{\partial \mathcal{T}_h} = \sum_{K \in \mathcal{T}_h} \left\langle \bm{a},\bm{b}\right\rangle_{\partial K} . 
\end{equation}

\subsection{\label{s:hDG_euler}The entropy-variable hybridized DG methods for the compressible Euler equations}

The entropy-variable hybridized DG discretization of the unsteady compressible Euler equations reads as follows: Find $\big( \bm{v}_h(t), \widehat{\bm{v}}_h(t) \big) \in \bm{\mathcal{V}}_h^k \otimes \bm{\mathcal{M}}_h^k$ such that
\begin{subequations}
\label{IEDG}
\begin{alignat}{2}
\label{e:hDG1}
\Big( \frac{\partial \, \bm{u}(\bm{v}_h)}{\partial t}, \bm{w} \Big)_{\mathcal{T}_h} - \Big( \bm{F}(\bm{v}_h) , \nabla \bm{w} \Big) _{\mathcal{T}_h}  +  \left\langle \widehat{\bm{f}}_h, \bm{w} \right\rangle_{\partial \mathcal{T}_h}  & = 0,  \\
\label{e:hDG2}
\left\langle \widehat{\bm{f}}_h, \bm{\mu} \right\rangle_{\partial \mathcal{T}_h \backslash \partial \Omega} + \left\langle \widehat{\bm{b}}_h(\widehat{\bm{v}}_h,\bm{v}_h;\bm{v}^{\partial \Omega}), \bm{\mu} \right\rangle_{\partial \Omega} & =  0 , \\
\intertext{for all $(\bm{w}, {\bm{\mu}}) \in \bm{\mathcal{V}}^k_h \otimes \bm{\mathcal{M}}_{h}^k$ and all $t \in (0,t_f)$, as well as}
\label{e:hDG3}
\big( \bm{v}_{h}|_{t=0} - \bm{v}(\bm{u}_0) , \bm{w} \big) _{\mathcal{T}_h} & =  0 , 
\end{alignat}
\end{subequations}
for all $\bm{w} \in \bm{\mathcal{V}}_h^k$. 
Here, $\bm{v}_h$ and $\widehat{\bm{v}}_h$ are the numerical approximations to $\bm{v}$ and $\bm{v}|_{\mathcal{E}_h}$, $\bm{F}(\bm{v}_h) = \bm{F}(\bm{u}(\bm{v}_h))$ denotes the inviscid flux in entropy variables, $\widehat{\bm{f}}_h$ is the inviscid numerical flux defined as
\begin{equation}
\label{numericalFlux}
\widehat{\bm{f}}_h = \widehat{\bm{f}}_h(\widehat{\bm{v}}_h , \bm{v}_h ) = \frac{1}{2} \big( \bm{F}(\widehat{\bm{v}}_h) + \bm{F}(\bm{v}_h) \big) \cdot \bm{n} + \frac{1}{2} \, \bm{\sigma}(\widehat{\bm{v}}_h , \bm{v}_h ; \bm{n}) \cdot ( \bm{v}_h - \widehat{\bm{v}}_h ) , 
\end{equation}
where $\bm{n}$ denotes the unit normal vector pointing outwards from the elements, and $\bm{\sigma} \in \mathbb{R}^{m \times m}$ is the so-called stabilization matrix. As discussed in Section \ref{s:stability}, the stabilization matrix plays an important role in the stability of the scheme. Also, $\widehat{\bm{b}}_h$ is the boundary condition term, whose precise definition depends on the type of boundary condition
, and $\bm{v}^{\partial \Omega}$ is a boundary state with support on $\partial \Omega$. The development of entropy-stable boundary conditions for hybridized DG methods is beyond the scope of this paper. 

Equation \eqref{e:hDG1} weakly imposes the Euler equations, Eq. \eqref{e:hDG2} weakly enforces the boundary conditions and the flux conservation across elements, and Eq. \eqref{e:hDG3} weakly imposes the initial condition. The entropy-variable hybridized DG discretization of the steady-state compressible Euler equations is obtained by dropping Eq. \eqref{e:hDG3} and the first term in Eq. \eqref{e:hDG1}. We note that, due to the discontinuous nature of $\bm{\mathcal{V}}^k_h$, Eq. \eqref{e:hDG1} can be used to locally (i.e.\ in an element-by-element fashion) eliminate $\bm{v}_h$ to obtain a weak formulation in terms of $\bm{\widehat{v}}_h$ only, and thus only the degrees of freedom of $\bm{\widehat{v}}_h$ are globally coupled \cite{Nguyen:15}. We finally note that, although conservation variables are not used as working variables in the discretization, the entropy-variable hybridized DG methods are $\bm{u}$-conservative; which follows by setting $\bm{w}$ and $\bm{\mu}$ to be constant functions in Equations \eqref{e:hDG1}$-$\eqref{e:hDG2}.


\subsection{\label{s:hDG_NS}The entropy-variable hybridized DG methods for the compressible Navier-Stokes equations}

The entropy-variable hybridized DG discretization of the unsteady compressible Navier-Stokes equations reads as follows: Find $\big( \bm{q}_h(t) , \bm{v}_h(t), \widehat{\bm{v}}_h(t) \big) \in \bm{\mathcal{Q}}_h^k \otimes \bm{\mathcal{V}}_h^k \otimes \bm{\mathcal{M}}_h^k$ such that
\begin{subequations}
\label{e:hDG_NS}
\begin{alignat}{2}
\label{e:hDG_NS0}
\big( \bm{q}_h, \bm{r} \big) _{\mathcal{T}_h} + \big( \bm{v}_h, \nabla \cdot \bm{r} \big)  _{\mathcal{T}_h} -  \big< \widehat{\bm{v}}_h, \bm{r} \cdot \bm{n} \big> _{\partial \mathcal{T}_h}  & =  0 , \\
\label{e:hDG_NS1}
\Big( \frac{\partial \, \bm{u}(\bm{v}_h)}{\partial t}, \bm{w} \Big)_{\mathcal{T}_h} - \Big( \bm{F}(\bm{v}_h) + \bm{G}(\bm{v}_h,\bm{q}_h) , \nabla \bm{w} \Big) _{\mathcal{T}_h}  +  \left\langle \widehat{\bm{f}}_h + \widehat{\bm{g}}_h, \bm{w} \right\rangle_{\partial \mathcal{T}_h}  & = 0,  \\
\label{e:hDG_NS2}
\left\langle \widehat{\bm{f}}_h + \widehat{\bm{g}}_h, \bm{\mu} \right\rangle_{\partial \mathcal{T}_h \backslash \partial \Omega} + \left\langle \widehat{\bm{b}}_h(\widehat{\bm{v}}_h,\bm{v}_h,\bm{q}_h;\bm{v}^{\partial \Omega}), \bm{\mu} \right\rangle_{\partial \Omega} & =  0 , \\
\intertext{for all $(\bm{r} , \bm{w} , {\bm{\mu}}) \in \bm{\mathcal{Q}}_h^k \otimes \bm{\mathcal{V}}^k_h \otimes \bm{\mathcal{M}}_{h}^k$ and all $t \in (0,t_f)$, as well as}
\label{e:hDG_NS3}
\big( \bm{v}_{h}|_{t=0} - \bm{v}(\bm{u}_0) , \bm{w} \big) _{\mathcal{T}_h} & =  0 , 
\end{alignat}
\end{subequations}
for all $\bm{w} \in \bm{\mathcal{V}}^k_h$. 
In addition to the nomenclature previously introduced, $\bm{q}_h$ is the numerical approximation to the gradient of the solution $\nabla \bm{v}$,
\begin{equation}
G_{ij}(\bm{v}_h,\bm{q}_h) = G_{ij}(\bm{u}(\bm{v}_h) , \nabla \bm{u}( \bm{v}_h , \bm{q}_h)) = - \big[ \mathcal{K}_{jk}(\bm{u}(\bm{v}_h)) \big]_{il} \  \frac{\partial u_l(\bm{v}_h)}{\partial v_s} \ q_{h,sk} , \quad i, \, l, \, s = 1 , \dots , m , \quad j , \, k = 1 , \dots , d , 
\end{equation}
is the viscous flux in entropy variables, and $\widehat{\bm{g}}_h$ is the viscous numerical flux. Inspired by the common choices in the context of conservation-variable hybridized DG methods \cite{Fernandez:17a,Nguyen:2009}, two options for $\widehat{\bm{g}}_h$ are
 \begin{subequations}
 \label{e:numericalFluxNS}
\begin{alignat}{2}
\label{e:numericalFluxNS_1}
\widehat{\bm{g}}_h & = \bm{G}(\bm{v}_h , \bm{q}_h) \cdot \bm{n} , \\
\label{e:numericalFluxNS_2}
\widehat{\bm{g}}_h & = \bm{G}(\widehat{\bm{v}}_h , \bm{q}_h) \cdot \bm{n} . 
\end{alignat}
\end{subequations}
Note again that the scheme is $\bm{u}$-conservative, and that Equations \eqref{e:hDG_NS1}$-$\eqref{e:hDG_NS2} can be used to locally eliminate both $\bm{q}_h$ and $\bm{v}_h$ to obtain a weak formulation in terms of $\bm{\widehat{v}}_h$ only. Hence, only the degrees of freedom of $\bm{\widehat{v}}_h$ are globally coupled. The entropy-variable hybridized DG discretization of the steady-state compressible Navier-Stokes equations is obtained by dropping Eq. \eqref{e:hDG_NS3} and the first term in Eq. \eqref{e:hDG_NS1}.

\subsection{Examples of schemes within the hybridized DG family}


Different choices of $\mathcal{E}^{\rm E}_h$ in the definition of the space $\bm{\mathcal{M}}_{h}^k$ lead to different schemes within the hybridized DG family. We present three interesting choices in this section. 
The first one is $\mathcal{E}^{\rm E}_h = \emptyset$ and yields $\bm{\mathcal{M}}_{h}^k=\bm{\widehat{\mathcal{M}}}_{h}^k$. This corresponds to the entropy-variable Hybridizable Discontinuous Galerkin (HDG) method. The second choice is $\mathcal{E}^{\rm E}_h = \mathcal{E}_h$ and implies $\bm{\mathcal{M}}_{h}^k=\bm{\widetilde{\mathcal{M}}}_{h}^k$. This corresponds to the entropy-variable Embedded Discontinuous Galerkin (EDG) method and makes the approximation space $\bm{\mathcal{M}}_{h}^k$ continuous over $\mathcal{E}_h$. Since $\widetilde{\bm{\mathcal{M}}}_{h}^k \subset \widehat{\bm{\mathcal{M}}}_{h}^k$, the EDG method has fewer globally coupled degrees of freedom that the HDG method. 
The third one is $\mathcal{E}^{\rm E}_h = \mathcal{E}^{I}_h$ and thus $\bm{\widetilde{\mathcal{M}}}_{h}^k \subset \bm{\mathcal{M}}_{h}^k \subset \bm{\widehat{\mathcal{M}}}_{h}^k$. The resulting approximation space consists of functions that are continuous everywhere but at the borders of the boundary faces. Therefore, the resulting method has an HDG flavor on the boundary faces and an EDG flavor on the interior faces, and is referred to as the entropy-variable Interior Embedded DG (IEDG) method. Figure \ref{figure1} illustrates the degrees of freedom for the HDG, IEDG and EDG methods in a four-element mesh. The three schemes differ from each other only in the degrees of freedom of the approximate trace $\widehat{\bm{v}}_h$.

\begin{figure}[htbp!]
\centering
\includegraphics[width=0.8\textwidth]{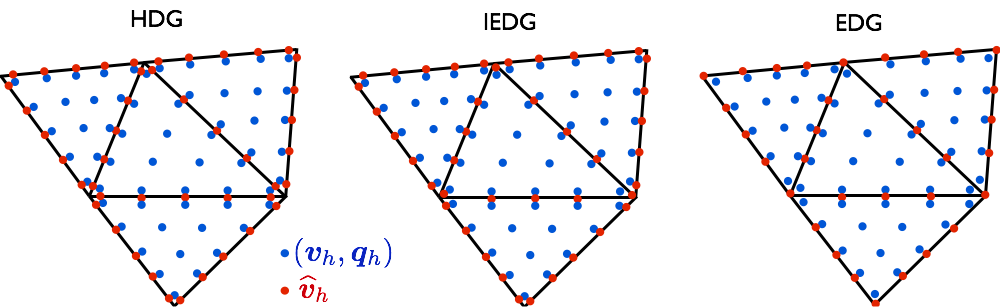}
\caption{\label{figure1} Illustration of the degrees of freedom for the HDG method, the IEDG method, and the EDG method. The blue nodes represent the degrees of freedom of the approximate solution $(\bm{v}_h, \bm{q}_h)$, while the red nodes represent the degrees of freedom of the approximate trace $\widehat{\bm{v}}_h$.}
\end{figure}

We note that the IEDG method enjoys advantages of both the HDG and the EDG methods. First, IEDG inherits the reduced number of global degrees of freedom and thus the computational efficiency of EDG, as discussed in Section \ref{s:comparisonSchemes}. In fact, the degrees of freedom of the approximate trace on $\mathcal{E}_h^B$ can be locally eliminated without affecting the sparsity pattern of the Jacobian matrix of the discretization, thus yielding an even smaller number of global degrees of freedom than in the EDG method. Second, the IEDG scheme enforces the boundary conditions as strongly as the HDG method, thus retaining the boundary condition robustness of HDG. These features make the IEDG method an excellent alternative to the HDG and EDG methods.

\subsection{\label{s:comparisonSchemes}Comparison with other DG methods}

We compare hybridized DG methods to ``standard'' DG methods, such as the Local DG (LDG) method \cite{Cockburn:LDG:98}, the Compact DG (CDG) method \cite{Peraire:CDG:08} or the BR2 method of Bassi and Rebay \cite{Bassi:BR:97}, in terms of the number of globally coupled degrees of freedom and the number of nonzero elements in the Jacobian matrix of the discretization. For polynomials of degree up to about five, the number of nonzero entries in the Jacobian matrix provides an indication of the computational cost of the scheme and, for many implicit time-integration implementations, also of the memory requirements. This is no longer a good cost metric for higher polynomial degrees since other operations, that are not accounted for by the number of nonzeros, start to dominate. The cases of triangular (2D) and tetrahedral (3D) meshes with polynomials of degree $k = 1, \dots , 5$ are considered. We assume that if $N_v$ is the number of mesh vertices, the number of triangles in 2D is about $2 N_v$ and the number of tetrahedra in 3D is about $6 N_v$. 
These assumptions are reasonable for large and well shaped meshes, and are consistent with those in \cite{Huerta:2013}. As will be discussed later, the accuracy order of the scheme is $k+1$.

For a hyperbolic system of conservation laws with $m$ components (e.g.\ $m = d+2$ for the Euler equations), the number of global degrees of freedom is given by
\begin{equation}
\label{e:dof}
\textnormal{DOF} = N_v \, m \, \alpha_{\textnormal{DOF}} , 
\end{equation}
whereas the number of nonzero entries in the Jacobian matrix is given by
\begin{equation}
\label{e:nnz}
\textnormal{NNZ} = N_v \, m^2 \, \alpha_{\textnormal{NNZ}} . 
\end{equation}
The coefficients $\alpha_{\textnormal{DOF}}$ and $\alpha_{\textnormal{NNZ}}$ are collected in Tables \ref{t:dofs} and \ref{t:nnz}, respectively. We note the coefficients $\alpha_{\textnormal{DOF}}$ and $\alpha_{\textnormal{NNZ}}$ for the IEDG method are bounded above by those of the EDG method, but cannot be determined exactly since they depend on the ratio of interior faces and boundary faces in the mesh. For second-order systems in space, such as the Navier-Stokes equations, the numbers in Tables \ref{t:dofs} and \ref{t:nnz} remain the same for hybridized DG methods but they increase for some instances of standard DG methods.

In all cases, EDG and IEDG provide a dramatic reduction in global degrees of freedom and number of nonzeros in the Jacobian matrix with respect to standard DG methods. For meshes in which the number of interior faces is much larger than the number of boundary faces, the coefficients for the IEDG method are only slightly smaller than those for the EDG method. For meshes in which the number of interior faces is about the number of boundary faces, the coefficients for the IEDG method are significantly smaller than those for the EDG method. 
Hence, for moderately high polynomial orders, hybridized DG methods are significantly superior to standard DG methods in terms of computational cost and memory requirements.

\begin{table}
\centering
\begin{tabular}{cccccccccccc}
\hline
 & \multicolumn{5}{c}{2D} & & \multicolumn{5}{c}{3D} \\
\hline
Degree & $k=1$ & $k=2$ & $k=3$ & $k=4$ & $k=5$ &  & $k=1$ & $k=2$ & $k=3$ & $k=4$ & $k=5$ \\
\hline
DG & $6$ & $12$ & $20$ & $30$ & $42$ &  & $24$ & $60$ & $120$ & $210$ & $336$ \\
HDG & $6$ & $9$ & $12$ & $15$ & $18$ &  & $36$ & $72$ & $120$ & $180$ & $252$ \\
EDG & $1$ & $4$ & $7$ & $10$ & $13$ &  & $1$ & $8$ & $27$ & $58$ & $101$ \\
IEDG & $< 1$ & $< 4$ & $< 7$ & $< 10$ & $< 13$ & & $< 1$ & $< 8$ & $< 27$ & $< 58$ & $< 101$\\
\hline
\end{tabular}
\caption{\label{t:dofs} Values of the coefficient $\alpha_{\textnormal{DOF}}$ for triangular and tetrahedral meshes as a function of the approximating polynomial order $k$ and the discretization algorithm. This coefficient can be used in Eq. \eqref{e:dof} to determine the number of global degrees of freedom in the problem.}
\end{table}

\begin{table}
\centering
\begin{tabular}{cccccccccccc}
\hline
 & \multicolumn{5}{c}{2D} & & \multicolumn{5}{c}{3D} \\
\hline
Degree & $k=1$ & $k=2$ & $k=3$ & $k=4$ & $k=5$ &  & $k=1$ & $k=2$ & $k=3$ & $k=4$ & $k=5$ \\
\hline
DG & $72$ & $288$ & $800$ & $1,800$ & $3,528$ &  & $480$ & $3,000$ & $12,000$ & $36,750$ & $94,080$ \\
HDG & $60$ & $135$ & $240$ & $375$ & $540$ &  & $756$ & $3,024$ & $8,400$ & $18,900$ & $37,044$ \\
EDG & $7$ & $46$ & $115$ & $214$ & $343$ &  & $15$ & $230$ & $1,311$ & $4,410$ & $11,183$ \\
IEDG & $< 7$ & $< 46$ & $< 115$ & $< 214$ & $< 343$ & & $< 15$ & $< 230$ & $< 1,311$ & $< 4,410$ & $< 11,183$ \\
\hline
\end{tabular}
\caption{\label{t:nnz} Values of the coefficient $\alpha_{\textnormal{NNZ}}$ for triangular and tetrahedral meshes as a function of the approximating polynomial order $k$ and the discretization algorithm. This coefficient can be used in Eq. \eqref{e:nnz} to determine the number of nonzero entries in the Jacobian matrix.}
\end{table}




\section{\label{s:stability}Stability properties}



\subsection{\label{s:notationStab}Preliminaries and notation}

Let $\bm{A}_n = \big( \partial \bm{F} / \partial \bm{u} \big) \cdot \bm{n}$ denote the Jacobian of the inviscid flux along the direction $\bm{n}$ with respect to the conservation variables, let $\tilde{\bm{A}}_0 = \partial \bm{u} / \partial \bm{v}$ be the (symmetric positive definite) Riemannian metric tensor of the change of variable $\bm{u} = \bm{u}(\bm{v})$, and let $\tilde{\bm{v}}(\theta ; \bm{v}_1 , \bm{v}_2) \in \mathcal{C}^{\infty}( [0,1] ; X_v)$ be such that $\tilde{\bm{v}}(0) = \bm{v}_1$, $\tilde{\bm{v}}(1) = \bm{v}_2$ and $\tilde{\bm{v}} \in X_v \ \forall \theta \in [0,1]$. Since $X_v$ is an open connected subset of $\mathbb{R}^m$, infinitely many such paths exist provided $\bm{v}_1 , \bm{v}_2 \in X_v$. The results hereinafter hold for any choice of path. Also, let us define $\tilde{\bm{A}}_n = \bm{A}_n \tilde{\bm{A}}_0$, and let $\big| \tilde{\bm{A}}_n \big|_{\tilde{\bm{A}}_0} = \tilde{\bm{A}}_0 \big| \tilde{\bm{A}}_0^{-1} \tilde{\bm{A}}_n \big| \equiv | \bm{A}_n | \tilde{\bm{A}}_0$ be the generalized absolute value operator with respect to the metric tensor $\tilde{\bm{A}}_0$. We finally introduce the following definitions:

\begin{definition} [Mean-value stabilization matrix]
\begin{equation}
\label{e:sigmaMV}
\bm{{\sigma}}_{MV} (\bm{v}_1 , \bm{v}_2 ; \bm{n}) := \int_0^1 (1 - \theta) \, \Big( \big| \tilde{\bm{A}}_n \big(\tilde{\bm{v}}(\theta ; \bm{v}_1 , \bm{v}_2) ; \bm{n} \big) \big|_{\tilde{\bm{A}}_0} + \big| \tilde{\bm{A}}_n \big(\tilde{\bm{v}}(\theta ; \bm{v}_2 , \bm{v}_1) ; \bm{n} \big) \big|_{\tilde{\bm{A}}_0} \Big) \, d \theta . 
\end{equation}
\end{definition}

\begin{definition}[Symmetric variable stabilization matrix]
\begin{equation}
\label{e:symStab}
\bm{\sigma}_{S}(\bm{v}_1 , \bm{v}_2 ; \bm{n}) := \big| \tilde{\bm{A}}_n (\bm{v}_*) \big|_{\tilde{\bm{A}}_0} , 
\end{equation}
where $\bm{v}_*$ is some state that depends on $\bm{v}_1$ and $\bm{v}_2$.
\end{definition}

\begin{definition}[Symmetric Lax-Friedrichs stabilization matrix]
\begin{equation}
\label{e:sLxF}
\bm{\sigma}_{SLF}(\bm{v}_1 , \bm{v}_2 ; \bm{n}) := \lambda_{max} (\bm{v}_*) \, \tilde{\bm{A}}_0 (\bm{v}_*) , 
\end{equation}
where $\lambda_{max}$ denotes the maximum-magnitude eigenvalue of $\bm{A}_n$, and $\bm{v}_*$ is some state that depends on $\bm{v}_1$ and $\bm{v}_2$.
\end{definition}

\begin{definition}[Entropy-stable stabilization matrix]\label{d:entrStableStabMatrix} A stabilization matrix satisfying
\begin{equation}
\label{e:entrStStabMatrix}
(\bm{v}_2 - \bm{v}_1)^t \cdot \bm{{\sigma}} (\bm{v}_1 , \bm{v}_2 ; \bm{n}) \cdot (\bm{v}_2 - \bm{v}_1) \geq (\bm{v}_2 - \bm{v}_1)^t \cdot \bm{\Sigma} (\bm{v}_1 , \bm{v}_2 ; \bm{n}) \cdot (\bm{v}_2 - \bm{v}_1) , 
\end{equation}
for all $\bm{v}_1 , \bm{v}_2 \in X_v$ and all $\norm{\bm{n}} = 1$, is said to be entropy stable, where
\begin{equation}
\begin{split}
\label{e:Lambda_n}
\bm{\Sigma} (\bm{v}_1 , \bm{v}_2 ; \bm{n}) := & \int_0^1 (1 - \theta) \, \Big( \tilde{\bm{A}}_n \big(\tilde{\bm{v}}(\theta ; \bm{v}_1 , \bm{v}_2) ; \bm{n} \big) - \tilde{\bm{A}}_n \big(\tilde{\bm{v}}(\theta ; \bm{v}_2 , \bm{v}_1) ; \bm{n} \big) \Big) \, d \theta \\
= & \int_0^1 (1 - 2 \theta) \ \tilde{\bm{A}}_n \big(\tilde{\bm{v}}(\theta ; \bm{v}_1 , \bm{v}_2) ; \bm{n} \big) \, d \theta = - \int_0^1 (1 - 2 \theta) \ \tilde{\bm{A}}_n \big(\tilde{\bm{v}}(\theta ; \bm{v}_2 , \bm{v}_1) ; \bm{n} \big) \, d \theta . 
\end{split}
\end{equation}
\end{definition}

{\bf Remark 1:} The last two equalities in Eq. \eqref{e:Lambda_n} follow from the change of variable $\theta' = 1 - \theta$ applied to the second and first terms in the integrand, respectively.


{\bf Remark 2:} Examples of entropy-stable stabilization matrices include the mean-value stabilization matrix \eqref{e:sigmaMV}, as well as the symmetric variable \eqref{e:symStab} and symmetric Lax-Friedrichs \eqref{e:sLxF} matrices with $\bm{v}_*$ such that
\begin{equation}
\label{e:cond_v*}
(\bm{v}_2 - \bm{v}_1)^t \cdot \big| \tilde{\bm{A}}_n (\bm{v}_*) \big|_{\tilde{\bm{A}}_0} \cdot (\bm{v}_2 - \bm{v}_1) \geq \sup_{\theta \in [0,1]} (\bm{v}_2 - \bm{v}_1)^t \cdot \big| \tilde{\bm{A}}_n \big( \tilde{\bm{v}} (\theta ; \bm{v}_1 , \bm{v}_2) \big) \big|_{\tilde{\bm{A}}_0} \cdot (\bm{v}_2 - \bm{v}_1) . 
\end{equation}


\begin{definition}[Projection viscous numerical flux]\label{d:projectionViscNumFlux} 
\begin{equation}
\label{e:numericalFluxNS_proj}
\widehat{\bm{g}}^{\Pi}_h (\bm{x}) := \Big( \Pi_{\bm{\mathcal{Q}}_h^k} \big[ \bm{G}(\bm{v}_h , \bm{q}_h) \big] \Big) (\bm{x}) \cdot \bm{n} (\bm{x}) , 
\end{equation}
where $\bm{x} \in F$, $F \in \partial \mathcal{T}_h$, and $\Pi_{\bm{\mathcal{Q}}_h^k}$ is the projection operator onto $\bm{\mathcal{Q}}_h^k$.
\end{definition}

{\bf Remark 3:} $\widehat{\bm{g}}^{\Pi}_h$ is not a local operator in the sense that $\widehat{\bm{g}}^{\Pi}_h (\bm{x})$ does not depend only on $(\bm{q}_h(\bm{x}),\bm{v}_h(\bm{x}),\bm{\widehat{v}}_h(\bm{x}))$ but on the solution over the entire element.

\begin{definition}[Entropy-stable viscous numerical flux]\label{d:entrStableViscNumFlux} A viscous numerical flux satisfying
\begin{equation}
\label{e:entrStableViscNumFlux}
\big( \bm{v}_h(\bm{x}) - \bm{\widehat{v}}_h(\bm{x}) \big)^t \cdot \widehat{\bm{g}}_h (\bm{x}) \geq \big( \bm{v}_h(\bm{x}) - \bm{\widehat{v}}_h(\bm{x}) \big)^t \cdot \Big[ \Big( \Pi_{\bm{\mathcal{Q}}_h^k} \big[ \bm{G}(\bm{v}_h,\bm{q}_h) \big] \Big) (\bm{x}) \cdot \bm{n} (\bm{x}) \Big] , 
\end{equation}
for all $(\bm{q}_h,\bm{v}_h,\bm{\widehat{v}}_h) \in \bm{\mathcal{Q}}_h^k \otimes \bm{\mathcal{V}}_h^k \otimes \bm{\mathcal{M}}_h^k$ and all $\bm{x} \in F \ \forall F \in \partial \mathcal{T}_h$, is said to be entropy stable.
\end{definition}

\subsection{\label{s:timeEvEuler}Time evolution of total entropy for the compressible Euler equations}

In this section, as well as in Sections \ref{s:entStEuler}, \ref{s:timeEvNS} and \ref{s:entStNS}, the numerical solution at time $t$ is assumed to be physical in the sense that the density and pressure are pointwise positive.

\begin{prop}\label{semiDiscGlobEntrIdentity} The time evolution of total generalized entropy in the entropy-variable hybridized DG discretization of the compressible Euler equations \eqref{IEDG} is given by
\begin{equation}
\label{entrEqhDG}
\begin{split}
\frac{d}{dt} \int_{\mathcal{T}_h} H (\bm{u}(\bm{v}_h)) & + \frac{1}{2} \int_{\partial \mathcal{T}_h} \big( \bm{v}_h - \widehat{\bm{v}}_h \big)^t \cdot \Big[ \bm{{\sigma}}(\widehat{\bm{v}}_h , \bm{v}_h ; \bm{n}) - \bm{\Sigma} (\widehat{\bm{v}}_h , \bm{v}_h ; \bm{n}) \Big] \cdot \big( \bm{v}_h - \widehat{\bm{v}}_h \big) \\
& + \mathcal{B}_{\partial \Omega}(\widehat{\bm{v}}_h , \bm{v}_h ; \bm{v}^{\partial \Omega}) = 0 , 
\end{split}
\end{equation}
where $\bm{\Sigma}(\widehat{\bm{v}}_h , \bm{v}_h ; \bm{n})$ is defined in Eq. \eqref{e:Lambda_n}, and
\begin{equation}
\label{e:Binv}
\mathcal{B}_{\partial \Omega}(\widehat{\bm{v}}_h , \bm{v}_h ; \bm{v}^{\partial \Omega}) = \int_{\partial \Omega} \bm{\mathcal{F}}_n (\widehat{\bm{v}}_h) - \int_{\partial \Omega} \widehat{\bm{v}}_h^t \cdot \bm{F}_n(\widehat{\bm{v}}_h) + \int_{\partial \Omega} \widehat{\bm{v}}_h^t \cdot \big( \widehat{\bm{f}}_h - \widehat{\bm{b}}_h (\widehat{\bm{v}}_h,\bm{v}_h ; \bm{v}^{\partial \Omega}) \big) 
\end{equation}
is a boundary term.
\end{prop}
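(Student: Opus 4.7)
The plan is to derive the identity by testing the hybridized DG scheme with the numerical solution itself. I take $\bm{w} = \bm{v}_h \in \bm{\mathcal{V}}_h^k$ in the volume equation \eqref{e:hDG1} and $\bm{\mu} = \widehat{\bm{v}}_h \in \bm{\mathcal{M}}_h^k$ in the interface equation \eqref{e:hDG2}. By the first identity in \eqref{e:vTidentity}, the time-derivative piece of \eqref{e:hDG1} collapses to $\frac{d}{dt}\int_{\mathcal{T}_h} H(\bm{u}(\bm{v}_h))$, while element-wise integration by parts of $-(\bm{F}(\bm{v}_h),\nabla \bm{v}_h)_{\mathcal{T}_h}$ combined with the second identity in \eqref{e:vTidentity} produces the boundary integral $\int_{\partial \mathcal{T}_h}(\bm{\mathcal{F}}(\bm{v}_h) - \bm{v}_h^t \bm{F}(\bm{v}_h)) \cdot \bm{n}$, i.e.\ the boundary jump of the entropy flux potential $\Psi(\bm{v}) := \bm{v}^t \bm{F}(\bm{v}) \cdot \bm{n} - \bm{\mathcal{F}}(\bm{v}) \cdot \bm{n}$.

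Second, testing \eqref{e:hDG2} with $\widehat{\bm{v}}_h$ yields $\langle \widehat{\bm{f}}_h, \widehat{\bm{v}}_h \rangle_{\partial \mathcal{T}_h \backslash \partial \Omega} = - \langle \widehat{\bm{b}}_h, \widehat{\bm{v}}_h \rangle_{\partial \Omega}$, which I would use to rewrite $\langle \widehat{\bm{f}}_h, \bm{v}_h \rangle_{\partial \mathcal{T}_h}$ as $\langle \widehat{\bm{f}}_h, \bm{v}_h - \widehat{\bm{v}}_h \rangle_{\partial \mathcal{T}_h} + \int_{\partial \Omega} \widehat{\bm{v}}_h^t (\widehat{\bm{f}}_h - \widehat{\bm{b}}_h)$; the last term already matches the third piece of $\mathcal{B}_{\partial \Omega}$ in \eqref{e:Binv}. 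Since $\widehat{\bm{v}}_h$ is single-valued on every face while the outward normal flips sign across any interior face, the integrals $\int_{\partial \mathcal{T}_h} \widehat{\bm{v}}_h^t \bm{F}(\widehat{\bm{v}}_h) \cdot \bm{n}$ and $\int_{\partial \mathcal{T}_h} \bm{\mathcal{F}}(\widehat{\bm{v}}_h) \cdot \bm{n}$ collapse to integrals over $\partial \Omega$ alone, providing the remaining two pieces of $\mathcal{B}_{\partial \Omega}$.

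The technical heart is then to extract the $\bm{\Sigma}$ quadratic form from the consistency part of $\widehat{\bm{f}}_h$. Substituting the definition \eqref{numericalFlux}, the stabilization part manifestly yields $\frac{1}{2}(\bm{v}_h - \widehat{\bm{v}}_h)^t \bm{\sigma}(\widehat{\bm{v}}_h,\bm{v}_h;\bm{n})(\bm{v}_h - \widehat{\bm{v}}_h)$, so the task reduces to establishing the pointwise algebraic identity
\begin{equation*}
(\bm{v}_h - \widehat{\bm{v}}_h)^t \, \tfrac{1}{2}\big(\bm{F}(\widehat{\bm{v}}_h) + \bm{F}(\bm{v}_h)\big) \cdot \bm{n} = \Psi(\bm{v}_h) - \Psi(\widehat{\bm{v}}_h) - \tfrac{1}{2}(\bm{v}_h - \widehat{\bm{v}}_h)^t \bm{\Sigma}(\widehat{\bm{v}}_h,\bm{v}_h;\bm{n})(\bm{v}_h - \widehat{\bm{v}}_h).
\end{equation*}
I would prove this by parametrizing $\bm{F}$ and $\Psi$ along the path $\tilde{\bm{v}}(\theta;\widehat{\bm{v}}_h,\bm{v}_h)$, observing that $\partial \Psi / \partial \bm{v} = (\bm{F}\cdot\bm{n})^t$ as a consequence of \eqref{e:entropyFluxIdentity}, and integrating by parts in $\theta$ to obtain $\frac{1}{2}[\bm{F}(\widehat{\bm{v}}_h)+\bm{F}(\bm{v}_h)]\cdot\bm{n} - \int_0^1 \bm{F}(\tilde{\bm{v}})\cdot\bm{n}\,d\theta = -\frac{1}{2}\int_0^1 (1-2\theta)\,\tilde{\bm{A}}_n \frac{d\tilde{\bm{v}}}{d\theta}\,d\theta$. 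After dotting with $\bm{v}_h - \widehat{\bm{v}}_h$, the first term on the right reproduces $\Psi(\bm{v}_h) - \Psi(\widehat{\bm{v}}_h)$, and the second reproduces the $\bm{\Sigma}$ quadratic form via the second representation in \eqref{e:Lambda_n}. Collecting everything then gives \eqref{entrEqhDG}.

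The main obstacle is the algebra of this last identity: tracking the signs produced by the integration by parts in $\theta$ and verifying that, under the implicit reverse-path consistency (Remark 1) that makes the two representations of $\bm{\Sigma}$ in \eqref{e:Lambda_n} equivalent, the quadratic form reconstructed from the symmetric average exactly agrees with the definition of $\bm{\Sigma}(\widehat{\bm{v}}_h,\bm{v}_h;\bm{n})$. Once this pointwise identity is in hand, the rest is careful bookkeeping between interior-face cancellations of single-valued $\widehat{\bm{v}}_h$ quantities and the boundary assembly of $\mathcal{B}_{\partial \Omega}$.
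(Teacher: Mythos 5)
Your proposal follows the same route as the paper's own proof: test \eqref{e:hDG1} with $\bm{w} = \bm{v}_h$ and \eqref{e:hDG2} with $\bm{\mu} = \widehat{\bm{v}}_h$, use the symmetrization identities \eqref{e:vTidentity} together with element-wise integration by parts and the divergence theorem, exploit the single-valuedness of $\widehat{\bm{v}}_h$ against the sign flip of $\bm{n}$ on interior faces to collapse all $\widehat{\bm{v}}_h$-terms onto $\partial \Omega$, and reduce the remainder to a pointwise jump identity on each face. The one substantive difference is at that last step: the paper disposes of the jump identity by invoking Lemma \ref{lemmaInterElemJump} and Corollary \ref{corolLemmaInterElemJump}, citing them as a generalization of Lemma 11 in Barth (1999), whereas you prove the identity from scratch using the flux potential $\Psi(\bm{v}) = \bm{v}^t \bm{F}_n(\bm{v}) - \bm{\mathcal{F}}_n(\bm{v})$, its gradient property $\partial \Psi / \partial \bm{v} = \bm{F}_n^t$ (a correct consequence of \eqref{e:entropyFluxIdentity}), and an integration by parts in $\theta$. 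Your target identity is algebraically equivalent to Corollary \ref{corolLemmaInterElemJump} (expand $\Psi$ and rearrange), so this substitution makes the argument self-contained; the surrounding bookkeeping (cancellation of $\Psi(\bm{v}_h)$ across the integration-by-parts term and the numerical-flux term, and the assembly of the three pieces of $\mathcal{B}_{\partial \Omega}$) checks out.

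One caveat you should make explicit. Both of your final identifications --- $(\bm{v}_h - \widehat{\bm{v}}_h)^t \int_0^1 \bm{F}_n(\tilde{\bm{v}}) \, d\theta = \Psi(\bm{v}_h) - \Psi(\widehat{\bm{v}}_h)$ and the reconstruction of the $\bm{\Sigma}$ quadratic form from $\int_0^1 (1-2\theta) \, \tilde{\bm{A}}_n \, (d\tilde{\bm{v}}/d\theta) \, d\theta$ --- require pulling $d\tilde{\bm{v}}/d\theta$ out of the $\theta$-integrals as the constant vector $\bm{v}_h - \widehat{\bm{v}}_h$, i.e.\ they require the straight-line path $\tilde{\bm{v}}(\theta; \widehat{\bm{v}}_h , \bm{v}_h) = \widehat{\bm{v}}_h + \theta \, (\bm{v}_h - \widehat{\bm{v}}_h)$. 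This is strictly stronger than the reverse-path consistency of Remark 1 that you flag, and for a general path the identity is simply false: in the scalar case with $\tilde{A}_n(v) = v^2$, $v_1 = 0$, $v_2 = 1$, the straight path gives $\int_0^1 (1-2\theta) \, \tilde{v}^2 \, d\theta = -1/6$ while the admissible path $\tilde{v} = \theta^2$ gives $-2/15$, so the quadratic form in \eqref{e:Lambda_n} is path-dependent. To be fair, this caveat applies equally to the paper, whose assertion in Section \ref{s:notationStab} that the results hold for any choice of path is too strong (Barth's lemma is likewise proved along the straight line); but since your write-up actually carries out the computation, the straight-line choice is where your proof either closes or does not, and it should be stated.
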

\begin{proof} The proof is given in \ref{s:appProof}.
\end{proof}

\begin{corol}\label{semiDiscGlEntIdMeanValue} The time evolution of total generalized entropy for the compressible Euler equations with the mean-value stabilization matrix \eqref{e:sigmaMV} is given by
\begin{equation}
\label{entrEqhDG_MV}
\begin{split}
\frac{d}{dt} & \int_{\mathcal{T}_h} H (\bm{u}(\bm{v}_h)) \\
+ & \int_{\partial \mathcal{T}_h} \int_0^1 (1 - \theta) \, \big( \bm{v}_h - \widehat{\bm{v}}_h \big)^t \cdot \Big( \bm{A}_{n,\tilde{A}_0}^{+} \big( \tilde{\bm{v}}( \theta; \bm{v}_h , \widehat{\bm{v}}_h ) ; \bm{n} \big) - \bm{A}_{n,\tilde{A}_0}^{-} \big( \tilde{\bm{v}}( \theta; \widehat{\bm{v}}_h , \bm{v}_h ) ; \bm{n} \big) \Big) \cdot \big( \bm{v}_h - \widehat{\bm{v}}_h \big) \, d \theta \\
+ & \ \mathcal{B}_{\partial \Omega}(\widehat{\bm{v}}_h , \bm{v}_h ; \bm{v}^{\partial \Omega}) = 0 , 
\end{split}
\end{equation}
where
\begin{subequations}
\label{e:AsCorol}
\begin{equation}
\bm{A}_{n,\tilde{A}_0}^{+} ( \bm{v} ; \bm{n}) = \bm{A}_n^+ ( \bm{v} ; \bm{n}) \, \tilde{\bm{A}}_0 ( \bm{v} ; \bm{n}) = \frac{1}{2} \Big( \bm{A}_n ( \bm{v} ; \bm{n}) + \big| \bm{A}_n ( \bm{v} ; \bm{n}) \big| \Big) \tilde{\bm{A}}_0 ( \bm{v} ; \bm{n}) = \frac{1}{2} \Big( \tilde{\bm{A}}_n ( \bm{v} ; \bm{n}) + \big| \tilde{\bm{A}}_n ( \bm{v} ; \bm{n}) \big|_{\tilde{\bm{A}}_0} \Big) , 
\end{equation}
\begin{equation}
\bm{A}_{n,\tilde{A}_0}^{-} ( \bm{v} ; \bm{n}) = \bm{A}_n^- ( \bm{v} ; \bm{n}) \, \tilde{\bm{A}}_0 ( \bm{v} ; \bm{n}) = \frac{1}{2} \Big( \bm{A}_n ( \bm{v} ; \bm{n}) - \big| \bm{A}_n ( \bm{v} ; \bm{n}) \big| \Big) \tilde{\bm{A}}_0 ( \bm{v} ; \bm{n}) = \frac{1}{2} \Big( \tilde{\bm{A}}_n ( \bm{v} ; \bm{n}) - \big| \tilde{\bm{A}}_n ( \bm{v} ; \bm{n}) \big|_{\tilde{\bm{A}}_0} \Big) . 
\end{equation}
\end{subequations}
\end{corol}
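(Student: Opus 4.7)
The plan is to reduce the corollary to a direct algebraic computation starting from the Proposition~1 identity \eqref{entrEqhDG}. Since that identity already supplies the overall structure (the $dH/dt$ term, the surface quadratic form on $\partial\mathcal{T}_h$, and the boundary term $\mathcal{B}_{\partial\Omega}$), what is left to show is that, upon specializing $\bm{\sigma} = \bm{\sigma}_{MV}$, the matrix kernel $\tfrac{1}{2}(\bm{\sigma}_{MV} - \bm{\Sigma})$ inside the $\partial\mathcal{T}_h$ integral reorganizes into the form written in \eqref{entrEqhDG_MV}, namely the integral over $\theta\in[0,1]$ of $(1-\theta)$ times $\bm{A}^{+}_{n,\tilde{A}_0}(\tilde{\bm{v}}(\theta;\bm{v}_h,\widehat{\bm{v}}_h)) - \bm{A}^{-}_{n,\tilde{A}_0}(\tilde{\bm{v}}(\theta;\widehat{\bm{v}}_h,\bm{v}_h))$.

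First I would substitute $\bm{v}_1 = \widehat{\bm{v}}_h$ and $\bm{v}_2 = \bm{v}_h$ into the definitions \eqref{e:sigmaMV} and \eqref{e:Lambda_n} and place the two integrals on a common integrand. This yields
\[
\tfrac{1}{2}\bigl(\bm{\sigma}_{MV}-\bm{\Sigma}\bigr) = \tfrac{1}{2}\int_0^1(1-\theta)\,\Bigl\{\bigl[|\tilde{\bm{A}}_n(\tilde{\bm{v}}(\theta;\widehat{\bm{v}}_h,\bm{v}_h))|_{\tilde{\bm{A}}_0} - \tilde{\bm{A}}_n(\tilde{\bm{v}}(\theta;\widehat{\bm{v}}_h,\bm{v}_h))\bigr] + \bigl[|\tilde{\bm{A}}_n(\tilde{\bm{v}}(\theta;\bm{v}_h,\widehat{\bm{v}}_h))|_{\tilde{\bm{A}}_0} + \tilde{\bm{A}}_n(\tilde{\bm{v}}(\theta;\bm{v}_h,\widehat{\bm{v}}_h))\bigr]\Bigr\}\,d\theta,
\]
grouping the two path-dependent contributions so that the ``path $\widehat{\bm{v}}_h \to \bm{v}_h$'' piece carries the minus sign on $\tilde{\bm{A}}_n$ and the ``path $\bm{v}_h \to \widehat{\bm{v}}_h$'' piece carries the plus sign.

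Next I would invoke the elementary algebraic identities recorded in \eqref{e:AsCorol}, which give $\tfrac{1}{2}(\tilde{\bm{A}}_n + |\tilde{\bm{A}}_n|_{\tilde{\bm{A}}_0}) = \bm{A}^{+}_{n,\tilde{A}_0}$ and $\tfrac{1}{2}(\tilde{\bm{A}}_n - |\tilde{\bm{A}}_n|_{\tilde{\bm{A}}_0}) = \bm{A}^{-}_{n,\tilde{A}_0}$, and therefore $\tfrac{1}{2}(|\tilde{\bm{A}}_n|_{\tilde{\bm{A}}_0} - \tilde{\bm{A}}_n) = -\bm{A}^{-}_{n,\tilde{A}_0}$ and $\tfrac{1}{2}(|\tilde{\bm{A}}_n|_{\tilde{\bm{A}}_0} + \tilde{\bm{A}}_n) = \bm{A}^{+}_{n,\tilde{A}_0}$. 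Applying the first identification to the bracket evaluated along the path $\widehat{\bm{v}}_h\to\bm{v}_h$ and the second to the bracket evaluated along the reversed path $\bm{v}_h\to\widehat{\bm{v}}_h$, the integrand collapses exactly to $(1-\theta)[\bm{A}^{+}_{n,\tilde{A}_0}(\tilde{\bm{v}}(\theta;\bm{v}_h,\widehat{\bm{v}}_h)) - \bm{A}^{-}_{n,\tilde{A}_0}(\tilde{\bm{v}}(\theta;\widehat{\bm{v}}_h,\bm{v}_h))]$. Re-inserting this into \eqref{entrEqhDG} delivers \eqref{entrEqhDG_MV}.

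There is no substantive obstacle; the corollary is essentially a bookkeeping exercise unpacking two definitions and applying the $\pm$ decomposition of $\tilde{\bm{A}}_n$. The only care required is in matching the arguments of the paths: the mean-value stabilization matrix is symmetric in $(\bm{v}_1,\bm{v}_2)$ and integrates over both path orientations $\widehat{\bm{v}}_h\leftrightarrow\bm{v}_h$, while $\bm{\Sigma}$ is antisymmetric in the orientation, and one must pair the two orientations with the corresponding $+$ or $-$ spectral projector. Once that pairing is made correctly, the simplification is immediate and the boundary term $\mathcal{B}_{\partial\Omega}$ is carried over unchanged from Proposition~1.
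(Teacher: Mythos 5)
Your proposal is correct: substituting $\bm{v}_1 = \widehat{\bm{v}}_h$, $\bm{v}_2 = \bm{v}_h$ into the definitions of $\bm{\sigma}_{MV}$ and $\bm{\Sigma}$, grouping the integrands path by path, and applying the identities $\tfrac{1}{2}\big(|\tilde{\bm{A}}_n|_{\tilde{\bm{A}}_0} + \tilde{\bm{A}}_n\big) = \bm{A}^{+}_{n,\tilde{A}_0}$ and $\tfrac{1}{2}\big(|\tilde{\bm{A}}_n|_{\tilde{\bm{A}}_0} - \tilde{\bm{A}}_n\big) = -\bm{A}^{-}_{n,\tilde{A}_0}$ pointwise in $\theta$ reproduces exactly the kernel appearing in \eqref{entrEqhDG_MV}, and the rest of \eqref{entrEqhDG} carries over unchanged. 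The paper reaches the same final expression by a different algebraic path: it first collapses both $\bm{\sigma}_{MV}$ and $\bm{\Sigma}$ onto the single path $\tilde{\bm{v}}(\theta;\bm{v}_h,\widehat{\bm{v}}_h)$ via the change of variable $\theta' = 1-\theta$ (using the alternative forms in \eqref{e:Lambda_n}), obtaining $\tfrac{1}{2}\int_0^1 \big(|\tilde{\bm{A}}_n|_{\tilde{\bm{A}}_0} + (1-2\theta)\,\tilde{\bm{A}}_n\big)\,d\theta$, then splits the weight as $(1-2\theta) = (1-\theta)-\theta$ and maps the $\theta$-weighted piece back onto the reversed path with a second change of variable before identifying $\bm{A}^{\pm}_{n,\tilde{A}_0}$. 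Your route is shorter and, notably, uses only the definitional (first) line of \eqref{e:Lambda_n}, so it never invokes the path-reversal identity $\tilde{\bm{v}}(1-\theta;\bm{v}_1,\bm{v}_2) = \tilde{\bm{v}}(\theta;\bm{v}_2,\bm{v}_1)$ on which the paper's changes of variable implicitly rely (cf.\ Remark 1); it therefore holds verbatim for an arbitrary path family. The paper's detour through the single-path form costs two substitutions but exhibits the intermediate expression $\tfrac{1}{2}\big(|\tilde{\bm{A}}_n|_{\tilde{\bm{A}}_0} + (1-2\theta)\tilde{\bm{A}}_n\big)$, which makes the upwind character of the mean-value stabilization visible along one orientation; your grouping makes the same structure visible per path, which is arguably the cleaner way to read the final formula.
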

\begin{proof} The desired result follows by combining Equations \eqref{e:sigmaMV}, \eqref{e:Lambda_n} and \eqref{entrEqhDG}, and noting that
\begin{equation}
\label{e:noting}
\begin{split}
\frac{1}{2} \int_0^1 & \Big( \bm{{\sigma}}_{MV} (\widehat{\bm{v}}_h , \bm{v}_h ; \bm{n}) - \bm{\Sigma} (\widehat{\bm{v}}_h , \bm{v}_h ; \bm{n}) \Big) \, d \theta \\
= \ & \frac{1}{2} \int_0^1 \Big( \big| \tilde{\bm{A}}_n \big( \tilde{\bm{v}} ( \theta ; \bm{v}_h , \widehat{\bm{v}}_h ) ; \bm{n} \big) \big|_{\tilde{\bm{A}}_0} + \big(1 - 2 \theta \big) \tilde{\bm{A}}_n \big( \tilde{\bm{v}} ( \theta ; \bm{v}_h , \widehat{\bm{v}}_h ) ; \bm{n} \big) \Big) \, d \theta \\
= \ & \frac{1}{2} \int_0^1 \big(1 - \theta) \, \Big( \tilde{\bm{A}}_n \big( \tilde{\bm{v}} ( \theta ; \bm{v}_h , \widehat{\bm{v}}_h ) ; \bm{n} \big) + \big| \tilde{\bm{A}}_n \big( \tilde{\bm{v}} ( \theta ; \bm{v}_h , \widehat{\bm{v}}_h ) ; \bm{n} \big) \big|_{\tilde{\bm{A}}_0} \Big) \, d \theta \\
- \ & \frac{1}{2} \int_0^1 \theta \, \Big( \tilde{\bm{A}}_n \big( \tilde{\bm{v}} ( \theta ; \bm{v}_h , \widehat{\bm{v}}_h ) ; \bm{n} \big) - \big| \tilde{\bm{A}}_n \big( \tilde{\bm{v}} ( \theta ; \bm{v}_h , \widehat{\bm{v}}_h ) ; \bm{n} \big) \big|_{\tilde{\bm{A}}_0} \Big) \, d \theta \\
= \ & \frac{1}{2} \int_0^1 \big(1 - \theta) \, \Big( \tilde{\bm{A}}_n \big( \tilde{\bm{v}} ( \theta ; \bm{v}_h , \widehat{\bm{v}}_h ) ; \bm{n} \big) + \big| \tilde{\bm{A}}_n \big( \tilde{\bm{v}} ( \theta ; \bm{v}_h , \widehat{\bm{v}}_h ) ; \bm{n} \big) \big|_{\tilde{\bm{A}}_0} \Big) \, d \theta \\
- \ & \frac{1}{2} \int_0^1 \big( 1 - \theta \big) \, \Big( \tilde{\bm{A}}_n \big( \tilde{\bm{v}} ( \theta ; \widehat{\bm{v}}_h , \bm{v}_h ) ; \bm{n} \big) - \big| \tilde{\bm{A}}_n \big( \tilde{\bm{v}} ( \theta ; \widehat{\bm{v}}_h , \bm{v}_h ) ; \bm{n} \big) \big|_{\tilde{\bm{A}}_0} \Big) \, d \theta \\
= \ & \int_0^1 (1 - \theta) \, \Big( \bm{A}_{n,\tilde{A}_0}^{+} \big( \tilde{\bm{v}}( \theta; \bm{v}_h , \widehat{\bm{v}}_h ) ; \bm{n} \big) - \bm{A}_{n,\tilde{A}_0}^{-} \big( \tilde{\bm{v}}( \theta; \widehat{\bm{v}}_h , \bm{v}_h ) ; \bm{n} \big) \Big) \, d \theta . 
\end{split}
\end{equation}
\end{proof}

\subsection{\label{s:entStEuler}Entropy stability for the compressible Euler equations}

\begin{theorem}[Semi-discrete entropy stability for the compressible Euler equations]
\label{mainTh} The entropy-variable hybridized DG discretization of the compressible Euler equations \eqref{IEDG} on a periodic domain with stabilization matrix as in \eqref{e:entrStStabMatrix} is entropy stable, that is, the total generalized entropy is non-increasing in time,
\begin{equation}
\label{e:entropyStability}
\frac{d}{dt} \int_{\mathcal{T}_h} H ( \bm{u}(\bm{v}_h(t)) ) \leq 0 , 
\end{equation}
and the following entropy bounds are satisfied
\begin{equation}
\label{e:entropyBound}
\int_{\mathcal{T}_h} H ( \bm{u}^*(\bm{v}_{h,0}) ) \leq \int_{\mathcal{T}_h} H ( \bm{u}(\bm{v}_h(t)) ) \leq \int_{\mathcal{T}_h} H ( \bm{u}(\bm{v}_{h,0}) ) , 
\end{equation}
where $\bm{u}^*$ is called the minimum total entropy state and is defined as
\begin{equation}
\label{e:uStar}
\bm{u}^*(\bm{v}_{h}) := \frac{1}{\mu_L (\Omega)} \int_{\mathcal{T}_h} \bm{u}(\bm{v}_{h}) , 
\end{equation}
and $\mu_L ( \Omega )$ denotes the Lebesgue measure of $\Omega$.
\end{theorem}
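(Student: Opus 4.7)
The plan is to combine Proposition 1 with the entropy-stable hypothesis on the stabilization matrix for the upper bound, and to use $\bm{u}$-conservation together with Jensen's inequality for the lower bound. Proposition 1 does all the structural lifting; Theorem 1 is essentially an assembly on top of it.

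For the upper bound, I observe that on a periodic domain all faces of $\mathcal{T}_h$ are interior and the physical boundary contribution $\mathcal{B}_{\partial \Omega}$ in \eqref{entrEqhDG} vanishes, leaving
$$\frac{d}{dt} \int_{\mathcal{T}_h} H(\bm{u}(\bm{v}_h)) = -\frac{1}{2} \int_{\partial \mathcal{T}_h} (\bm{v}_h - \widehat{\bm{v}}_h)^t \cdot [\bm{\sigma} - \bm{\Sigma}] \cdot (\bm{v}_h - \widehat{\bm{v}}_h).$$
By Definition 4, the integrand on the right is pointwise nonnegative whenever $\bm{\sigma}$ is entropy stable, so the derivative is nonpositive; this is exactly \eqref{e:entropyStability}, and integrating in time gives the upper bound in \eqref{e:entropyBound}.

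For the lower bound, I first establish time-invariance of the spatial mean of $\bm{u}(\bm{v}_h)$. Choosing $\bm{w}$ and $\bm{\mu}$ to be constant functions in \eqref{e:hDG1}--\eqref{e:hDG2} -- legitimate because constants lie in both $\bm{\mathcal{V}}_h^k$ and $\bm{\mathcal{M}}_h^k$ for any $k$ and any choice of $\mathcal{E}^{\rm E}_h$ -- eliminates the volume flux terms, and periodicity discards the boundary-condition contribution, yielding $\frac{d}{dt} \int_{\mathcal{T}_h} \bm{u}(\bm{v}_h) = 0$. Hence $\bm{u}^*(\bm{v}_h(t)) = \bm{u}^*(\bm{v}_{h,0})$. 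Applying Jensen's inequality to the convex entropy $H$ gives
$$H(\bm{u}^*(\bm{v}_h)) = H\!\left( \frac{1}{\mu_L(\Omega)} \int_{\mathcal{T}_h} \bm{u}(\bm{v}_h) \right) \leq \frac{1}{\mu_L(\Omega)} \int_{\mathcal{T}_h} H(\bm{u}(\bm{v}_h)),$$
and since $\bm{u}^*(\bm{v}_h)$ is spatially constant, multiplying through by $\mu_L(\Omega)$ and combining with the invariance $\bm{u}^*(\bm{v}_h(t)) = \bm{u}^*(\bm{v}_{h,0})$ delivers the lower bound in \eqref{e:entropyBound}.

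The main obstacle in the overall argument is really Proposition 1 itself, whose proof is deferred to the appendix. Once that identity is in hand, the only things to check carefully are that the entropy-stable hypothesis \eqref{e:entrStStabMatrix} yields pointwise nonnegativity on every face, and that constants are admissible test functions regardless of the choice of $\mathcal{E}^{\rm E}_h$; both are routine. A minor sanity check is that the assumption at the start of Section \ref{s:entStEuler} guarantees $\bm{v}_h, \widehat{\bm{v}}_h \in X_v$ pointwise, so Definition 4 and the convexity of $H$ may be invoked without issue.
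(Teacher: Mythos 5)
Your proposal is correct and takes essentially the same route as the paper: \eqref{e:entropyStability} and the upper bound in \eqref{e:entropyBound} follow from Proposition \ref{semiDiscGlobEntrIdentity} combined with Definition \ref{d:entrStableStabMatrix} (the boundary term being absent by periodicity), and the lower bound follows from $\bm{u}$-conservation via constant test functions plus convexity of $H$ and the mean-value definition of $\bm{u}^*$. The only difference is cosmetic: where you invoke Jensen's inequality, the paper derives the same inequality inline by Taylor-expanding $H$ about $\bm{u}^*$ with integral remainder---the linear term integrates to zero by definition of $\bm{u}^*$ and the remainder is nonnegative by convexity of $H$---so the two devices are interchangeable here.
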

\begin{proof} The proof is given in \ref{s:appProof}.
\end{proof}

\begin{corol} The entropy-variable hybridized DG discretization of the compressible Euler equations on a periodic domain with either (i) the mean-value stabilization matrix \eqref{e:sigmaMV}, (ii) the symmetric variable stabilization matrix \eqref{e:symStab} with $\bm{v}_*$ as in Eq. \eqref{e:cond_v*}, or (iii) the symmetric Lax-Friedrichs stabilization matrix \eqref{e:sLxF} with $\bm{v}_*$ as in Eq. \eqref{e:cond_v*}, is entropy stable.
\end{corol}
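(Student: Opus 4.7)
My plan is to reduce the corollary directly to Theorem \ref{mainTh}: since that theorem guarantees entropy stability for any stabilization matrix satisfying \eqref{e:entrStStabMatrix}, it suffices to verify this inequality for each of the three choices. The engine underlying all three verifications is the symmetry of $\tilde{\bm{A}}_n$ with respect to the metric $\tilde{\bm{A}}_0$: writing $\tilde{\bm{A}}_n = \tilde{\bm{A}}_0^{1/2} B\, \tilde{\bm{A}}_0^{1/2}$ with $B$ real symmetric shows at once that $|\tilde{\bm{A}}_n|_{\tilde{\bm{A}}_0} = \tilde{\bm{A}}_0^{1/2}|B|\tilde{\bm{A}}_0^{1/2}$, that $|\tilde{\bm{A}}_n|_{\tilde{\bm{A}}_0} \pm \tilde{\bm{A}}_n$ is symmetric positive semi-definite, and that $\lambda_{max}\,\tilde{\bm{A}}_0 - |\tilde{\bm{A}}_n|_{\tilde{\bm{A}}_0}$ is PSD as well, since the eigenvalues of $B$ are precisely those of $\bm{A}_n$.

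For the mean-value matrix, I would subtract the first representation of $\bm{\Sigma}$ in \eqref{e:Lambda_n} from $\bm{\sigma}_{MV}$ under the integral sign, pairing the terms along matching path orientations. The integrand becomes $(1-\theta)$ times the sum of $|\tilde{\bm{A}}_n|_{\tilde{\bm{A}}_0} - \tilde{\bm{A}}_n$ evaluated along $\tilde{\bm{v}}(\theta;\bm{v}_1,\bm{v}_2)$ and $|\tilde{\bm{A}}_n|_{\tilde{\bm{A}}_0} + \tilde{\bm{A}}_n$ evaluated along $\tilde{\bm{v}}(\theta;\bm{v}_2,\bm{v}_1)$. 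Both pieces are PSD pointwise, and $1-\theta \geq 0$ on $[0,1]$, so $\bm{\sigma}_{MV} - \bm{\Sigma}$ is PSD as a quadratic form, delivering \eqref{e:entrStStabMatrix}.

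For the symmetric variable and symmetric Lax-Friedrichs matrices I would instead use the second representation $\bm{\Sigma} = \int_0^1 (1-2\theta)\,\tilde{\bm{A}}_n(\tilde{\bm{v}}(\theta;\bm{v}_1,\bm{v}_2);\bm{n})\,d\theta$ from \eqref{e:Lambda_n}. Contracting with $\bm{w} := \bm{v}_2 - \bm{v}_1$, the elementary bound $|\bm{w}^t \tilde{\bm{A}}_n \bm{w}| \leq \bm{w}^t |\tilde{\bm{A}}_n|_{\tilde{\bm{A}}_0} \bm{w}$ (a restatement of the PSD fact from the first paragraph), combined with $\int_0^1 |1-2\theta|\,d\theta = 1/2$, yields
\begin{equation*}
\bm{w}^t \bm{\Sigma}(\bm{v}_1,\bm{v}_2;\bm{n})\, \bm{w} \; \leq \; \tfrac{1}{2}\sup_{\theta\in[0,1]} \bm{w}^t \bigl|\tilde{\bm{A}}_n(\tilde{\bm{v}}(\theta;\bm{v}_1,\bm{v}_2);\bm{n})\bigr|_{\tilde{\bm{A}}_0} \bm{w}.
\end{equation*}
For case (ii), the hypothesis \eqref{e:cond_v*} dominates this supremum by $\bm{w}^t |\tilde{\bm{A}}_n(\bm{v}_*)|_{\tilde{\bm{A}}_0} \bm{w} = \bm{w}^t \bm{\sigma}_S \bm{w}$, giving \eqref{e:entrStStabMatrix}. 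For case (iii), the additional pointwise inequality $\lambda_{max}(\bm{v})\,\tilde{\bm{A}}_0(\bm{v}) \succeq |\tilde{\bm{A}}_n(\bm{v})|_{\tilde{\bm{A}}_0}$ established in the first paragraph shows that $\bm{\sigma}_{SLF}$ dominates the symmetric variable matrix at the same $\bm{v}_*$, so \eqref{e:entrStStabMatrix} follows a fortiori.

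The main obstacle is not a single deep step but the bookkeeping: one must handle the matrix absolute value with respect to a non-trivial metric $\tilde{\bm{A}}_0$, keep track of the two path orientations appearing in \eqref{e:sigmaMV} and \eqref{e:Lambda_n}, and be careful that \eqref{e:entrStStabMatrix} is a quadratic-form rather than matrix-order statement, for which the pointwise PSD facts in the first paragraph are exactly what is required.
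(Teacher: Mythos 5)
Your proposal is correct, and its overall structure is exactly the one the paper intends: the corollary is stated with no proof of its own, and is meant to follow by combining Theorem \ref{mainTh} with Remark 2, which asserts that the three matrices satisfy Definition \ref{d:entrStableStabMatrix}. Where you go beyond the paper is that Remark 2 is left unproven there, whereas you actually verify Eq. \eqref{e:entrStStabMatrix} for each case. Your verifications are sound: the congruence $\tilde{\bm{A}}_n = \tilde{\bm{A}}_0^{1/2} B\, \tilde{\bm{A}}_0^{1/2}$ (valid because entropy symmetrization makes $\tilde{\bm{A}}_n = \partial \bm{F}_n / \partial \bm{v}$ symmetric) correctly gives $\big|\tilde{\bm{A}}_n\big|_{\tilde{\bm{A}}_0} = \tilde{\bm{A}}_0^{1/2}|B|\,\tilde{\bm{A}}_0^{1/2}$, hence positive semi-definiteness of $\big|\tilde{\bm{A}}_n\big|_{\tilde{\bm{A}}_0} \pm \tilde{\bm{A}}_n$ and of $\lambda_{max}\tilde{\bm{A}}_0 - \big|\tilde{\bm{A}}_n\big|_{\tilde{\bm{A}}_0}$; the term-by-term pairing of $\bm{\sigma}_{MV}$ against the first representation of $\bm{\Sigma}$ in \eqref{e:Lambda_n} handles case (i); and the bound $\bm{w}^t \bm{\Sigma}\, \bm{w} \leq \tfrac{1}{2}\sup_{\theta} \bm{w}^t \big|\tilde{\bm{A}}_n(\tilde{\bm{v}}(\theta;\bm{v}_1,\bm{v}_2))\big|_{\tilde{\bm{A}}_0} \bm{w}$ obtained from the second representation, combined with hypothesis \eqref{e:cond_v*}, handles cases (ii) and (iii) (with (iii) reduced to (ii) by the pointwise domination $\lambda_{max}\tilde{\bm{A}}_0 \succeq \big|\tilde{\bm{A}}_n\big|_{\tilde{\bm{A}}_0}$). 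In short: same reduction to Theorem \ref{mainTh} as the paper, plus a complete proof of the claims the paper only states as a remark; nothing in your argument fails.
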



From a mathematical perspective, Theorem \ref{mainTh} implies the scheme is unconditionally entropy stable in the sense that entropy stability holds for any polynomial order $k \geq 0$ and any non-singular finite element mesh $\mathcal{T}_h$. 
From a physical perspective, Theorem \ref{mainTh} implies the numerical solution satisfies the integral version of the Second Law of Thermodynamics in the problem domain $\Omega$. 
Also, Theorem \ref{mainTh} provides sufficient, but not necessary, conditions for entropy stability. 
Finally, we note that periodic boundary conditions are an adequate choice of boundary conditions to characterize the entropy stability of the scheme, and we recall that the development of entropy-stable boundary conditions is beyond the scope of this paper.

\subsection{\label{s:timeEvNS}Time evolution of total entropy for the compressible Navier-Stokes equations}

\begin{prop}\label{semiDiscGlobEntrIdentityNS} The time evolution of total generalized entropy in the entropy-variable hybridized DG discretization of the compressible Navier-Stokes equations \eqref{e:hDG_NS} on a non-curved (i.e.\ $p=1$) mesh is given by
\begin{equation}
\label{entrEqhDG_NS}
\begin{split}
\frac{d}{dt} \int_{\mathcal{T}_h} H (\bm{u}(\bm{v}_h)) & + \int_{\mathcal{T}_h} \bm{q}_h^t \cdot \widetilde{\bm{\mathcal{K}}} (\bm{v}_h) \cdot \bm{q}_h \\
& + \frac{1}{2} \int_{\partial \mathcal{T}_h} \big( \bm{v}_h - \widehat{\bm{v}}_h \big)^t \cdot \Big[ \bm{{\sigma}}(\widehat{\bm{v}}_h , \bm{v}_h ; \bm{n}) - \bm{\Sigma} (\widehat{\bm{v}}_h , \bm{v}_h ; \bm{n}) \Big] \cdot \big( \bm{v}_h - \widehat{\bm{v}}_h \big) \\
& + \int_{\partial \mathcal{T}_h} \big( \bm{v}_h - \widehat{\bm{v}}_h \big)^t \cdot \Big[ \widehat{\bm{g}}_h - \Big( \Pi_{\bm{\mathcal{Q}}_h^k} \big[ \bm{G}(\bm{v}_h , \bm{q}_h) \big] \cdot \bm{n} \Big) \Big] \\
& + \mathcal{B}_{\partial \Omega}(\widehat{\bm{v}}_h , \bm{v}_h , \bm{q}_h ; \bm{v}^{\partial \Omega}) = 0 , 
\end{split}
\end{equation}
where 
$\bm{\Sigma}(\widehat{\bm{v}}_h , \bm{v}_h ; \bm{n})$ is defined in Eq. \eqref{e:Lambda_n}, $\widetilde{\bm{\mathcal{K}}}_{ij} (\bm{v}_h) = \bm{\mathcal{K}}_{ij} (\bm{u}(\bm{v}_h)) \, \tilde{\bm{A}}_0 (\bm{v}_h) , \, i,j=1,\dots,d$ are symmetric positive semi-definite, and
\begin{equation}
\label{e:Binv_NS}
\mathcal{B}_{\partial \Omega}(\widehat{\bm{v}}_h , \bm{v}_h , \bm{q}_h ; \bm{v}^{\partial \Omega}) = \int_{\partial \Omega} \bm{\mathcal{F}}_n (\widehat{\bm{v}}_h) - \int_{\partial \Omega} \widehat{\bm{v}}_h^t \cdot \bm{F}_n(\widehat{\bm{v}}_h) + \int_{\partial \Omega} \widehat{\bm{v}}_h^t \cdot \big( \widehat{\bm{f}}_h + \widehat{\bm{g}}_h - \widehat{\bm{b}}_h (\widehat{\bm{v}}_h,\bm{v}_h ; \bm{v}^{\partial \Omega}) \big) 
\end{equation}
is a boundary term.
\end{prop}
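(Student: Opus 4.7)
The plan is to mirror the proof of Proposition \ref{semiDiscGlobEntrIdentity} for the Euler case and then graft on the extra manipulations needed to accommodate the gradient equation \eqref{e:hDG_NS0} and the viscous flux contributions. I would test the three discrete equations with carefully chosen functions: \eqref{e:hDG_NS1} with $\bm{w} = \bm{v}_h$, \eqref{e:hDG_NS2} with $\bm{\mu} = \widehat{\bm{v}}_h$, and \eqref{e:hDG_NS0} with $\bm{r} = \Pi_{\bm{\mathcal{Q}}_h^k}\bigl[\bm{G}(\bm{v}_h,\bm{q}_h)\bigr]$, and then add the three resulting identities.

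\textbf{Inviscid skeleton.} Testing \eqref{e:hDG_NS1} with $\bm{v}_h$, the time-derivative term becomes $\tfrac{d}{dt}\int_{\mathcal{T}_h} H(\bm{u}(\bm{v}_h))$ by \eqref{e:vTidentity}. Element-wise integration by parts on $-(\bm{F}(\bm{v}_h),\nabla \bm{v}_h)_{\mathcal{T}_h}$ combined with $\bm{v}^t \nabla \cdot \bm{F} = \nabla \cdot \bm{\mathcal{F}}$ produces boundary contributions involving $\bm{\mathcal{F}}_n(\bm{v}_h)$ and $\bm{v}_h^{t} \bm{F}_n(\bm{v}_h)$ on $\partial \mathcal{T}_h$. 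These, together with the $\widehat{\bm{f}}_h$ contributions, are handled exactly as in the Euler proof: split $\langle \widehat{\bm{f}}_h,\bm{v}_h\rangle_{\partial \mathcal{T}_h}$ into $\langle \widehat{\bm{f}}_h,\bm{v}_h-\widehat{\bm{v}}_h\rangle_{\partial \mathcal{T}_h} + \langle \widehat{\bm{f}}_h,\widehat{\bm{v}}_h\rangle_{\partial \mathcal{T}_h}$, substitute the definition \eqref{numericalFlux}, and use the path-integral identity for $\bm{\mathcal{F}}$ along $\tilde{\bm{v}}(\theta;\widehat{\bm{v}}_h,\bm{v}_h)$ to produce, via \eqref{e:Lambda_n}, the integrand $\tfrac{1}{2}(\bm{v}_h-\widehat{\bm{v}}_h)^t[\bm{\sigma}-\bm{\Sigma}](\bm{v}_h-\widehat{\bm{v}}_h)$. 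Testing \eqref{e:hDG_NS2} with $\widehat{\bm{v}}_h$ converts $\langle \widehat{\bm{f}}_h+\widehat{\bm{g}}_h,\widehat{\bm{v}}_h\rangle_{\partial \mathcal{T}_h\setminus\partial\Omega}$ into $-\langle \widehat{\bm{b}}_h,\widehat{\bm{v}}_h\rangle_{\partial\Omega}$, which collects all $\partial\Omega$-supported contributions into $\mathcal{B}_{\partial\Omega}$ as in \eqref{e:Binv_NS}.

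\textbf{Viscous dissipation.} The key new step is testing \eqref{e:hDG_NS0} with $\bm{r}=\Pi_{\bm{\mathcal{Q}}_h^k}\bigl[\bm{G}(\bm{v}_h,\bm{q}_h)\bigr]$. Because $\bm{q}_h\in\bm{\mathcal{Q}}_h^k$, the $L^2$-projection drops in the first term and gives $(\bm{q}_h,\bm{G}(\bm{v}_h,\bm{q}_h))_{\mathcal{T}_h}$. Writing $\bm{G}_{ij}=-[\mathcal{K}_{jk}]_{il}(\partial u_l/\partial v_s)\,q_{h,sk}$ and using $[\widetilde{\bm{\mathcal{K}}}_{jk}]_{is}=[\bm{\mathcal{K}}_{jk}]_{il}(\partial u_l/\partial v_s)$ immediately yields $(\bm{q}_h,\bm{G})_{\mathcal{T}_h}=-\int_{\mathcal{T}_h}\bm{q}_h^t\widetilde{\bm{\mathcal{K}}}\bm{q}_h$, with $\widetilde{\bm{\mathcal{K}}}_{jk}$ symmetric positive semi-definite since it is the product of the symmetric positive semi-definite $\bm{\mathcal{K}}_{jk}$ and the symmetric positive definite $\tilde{\bm{A}}_0$ (which is equivalent to a congruence transformation in the appropriate metric). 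Integration by parts on the second term gives $-(\nabla \bm{v}_h,\bm{r})_{\mathcal{T}_h}+\langle \bm{v}_h,\bm{r}\cdot\bm{n}\rangle_{\partial \mathcal{T}_h}$; here I invoke the $p=1$ hypothesis to conclude $\nabla \bm{v}_h\in\bm{\mathcal{Q}}_h^k$, which allows the projection to be dropped from the volume integral, producing the clean identity
\begin{equation*}
(\nabla \bm{v}_h,\bm{G}(\bm{v}_h,\bm{q}_h))_{\mathcal{T}_h}=-\int_{\mathcal{T}_h}\bm{q}_h^t\widetilde{\bm{\mathcal{K}}}\bm{q}_h+\bigl\langle \bm{v}_h-\widehat{\bm{v}}_h,\Pi_{\bm{\mathcal{Q}}_h^k}[\bm{G}]\cdot\bm{n}\bigr\rangle_{\partial \mathcal{T}_h}.
\end{equation*}
Substituting this into the $-(\bm{G},\nabla \bm{v}_h)_{\mathcal{T}_h}$ term of \eqref{e:hDG_NS1} and regrouping $\langle \widehat{\bm{g}}_h,\bm{v}_h\rangle_{\partial \mathcal{T}_h}=\langle \widehat{\bm{g}}_h,\bm{v}_h-\widehat{\bm{v}}_h\rangle_{\partial \mathcal{T}_h}+\langle \widehat{\bm{g}}_h,\widehat{\bm{v}}_h\rangle_{\partial \mathcal{T}_h}$ produces both the volume dissipation $\int \bm{q}_h^t\widetilde{\bm{\mathcal{K}}}\bm{q}_h$ and the viscous jump term $\int_{\partial \mathcal{T}_h}(\bm{v}_h-\widehat{\bm{v}}_h)^t[\widehat{\bm{g}}_h-\Pi_{\bm{\mathcal{Q}}_h^k}[\bm{G}]\cdot\bm{n}]$ appearing in \eqref{entrEqhDG_NS}, while the remaining $\widehat{\bm{g}}_h$ piece is absorbed into $\mathcal{B}_{\partial\Omega}$ via the $\widehat{\bm{v}}_h$-testing of \eqref{e:hDG_NS2}.

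\textbf{Main obstacle.} The trickiest bookkeeping is keeping the viscous numerical flux $\widehat{\bm{g}}_h$ separate from its companion $\Pi_{\bm{\mathcal{Q}}_h^k}[\bm{G}]\cdot\bm{n}$ throughout the manipulations, so that they reassemble into the single jump term shown in \eqref{entrEqhDG_NS} rather than being conflated. The $p=1$ hypothesis must be invoked precisely once, to justify dropping the projection in the volume product $(\nabla \bm{v}_h,\Pi_{\bm{\mathcal{Q}}_h^k}[\bm{G}])_{\mathcal{T}_h}=(\nabla \bm{v}_h,\bm{G})_{\mathcal{T}_h}$; on curved elements the Piola-like inflation of the polynomial degree of $\nabla \bm{v}_h$ spoils this identity and a remainder would persist in the volume integrand. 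Everything else follows by the same path-integral and jump decompositions used for the inviscid case.
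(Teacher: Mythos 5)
Your proposal is correct and follows essentially the same route as the paper's proof: testing \eqref{e:hDG_NS1} with $\bm{v}_h$ and \eqref{e:hDG_NS2} with $\widehat{\bm{v}}_h$ to recycle the Euler argument of Proposition \ref{semiDiscGlobEntrIdentity}, then testing \eqref{e:hDG_NS0} with $\Pi_{\bm{\mathcal{Q}}_h^k}\big[\bm{G}(\bm{v}_h,\bm{q}_h)\big]$ and invoking the $p=1$ hypothesis (so that $\nabla\bm{v}_h\in\bm{\mathcal{Q}}_h^k$ and the projection can be dropped in the volume term) to obtain exactly the paper's identity \eqref{e:eq123}. The only caveat is your parenthetical justification that $\widetilde{\bm{\mathcal{K}}}$ is symmetric positive semi-definite: the product of a symmetric positive semi-definite matrix with a symmetric positive definite one need not be symmetric, so this property is not generic linear algebra but a nontrivial consequence of the entropy symmetrization, which the paper (and you should) simply cite from \cite{Hughes:86}.
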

\begin{proof} The proof is given in \ref{s:appProof}.
\end{proof}

\begin{corol}\label{semiDiscGlEntIdMeanValue_NS} The time evolution of total generalized entropy for the compressible Navier-Stokes equations on a non-curved mesh with the mean-value stabilization matrix \eqref{e:sigmaMV} and the projection viscous numerical flux \eqref{e:numericalFluxNS_proj} is given by
\begin{equation}
\label{entrEqhDG_MV_NS}
\begin{split}
\frac{d}{dt} & \int_{\mathcal{T}_h} H (\bm{u}(\bm{v}_h)) + \int_{\mathcal{T}_h} \bm{q}_h^t \cdot \widetilde{\bm{\mathcal{K}}} (\bm{v}_h) \cdot \bm{q}_h \\
+ & \int_{\partial \mathcal{T}_h} \int_0^1 (1 - \theta) \, \big( \bm{v}_h - \widehat{\bm{v}}_h \big)^t \cdot \Big( \bm{A}_{n,\tilde{A}_0}^{+} \big( \tilde{\bm{v}}( \theta; \bm{v}_h , \widehat{\bm{v}}_h ) ; \bm{n} \big) - \bm{A}_{n,\tilde{A}_0}^{-} \big( \tilde{\bm{v}}( \theta; \widehat{\bm{v}}_h , \bm{v}_h ) ; \bm{n} \big) \Big) \cdot \big( \bm{v}_h - \widehat{\bm{v}}_h \big) \, d \theta \\
+ & \, \mathcal{B}_{\partial \Omega}(\widehat{\bm{v}}_h , \bm{v}_h , \bm{q}_h ; \bm{v}^{\partial \Omega}) = 0 , 
\end{split}
\end{equation}
where $\bm{A}_{n,\tilde{A}_0}^{+}$ and $\bm{A}_{n,\tilde{A}_0}^{-}$ are defined in Eq. \eqref{e:AsCorol}.
\end{corol}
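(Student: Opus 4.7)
The plan is to deduce this corollary directly from Proposition \ref{semiDiscGlobEntrIdentityNS} together with the algebraic manipulation already carried out in the proof of Corollary \ref{semiDiscGlEntIdMeanValue}. Concretely, starting from the entropy identity \eqref{entrEqhDG_NS}, two of the five terms need to be rewritten under the specified choices of numerical fluxes, while the remaining three (the time-derivative term, the volumetric dissipation term $\int_{\mathcal{T}_h} \bm{q}_h^t \cdot \widetilde{\bm{\mathcal{K}}}(\bm{v}_h) \cdot \bm{q}_h$, and the boundary term $\mathcal{B}_{\partial \Omega}$) are to be carried over verbatim.

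First I would handle the interior-face viscous contribution. By Definition \ref{d:projectionViscNumFlux}, the choice $\widehat{\bm{g}}_h = \widehat{\bm{g}}_h^{\Pi}$ means that, pointwise on $\partial \mathcal{T}_h$,
\begin{equation*}
\widehat{\bm{g}}_h - \Pi_{\bm{\mathcal{Q}}_h^k}\bigl[\bm{G}(\bm{v}_h,\bm{q}_h)\bigr] \cdot \bm{n} = 0 ,
\end{equation*}
so the fourth line of \eqref{entrEqhDG_NS} vanishes identically. This is the whole role of the projection flux in the statement and requires no further computation.

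Next I would address the inviscid stabilization term. Substituting $\bm{\sigma} = \bm{\sigma}_{MV}$ from \eqref{e:sigmaMV} and using $\bm{\Sigma}$ as in \eqref{e:Lambda_n}, the integrand $\tfrac{1}{2}(\bm{v}_h - \widehat{\bm{v}}_h)^t \cdot [\bm{\sigma}_{MV} - \bm{\Sigma}] \cdot (\bm{v}_h - \widehat{\bm{v}}_h)$ can be rewritten using exactly the chain of equalities \eqref{e:noting} established in the proof of Corollary \ref{semiDiscGlEntIdMeanValue}. That calculation applies verbatim here, since the stabilization term in \eqref{entrEqhDG_NS} has the same algebraic structure as in \eqref{entrEqhDG}, and it converts the face integrand into
\begin{equation*}
\int_0^1 (1-\theta)\,(\bm{v}_h - \widehat{\bm{v}}_h)^t \cdot \Bigl( \bm{A}_{n,\tilde{A}_0}^{+}\!\bigl(\tilde{\bm{v}}(\theta;\bm{v}_h,\widehat{\bm{v}}_h);\bm{n}\bigr) - \bm{A}_{n,\tilde{A}_0}^{-}\!\bigl(\tilde{\bm{v}}(\theta;\widehat{\bm{v}}_h,\bm{v}_h);\bm{n}\bigr) \Bigr) \cdot (\bm{v}_h - \widehat{\bm{v}}_h)\, d\theta ,
\end{equation*}
with $\bm{A}_{n,\tilde{A}_0}^{\pm}$ as defined in \eqref{e:AsCorol}. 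Combining the preserved terms with these two rewritings yields \eqref{entrEqhDG_MV_NS}.

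I do not expect any genuine obstacle: the only subtlety worth flagging is verifying that the change-of-variable step $\theta' = 1-\theta$ used in \eqref{e:noting} is what allows one of the $\tilde{\bm{A}}_n$ integrals to be re-expressed along the reversed path $\tilde{\bm{v}}(\theta;\widehat{\bm{v}}_h,\bm{v}_h)$, giving the asymmetric appearance of the two arguments in the $\bm{A}_{n,\tilde{A}_0}^{+}$ and $\bm{A}_{n,\tilde{A}_0}^{-}$ terms. Once that is noted, the proof is just a transcription and deserves only a short remark pointing the reader to Remark 1 and to the proof of Corollary \ref{semiDiscGlEntIdMeanValue}.
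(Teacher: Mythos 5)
Your proposal is correct and follows exactly the paper's own proof, which likewise obtains the result by combining the Navier--Stokes entropy identity \eqref{entrEqhDG_NS} with the projection flux definition \eqref{e:numericalFluxNS_proj} (killing the viscous face term) and the algebraic identity \eqref{e:noting} from Corollary \ref{semiDiscGlEntIdMeanValue} (rewriting the stabilization term). Your write-up simply makes explicit the steps the paper leaves as a one-line citation.
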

\begin{proof} The desired result follows by combining Equations \eqref{e:sigmaMV}, \eqref{e:Lambda_n}, \eqref{e:numericalFluxNS_proj}, \eqref{e:noting} and \eqref{entrEqhDG_NS}.
\end{proof}

\subsection{\label{s:entStNS}Entropy stability for the compressible Navier-Stokes equations}

\begin{theorem}[Semi-discrete entropy stability for the compressible Navier-Stokes equations]
\label{mainThNS} The entropy-variable hybridized DG discretization of the compressible Navier-Stokes equations \eqref{e:hDG_NS} on a non-curved (i.e.\ $p=1$) mesh with periodic boundaries, stabilization matrix as in \eqref{e:entrStStabMatrix} and viscous numerical flux as in \eqref{e:entrStableViscNumFlux} is entropy stable, that is, the total generalized entropy is non-increasing in time,
\begin{equation}
\label{e:entropyStabilityNS}
\frac{d}{dt} \int_{\mathcal{T}_h} H ( \bm{u}(\bm{v}_h(t)) ) \leq 0 , 
\end{equation}
and the following entropy bounds are satisfied
\begin{equation}
\label{e:entropyBoundNS}
\int_{\mathcal{T}_h} H ( \bm{u}^*(\bm{v}_{h,0}) ) \leq \int_{\mathcal{T}_h} H ( \bm{u}(\bm{v}_h(t)) ) \leq \int_{\mathcal{T}_h} H ( \bm{u}(\bm{v}_{h,0}) ) , 
\end{equation}
where $\bm{u}^*$ is the minimum total entropy state defined in Eq. \eqref{e:uStar}.
\end{theorem}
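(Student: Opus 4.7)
The plan is to deduce the theorem directly from the global entropy identity \eqref{entrEqhDG_NS} of Proposition \ref{semiDiscGlobEntrIdentityNS}: under the hypotheses of the theorem, every contribution on the left-hand side other than $d/dt \int H$ is nonnegative, so entropy is non-increasing, and then the lower bound in \eqref{e:entropyBoundNS} follows from the scheme's $\bm{u}$-conservation combined with convexity of $H$.

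First I would instantiate \eqref{entrEqhDG_NS} under the hypotheses. On a periodic domain the boundary set $\partial\Omega$ is identified pairwise, so the boundary term $\mathcal{B}_{\partial\Omega}(\widehat{\bm{v}}_h,\bm{v}_h,\bm{q}_h;\bm{v}^{\partial\Omega})$ defined in \eqref{e:Binv_NS} vanishes. What remains is
\[
\frac{d}{dt}\int_{\mathcal{T}_h} H(\bm{u}(\bm{v}_h)) + \mathrm{T}_1 + \mathrm{T}_2 + \mathrm{T}_3 = 0,
\]
where $\mathrm{T}_1 = \int_{\mathcal{T}_h} \bm{q}_h^t \cdot \widetilde{\bm{\mathcal{K}}}(\bm{v}_h) \cdot \bm{q}_h$ is the volume viscous dissipation, $\mathrm{T}_2 = \tfrac{1}{2}\int_{\partial\mathcal{T}_h}(\bm{v}_h-\widehat{\bm{v}}_h)^t\cdot[\bm{\sigma}-\bm{\Sigma}]\cdot(\bm{v}_h-\widehat{\bm{v}}_h)$ is the interface stabilization contribution, and $\mathrm{T}_3 = \int_{\partial\mathcal{T}_h}(\bm{v}_h-\widehat{\bm{v}}_h)^t\cdot[\widehat{\bm{g}}_h - \Pi_{\bm{\mathcal{Q}}_h^k}[\bm{G}(\bm{v}_h,\bm{q}_h)]\cdot\bm{n}]$ is the viscous-flux surface contribution. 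Proposition \ref{semiDiscGlobEntrIdentityNS} tells us $\widetilde{\bm{\mathcal{K}}}(\bm{v}_h)$ is symmetric positive semi-definite, so $\mathrm{T}_1\geq 0$; Definition \ref{d:entrStableStabMatrix} applied with $(\bm{v}_1,\bm{v}_2)=(\widehat{\bm{v}}_h,\bm{v}_h)$ gives $\mathrm{T}_2\geq 0$ pointwise on each face; and Definition \ref{d:entrStableViscNumFlux} gives $\mathrm{T}_3\geq 0$ pointwise. Adding these establishes \eqref{e:entropyStabilityNS}, and integrating in time yields the upper bound in \eqref{e:entropyBoundNS}.

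For the lower bound I would use $\bm{u}$-conservation. Taking $\bm{w}$ and $\bm{\mu}$ equal to arbitrary constants in \eqref{e:hDG_NS1}--\eqref{e:hDG_NS2} kills the volume flux term (since $\nabla \bm{w} = 0$), makes the $\widehat{\bm{f}}_h+\widehat{\bm{g}}_h$ contributions on $\partial \mathcal{T}_h$ cancel in pairs on a periodic mesh (each interior face is traversed once by each neighbour with opposite $\bm{n}$ and a single-valued trace flux), and leaves $\tfrac{d}{dt}\int_{\mathcal{T}_h}\bm{u}(\bm{v}_h) = 0$. Therefore $\bm{u}^*(\bm{v}_h(t)) = \bm{u}^*(\bm{v}_{h,0})$ is constant in both space and time. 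Because $H$ is convex on $X_u$, Jensen's inequality with respect to the probability measure $d\bm{x}/\mu_L(\Omega)$ on $\Omega$ gives
\[
H\bigl(\bm{u}^*(\bm{v}_{h,0})\bigr) = H\Bigl(\tfrac{1}{\mu_L(\Omega)}\int_{\mathcal{T}_h}\bm{u}(\bm{v}_h(t))\Bigr) \leq \tfrac{1}{\mu_L(\Omega)}\int_{\mathcal{T}_h}H(\bm{u}(\bm{v}_h(t))),
\]
and multiplying by $\mu_L(\Omega)$, while noting that $\bm{u}^*$ is spatially constant so $\int_{\mathcal{T}_h} H(\bm{u}^*(\bm{v}_{h,0})) = \mu_L(\Omega)\,H(\bm{u}^*(\bm{v}_{h,0}))$, yields the lower bound in \eqref{e:entropyBoundNS}.

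The main obstacle is essentially bookkeeping rather than conceptual, since Proposition \ref{semiDiscGlobEntrIdentityNS} has already done the analytical work. What must be checked carefully is (i) that the argument ordering of the stabilization matrix in \eqref{e:entrStStabMatrix} aligns with that appearing in \eqref{entrEqhDG_NS} so that the sign of $\mathrm{T}_2$ is indeed nonnegative, and (ii) that for either choice \eqref{e:numericalFluxNS_1} or \eqref{e:numericalFluxNS_2} of the viscous numerical flux the constant-test-function argument really gives exact preservation of the spatial mean of $\bm{u}$ on the periodic mesh, as this is what makes $\bm{u}^*$ time-independent and thereby turns Jensen's inequality into the time-uniform lower bound.
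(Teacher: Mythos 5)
Your proposal is correct and is essentially the paper's own proof: \eqref{e:entropyStabilityNS} follows from Proposition \ref{semiDiscGlobEntrIdentityNS} once the boundary term drops on the periodic domain and the three remaining terms are recognized as nonnegative (positive semi-definiteness of $\widetilde{\bm{\mathcal{K}}}$, Definition \ref{d:entrStableStabMatrix} with $(\bm{v}_1,\bm{v}_2)=(\widehat{\bm{v}}_h,\bm{v}_h)$, and Definition \ref{d:entrStableViscNumFlux}), the upper bound follows by integrating in time, and the lower bound follows from $\bm{u}$-conservation plus convexity of $H$ about the mean state $\bm{u}^*$ --- the paper obtains your Jensen inequality via a Taylor expansion with integral remainder whose first-order term vanishes by definition of $\bm{u}^*$, which is the same argument made explicit. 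One imprecision worth fixing: in hybridized DG the one-sided numerical fluxes $\widehat{\bm{f}}_h+\widehat{\bm{g}}_h$ are \emph{not} single-valued on a face (they depend on the element-interior traces $\bm{v}_h$, $\bm{q}_h$, which are double-valued), so they do not cancel pointwise in pairs as your parenthetical claims; rather, the surface contributions vanish because Eq. \eqref{e:hDG_NS2} tested with a constant $\bm{\mu}$ weakly enforces $\int_{\partial\mathcal{T}_h}\big(\widehat{\bm{f}}_h+\widehat{\bm{g}}_h\big)=0$ on a periodic mesh --- which is exactly what your constant-test-function argument actually uses, so the conclusion stands.
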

\begin{proof} The proof is given in \ref{s:appProof}.
\end{proof}

\begin{corol} The entropy-variable hybridized DG discretization of the compressible Navier-Stokes equations on a non-curved mesh with periodic boundaries, the projection viscous numerical flux \eqref{e:numericalFluxNS_proj}, and with either (i) the mean-value stabilization matrix \eqref{e:sigmaMV}, (ii) the symmetric variable stabilization matrix \eqref{e:symStab} with $\bm{v}_*$ as in Eq. \eqref{e:cond_v*}, or (iii) the symmetric Lax-Friedrichs stabilization matrix \eqref{e:sLxF} with $\bm{v}_*$ as in Eq. \eqref{e:cond_v*}, is entropy stable.
\end{corol}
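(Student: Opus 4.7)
My plan is to verify, for each of the three listed combinations, both hypotheses of Theorem \ref{mainThNS}: an entropy-stable stabilization matrix in the sense of Definition \ref{d:entrStableStabMatrix}, and an entropy-stable viscous numerical flux in the sense of Definition \ref{d:entrStableViscNumFlux}. Once both are established, the desired conclusion follows by directly invoking Theorem \ref{mainThNS}, since the mesh is assumed non-curved and the boundaries are periodic.

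For the stabilization matrix I would appeal to Remark 2, which already identifies the mean-value matrix \eqref{e:sigmaMV}, together with the symmetric variable \eqref{e:symStab} and symmetric Lax-Friedrichs \eqref{e:sLxF} matrices with $\bm{v}_*$ as in \eqref{e:cond_v*}, as entropy-stable. A self-contained derivation for \eqref{e:sigmaMV} proceeds by subtracting \eqref{e:Lambda_n} from \eqref{e:sigmaMV}, using the identities $|\tilde{\bm{A}}_n|_{\tilde{\bm{A}}_0} \pm \tilde{\bm{A}}_n = 2\,\bm{A}_{n,\tilde{A}_0}^{\pm}$ together with the fact that $\tilde{\bm{A}}_n$ is symmetric with real eigenvalues inherited from $\bm{A}_n$, so that the resulting integrand is a non-negative quadratic form in $\bm{v}_h - \widehat{\bm{v}}_h$ pointwise in $\theta$. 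For the two symmetric variants, the pointwise bound \eqref{e:cond_v*} combined with the triangle inequality applied under the integral in \eqref{e:Lambda_n} supplies the required domination of $\bm{\sigma} - \bm{\Sigma}$ as a quadratic form.

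For the viscous numerical flux, the choice $\widehat{\bm{g}}_h = \widehat{\bm{g}}^{\Pi}_h$ in Definition \ref{d:projectionViscNumFlux} makes both sides of \eqref{e:entrStableViscNumFlux} coincide, so the projection viscous flux is trivially entropy stable in the sense of Definition \ref{d:entrStableViscNumFlux}. Combining this observation with the entropy-stability of the stabilization matrix established above, Theorem \ref{mainThNS} delivers both the monotonic decay \eqref{e:entropyStabilityNS} and the sandwich bound \eqref{e:entropyBoundNS}.

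The main technical ingredient, should one avoid citing Remark 2, is the pointwise verification of \eqref{e:entrStStabMatrix} for the two symmetric matrices: the defining condition \eqref{e:cond_v*} is engineered precisely so that a single-state quadratic form dominates the $\theta$-integral arising from $\bm{\Sigma}$. Everything else in the argument is a bookkeeping exercise combining Theorem \ref{mainThNS} with the definitions, and the non-curved-mesh hypothesis is inherited unchanged from Theorem \ref{mainThNS}.
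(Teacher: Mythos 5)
Your proposal is correct and matches the paper's (implicit) argument exactly: the corollary is an immediate consequence of Theorem \ref{mainThNS}, since Remark 2 certifies that all three stabilization matrices satisfy \eqref{e:entrStStabMatrix}, and the projection viscous flux \eqref{e:numericalFluxNS_proj} satisfies \eqref{e:entrStableViscNumFlux} with equality by construction. Your optional self-contained sketch of Remark 2 is also sound, with the one small addition that for the symmetric Lax--Friedrichs case one needs the extra observation that $\lambda_{max}(\bm{v}_*)\,\tilde{\bm{A}}_0(\bm{v}_*)$ dominates $\big|\tilde{\bm{A}}_n(\bm{v}_*)\big|_{\tilde{\bm{A}}_0}$ as a quadratic form before the argument for the symmetric variable matrix can be reused.
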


We note again that Theorem \ref{mainThNS} implies the numerical solution satisfies the integral version of the Second Law of Thermodynamics in $\Omega$, and that it provides sufficient, but not necessary, conditions for entropy stability. Also, entropy stability holds for any approximating polynomial order $k \geq 0$ and any non-singular, non-curved mesh. 
The need for the projection viscous numerical flux to ensure entropy stability suggests that if $\widehat{\bm{g}}_h$ is as in Eq. \eqref{e:numericalFluxNS_1}, it is when $\widetilde{\bm{\mathcal{K}}}(\bm{v}_h)$ significantly changes inside an element, and consequently $\bm{G}(\bm{v}_h , \bm{q}_h)$ is not well represented in $\bm{\mathcal{Q}}_h^k$, that the viscous terms may lead or contribute to numerical instability. A similar logic applies to other definitions of $\widehat{\bm{g}}_h$.


\subsection{$L^2$ stability}

We conclude this section with a quick note on the $L^2$ stability of the scheme.

\begin{prop}
If the scheme is entropy stable and $\tilde{\bm{A}}_0$ remains uniformly bounded in the sense that there exist positive constants $C \geq c > 0$ such that
\begin{equation}
\label{e:condL2}
c \, \norm{\bm{z}}_{L^2(\mathbb{R}^m)}^2 \leq \bm{z}^t \cdot \tilde{\bm{A}}_0 (\bm{v}) \cdot \bm{z} \leq C \, \norm{\bm{z}}_{L^2(\mathbb{R}^m)}^2 
\end{equation}
for all $\bm{z} \in \mathbb{R}^m$ and all $\bm{v} \in \mathcal{C}\mathcal{H}\big( \big\{ \bm{v}_h(x , t) , \ \bm{x} \in \mathcal{T}_h \ t \in [0,t_f] \big\}\big)$, where $\mathcal{C}\mathcal{H} ( \, \cdot \, )$ denotes convex hull, then the scheme is $L^2$ stable in the sense that
\begin{equation}
\label{e:L2stab}
\norm{ \bm{u}(\bm{v}_h(t)) - \bm{u}^*(\bm{v}_{h,0}) }_{[L^2(\Omega)]^m} \leq \bigg( \frac{C}{c} \bigg)^{1/2} \, \norm{ \bm{u}(\bm{v}_{h,0}) - \bm{u}^*(\bm{v}_{h,0}) }_{[L^2(\Omega)]^m} \quad \textnormal{for all} \quad 0 \leq t \leq t_f . 
\end{equation}
This result applies both to the compressible Euler and Navier-Stokes equations.
\end{prop}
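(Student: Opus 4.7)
The plan is to convert the entropy estimates of Theorem \ref{mainTh} and Theorem \ref{mainThNS} into the desired $L^2$ estimate by working with the Bregman divergence (relative generalized entropy)
\[
H(\bm{u}_1 \, | \, \bm{u}_2) := H(\bm{u}_1) - H(\bm{u}_2) - \bm{v}(\bm{u}_2)^t \cdot (\bm{u}_1 - \bm{u}_2)
\]
associated with the convex function $H$, and then invoking $\bm{u}$-conservation of the hybridized DG scheme to eliminate the linear term upon integration over $\Omega$.

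First, I would translate the hypothesis \eqref{e:condL2} into a spectral bound on $\nabla_{\bm{u}}^2 H$. Since $\bm{v} = \partial H / \partial \bm{u}$, one has $\nabla_{\bm{u}}^2 H = \partial \bm{v} / \partial \bm{u} = \tilde{\bm{A}}_0^{-1}$, so the eigenvalue bounds on $\tilde{\bm{A}}_0$ imply $\tfrac{1}{C} \bm{I} \preceq \nabla_{\bm{u}}^2 H \preceq \tfrac{1}{c} \bm{I}$ pointwise on the relevant states. Taylor's theorem with integral remainder, applied to $H$ along the segment from $\bm{u}^*(\bm{v}_{h,0})$ to $\bm{u}(\bm{v}_h(\bm{x}, t))$, then yields the pointwise sandwich
\[
\frac{1}{2 C} \, \norm{\bm{u}(\bm{v}_h) - \bm{u}^*(\bm{v}_{h,0})}^2 \leq H\big( \bm{u}(\bm{v}_h) \, | \, \bm{u}^*(\bm{v}_{h,0}) \big) \leq \frac{1}{2 c} \, \norm{\bm{u}(\bm{v}_h) - \bm{u}^*(\bm{v}_{h,0})}^2 .
\]

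Second, I would integrate this sandwich over $\Omega$ and use the $\bm{u}$-conservation property noted in Sections \ref{s:hDG_euler} and \ref{s:hDG_NS}, which ensures that the spatial average of $\bm{u}(\bm{v}_h(t))$ is preserved in time and equals $\bm{u}^*(\bm{v}_{h,0})$; consequently, the linear term in the Bregman divergence integrates to zero, giving
\[
\int_{\mathcal{T}_h} H\big( \bm{u}(\bm{v}_h(t)) \, | \, \bm{u}^*(\bm{v}_{h,0}) \big) = \int_{\mathcal{T}_h} H(\bm{u}(\bm{v}_h(t))) - \mu_L(\Omega) \, H(\bm{u}^*(\bm{v}_{h,0})) .
\]
By entropy stability, the right-hand side is non-increasing in $t$, hence bounded above by its value at $t = 0$. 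Chaining the upper pointwise bound at $t=0$ with the lower pointwise bound at $t$ and taking square roots produces \eqref{e:L2stab}.

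The main obstacle I anticipate is the mild gap between where the Hessian bound is needed (on line segments in $\bm{u}$-space joining $\bm{u}^*(\bm{v}_{h,0})$ to $\bm{u}(\bm{v}_h(\bm{x}, t))$, as required by Taylor's remainder) and where it is assumed (on the convex hull of the trajectory $\{\bm{v}_h\}$ in $\bm{v}$-space). I would bridge this by observing that the smooth diffeomorphism $\bm{v} \mapsto \bm{u}(\bm{v})$ carries the $\bm{v}$-space convex hull onto a set whose own convex hull in $\bm{u}$-space contains $\bm{u}^*(\bm{v}_{h,0})$ together with the needed segments, so a harmless enlargement of the hypothesis to include $\bm{v}(\bm{u}^*(\bm{v}_{h,0}))$ suffices; such an enlargement preserves density and pressure positivity and hence the spectral bound on $\tilde{\bm{A}}_0$. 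With this technicality handled, the argument applies uniformly to both the Euler and Navier-Stokes cases since it uses only the entropy-decay inequality and conservation of $\bm{u}$, both of which are established in the preceding theorems.
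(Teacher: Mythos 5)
Your proposal is correct and follows essentially the same route as the paper's proof \eqref{e:L2stabProof}: a Taylor expansion of $H$ about $\bm{u}^*(\bm{v}_{h,0})$ (your Bregman divergence is exactly the paper's expansion with the linear term annihilated by $\bm{u}$-conservation and the definition of $\bm{u}^*$), the quadratic remainder sandwiched via Hessian bounds coming from \eqref{e:condL2}, and entropy stability supplying the monotone-in-time middle inequality. Your write-up is marginally more careful on two points the paper glosses over---the inversion $\partial^2 H / \partial \bm{u}^2 = \tilde{\bm{A}}_0^{-1}$, which makes the correct sandwich constants $1/(2C)$ and $1/(2c)$ rather than the $c/2$ and $C/2$ displayed in \eqref{e:L2stabProof} (the final ratio $(C/c)^{1/2}$ is unchanged), and the mismatch between the $\bm{v}$-space convex hull of the hypothesis and the $\bm{u}$-space segments needed for the Taylor remainder, which the paper relegates to a footnote---but these are refinements of the same argument, not a different one.
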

\begin{proof} For any $0 \leq t \leq t_f$, it holds that
\begin{equation}
\label{e:L2stabProof}
\begin{split}
\frac{c}{2} & \norm{ \bm{u}(\bm{v}_h(t)) - \bm{u}^*(\bm{v}_{h,0}) }_{[L^2(\Omega)]^m}^2 \leq \frac{1}{2} \int_{\mathcal{T}_h} \big( \bm{u}(\bm{v}_h(t)) - \bm{u}^*(\bm{v}_{h,0}) \big)^t \cdot \frac{\partial^2 H (\bar{\bm{u}}(x,t))}{\partial \bm{u}^2} \cdot \big( \bm{u}(\bm{v}_h(t)) - \bm{u}^*(\bm{v}_{h,0}) \big) \\ 
& \qquad = \int_{\mathcal{T}_h} H(\bm{u}(\bm{v}_h(t))) - \int_{\mathcal{T}_h} H(\bm{u}^*(\bm{v}_{h,0})) \leq \int_{\mathcal{T}_h} H(\bm{u}(\bm{v}_{h,0})) - \int_{\mathcal{T}_h} H(\bm{u}^*(\bm{v}_{h,0})) \\
& \qquad = \frac{1}{2} \int_{\mathcal{T}_h} \big( \bm{u}(\bm{v}_{h,0}) - \bm{u}^*(\bm{v}_{h,0}) \big)^t \cdot \frac{\partial^2 H (\bar{\bm{u}}_0(x))}{\partial \bm{u}^2} \cdot \big( \bm{u}(\bm{v}_{h,0}) - \bm{u}^*(\bm{v}_{h,0}) \big) \\
& \qquad \leq \frac{C}{2} \norm{ \bm{u}(\bm{v}_{h,0}) - \bm{u}^*(\bm{v}_{h,0}) }_{[L^2(\Omega)]^m}^2 , 
\end{split}
\end{equation}
where $\bar{\bm{u}}( \cdot ,t)$ and $\bar{\bm{u}}_0( \cdot )$ are in the convex hull of $\bm{u}(\bm{v}_h( \cdot , t))$ and $\bm{u}(\bm{v}_{h,0}( \cdot ))$, respectively. The equalities in \eqref{e:L2stabProof} follow from the Taylor series with remainder in Lagrange form\footnote{Note that Eq. \eqref{e:condL2} implies $H$ is sufficiently regular for this form of the Remainder Theorem to apply everywhere in space and time. Also, the Remainder Theorem ensures $\bar{\bm{u}}( \cdot ,t)$ and $\bar{\bm{u}}_0( \cdot )$ are in the convex hull of $\bm{u}(\bm{v}_h( \cdot , t))$ and $\bm{u}(\bm{v}_{h,0}( \cdot ))$.} and the definition of $\bm{u}^*$; the second inequality follows from entropy stability; and the first and third inequalities from \eqref{e:condL2}. Equation \eqref{e:L2stab} then trivially follows.
\end{proof}
%
%
%
%

\section{\label{s:numExamples}Numerical examples}

We present a series of numerical examples to illustrate the convergence rate and stability of the proposed family of schemes. Steady and unsteady flows in subsonic, transonic, and supersonic regimes are considered. For some of the test problems, the performance and robustness is compared to the conservation-variable hybridized DG methods \cite{Fernandez:17a,Nguyen:15}. In all the examples, the Lax-Friedrichs stabilization matrix in \cite{Fernandez:AIAA:17a} is used for the conservation-variable schemes, and the symmetric Lax-Friedrichs stabilization matrix in Eq. \eqref{e:sLxF} with $\bm{v}_* = \widehat{\bm{v}}_h$ is used for the entropy-variable schemes. Equation \eqref{e:numericalFluxNS_2} is used for the viscous numerical flux. Characteristics-based, non-reflecting boundary conditions are prescribed on the inflow/outflow boundaries, slip wall boundary conditions are used on the solid surfaces for the inviscid problems, and no-slip, adiabatic boundary conditions are used on the solid surfaces for the viscous problems. For the unsteady examples, the semi-discrete system \eqref{IEDG} is integrated in time using the third-order, three-stage $L$-stable diagonally implicit Runge-Kutta DIRK(3,3) method \cite{Alexander:77}. For the steady-state examples, the backward Euler method is used for pseudo-time marching towards the steady state. Finally, $Pr = 0.71$, $\gamma = 1.4$ and $\beta = 0$ are assumed in all the test problems.

We note that entropy stability is not necessarily preserved upon DIRK(3,3) time discretization. Also, the symmetric Lax-Friedrichs stabilization matrix with $\bm{v}_* = \widehat{\bm{v}}_h$ may not satisfy Eq. \eqref{e:entrStStabMatrix} for pathological choices of the numerical solution $(\bm{v}_h , \widehat{\bm{v}}_h)$, and similarly the viscous numerical flux \eqref{e:numericalFluxNS_2} may not satisfy \eqref{e:entrStableViscNumFlux}. This stabilization matrix and viscous numerical flux are considered since they lead to more computationally efficient implementations. In addition, the numerical scheme remains entropy stable in practice, as illustrated in the numerical examples in this section.



\subsection{Ringleb flow}

This example is aimed at verifying the optimal accuracy order of the entropy-variable hybridized DG methods for smooth solutions of the Euler equations. 
To that end, we consider the Ringleb flow, an exact smooth solution of the two-dimensional steady-state Euler equations obtained using the hodograph method \cite{Chiocchia:1985}. The radial velocity $V$ at $(x,y)$ is given by the following nonlinear equation
\begin{equation}
\label{e:P1eq1}
(x-0.5A^2) + y^2 = \frac{1}{4 c^{10} V^4} , 
\end{equation}
where
\begin{equation}
c = \sqrt{1-\frac{V^2}{5}} , \qquad \qquad A = \frac{1}{c} + \frac{1}{3 c^2} + \frac{1}{5 c^5} - \frac{1}{2} \ln \bigg( \frac{1+c}{1-c} \bigg) . 
\end{equation}
The exact solution is then computed as
\begin{equation}
\label{e:RinglebExactSol}
\rho = c^5 , \qquad \qquad \rho V_x = c^5 V \cos \theta , \qquad \qquad \rho V_y = c^5 V \sin \theta , \qquad \qquad \rho E = \frac{c^7}{\gamma (\gamma-1)} + \frac{1}{2} c^5 V^2 , 
\end{equation}
where
\begin{equation}
\label{e:P1eq4}
\theta = \arcsin (\psi V) , \qquad \qquad \psi = \sqrt{\frac{1}{2 V^2} - c^5 \, (x -0.5 A)} . 
\end{equation}
All quantities in Equations \eqref{e:P1eq1}$-$\eqref{e:P1eq4} are given in non-dimensional form. Note the rightmost equation in \eqref{e:RinglebExactSol} assumes the reference density $\rho_0$, velocity $V_0$ and total energy $\rho E_0$ for non-dimensionalization are related through $\rho E_0 = \rho_0 V_0^2$. For other choices of non-dimensionalization, the expression for $\rho E$ needs to be adapted accordingly.

In particular, we consider the Ringleb flow on the domain $\Omega = (-5,-1) \otimes (1,5)$ and prescribe the exact solution as boundary condition on $\partial \Omega$. 
The computational domain is partitioned into triangular meshes that are obtained by splitting a uniform $n \times n$ Cartesian grid into $2 n^2$ triangles. 
We perform grid convergence studies for polynomial orders $k = 1$, $2$, $3$, $4$ using mesh resolutions $1 / h = 2$, $4$, $8$, $16$, where $h = 4 / n$ is the characteristic element size. Tables \ref{t:RinglebConvergence_HDG}, \ref{t:RinglebConvergence_IEDG} and \ref{t:RinglebConvergence_EDG} present the errors and convergence rates, measured in the $[L^2(\Omega)]^m$ norm of non-dimensional conservation variables, for the entropy-variable HDG, IEDG and EDG schemes, respectively. The numerical solution converges to the exact solution with accuracy order of $k + 1$ for the three schemes; which is the optimal accuracy order given by the polynomial approximation space.

\begin{table}
\centering
\begin{tabular}{|c||cc|cc|cc|cc|}
\hline
 Mesh & \multicolumn{2}{c|}{$k = 1$} & \multicolumn{2}{c|}{$k = 2$} & \multicolumn{2}{c|}{$k = 3$} & \multicolumn{2}{c|}{$k = 4$} \\
 $1/h$ & Error & Order & Error & Order & Error & Order & Error & Order \\
\hline
2 & 4.92e-3 & $-$ & 6.79e-4 & $-$ & 3.94e-5 & $-$ & 4.79e-6 & $-$ \\
4 & 1.27e-3 & 1.95 & 1.05e-4 & 2.69 & 3.29e-6 & 3.58 & 2.01e-7 & 4.57 \\
8 & 3.26e-4 & 1.96 & 1.61e-5 & 2.71 & 1.83e-7 & 4.17 & 7.46e-9 & 4.75 \\
16 & 8.25e-5 & 1.98 & 2.21e-6 & 2.86 & 1.26e-8 & 3.86 & 2.67e-10 & 4.80\\
\hline
\end{tabular}
\caption{\label{t:RinglebConvergence_HDG} History of convergence of the entropy-variable HDG method for the Ringleb flow.}
\end{table}


\begin{table}
\centering
\begin{tabular}{|c||cc|cc|cc|cc|}
\hline
 Mesh & \multicolumn{2}{c|}{$k = 1$} & \multicolumn{2}{c|}{$k = 2$} & \multicolumn{2}{c|}{$k = 3$} & \multicolumn{2}{c|}{$k = 4$} \\
 $1/h$ & Error & Order & Error & Order & Error & Order & Error & Order \\
\hline
2 & 5.49e-3 & $-$ & 9.01e-4 & $-$ & 4.20e-5 & $-$ & 5.65e-6 & $-$ \\
4 & 1.42e-3 & 1.95 & 1.59e-4 & 2.50 & 3.38e-6 & 3.64 & 2.56e-7 & 4.47 \\
8 & 3.62e-4 & 1.97 & 2.56e-5 & 2.64 & 1.74e-7 & 4.28 & 9.90e-9 & 4.70 \\
16 & 9.30e-5 & 1.96 & 3.50e-6 & 2.87 & 1.18e-8 & 3.88 & 3.70e-10 & 4.74 \\
\hline
\end{tabular}
\caption{\label{t:RinglebConvergence_IEDG} History of convergence of the entropy-variable IEDG method for the Ringleb flow.}
\end{table}

\begin{table}
\centering
\begin{tabular}{|c||cc|cc|cc|cc|}
\hline
 Mesh & \multicolumn{2}{c|}{$k = 1$} & \multicolumn{2}{c|}{$k = 2$} & \multicolumn{2}{c|}{$k = 3$} & \multicolumn{2}{c|}{$k = 4$} \\
 $1/h$ & Error & Order & Error & Order & Error & Order & Error & Order \\
\hline
2 & 5.88e-3 & $-$ & 8.96e-4 & $-$ & 4.03e-5 & $-$ & 5.54e-6 & $-$ \\
4 & 1.46e-3 & 2.01 & 1.55e-4 & 2.53 & 3.25e-6 & 3.63 & 2.47e-7 & 4.49 \\
8 & 3.67e-4 & 2.00 & 2.47e-5 & 2.65 & 1.66e-7 & 4.30 & 9.46e-9 & 4.70 \\
16 & 9.21e-5 & 1.99 & 3.36e-6 & 2.88 & 1.15e-8 & 3.85 & 3.55e-10 & 4.74 \\
\hline
\end{tabular}
\caption{\label{t:RinglebConvergence_EDG} History of convergence of the entropy-variable EDG method for the Ringleb flow.}
\end{table}

\subsection{\label{s:couette}Couette flow}

The goal of this test problem is to verify the convergence rates for the Navier-Stokes equations. We consider a two-dimensional steady-state compressible Couette flow with a source term on a square domain $\Omega = (0,L)^2$. The exact solution is given by
\begin{equation}
\begin{split}
V_x & = V_0 \, \bar{y} \, \log(1+\bar{y}) , \\
V_y & = 0 , \\
p & = P_0 , \\
T & = T_0 \, \big( \alpha + \bar{y} \, (\beta - \alpha) \big) + \frac{V_0^2 \, Pr}{2 \, c_v \gamma} \, \bar{y} \, (1-\bar{y}) , 
\end{split}
\end{equation}
where $V_0$, $P_0$, $\alpha$ and $\beta$ are positive constants, $T_0 = V_0^2 / \big( (\gamma-1) \, \gamma \, M_0^2 \, c_v \big)$ is the temperature such that $V_0$ corresponds to a Mach number $M_0$, and $\bar{y} = y / L$ is a non-dimensional distance along the wall-normal direction. The density is given by the ideal gas law $\rho = p / \big( (\gamma-1) \, c_v \, T \big)$. The source term, appearing in the right-hand side of the Navier-Stokes equations \eqref{e:ns1}, is determined from the exact solution and reads as
\begin{equation}
\bm{s} = \Bigg( 0 , -\frac{\mu V_0}{L^2} \frac{2+\bar{y}}{(1+\bar{y})^2} , 0 , - \frac{\mu V_0^2}{L^2} \bigg( \log^2(1+\bar{y}) + \frac{\bar{y} \, \log(1+\bar{y})}{1+\bar{y}} + \frac{\bar{y} \, (3+2\bar{y}) \log(1+\bar{y}) - 2\bar{y} - 1}{(1+\bar{y})^2} \bigg) \Bigg)^t . 
\end{equation}
In particular, we take $\alpha = 0.8$, $\beta = 0.85$ and $M_0 = 0.15$, with $Pr = 0.71$ and $\gamma = 1.4$ as discussed before. While the exact solution is independent of the Reynolds number $Re_0 = \rho_0 V_0 L / \mu$, where $\rho_0 = \gamma \, P_0 \, M_0^2 / V_0^2$, we set it to $0.1$ in the simulations. This completes the non-dimensional description of the problem.

The computational domain is discretized into a uniform $n \times n$ Cartesian mesh and each square is further divided into two triangles. The exact solution is prescribed as boundary condition on $\partial \Omega$. 
In order to assess the accuracy of the numerical solution, we compute the $L^2$ norm of the error in the non-dimensional density $\rho^* = \rho / \rho_0$, momentum $\bm{p}^* = (\rho V_x , \rho V_y) / \rho_0 V_0$, total energy $\rho E^* = \rho E \, \rho_0^{-1} \, V_0^{-2}$, viscous stress tensor $\bm{\tau}^* = L \, \bm{\tau} \, \mu^{-1} V_0^{-1}$, and heat flux $\bm{f}^* = L \, \bm{f} \, \mu^{-1} V_0^{-1}$, for different mesh sizes and polynomial orders. 
Note that the approximate stresses and heat fluxes are computed from the approximate solution $\bm{v}_h$ and the approximate gradients $\bm{q}_h$.

Table \ref{t:CouetteConvergence_HDG} shows the errors and convergence rates for the entropy-variable HDG scheme. All the quantities converge with the optimal accuracy order of $k + 1$. The fact that HDG yields optimal accuracy both for the approximate solution and the approximate gradients is a very important advantage since most DG methods provide suboptimal convergence of order $k$ for the approximate gradients. All other schemes within the entropy-variable hybridized DG family, including IEDG and EDG, converge at the more common rates of $k+1$ for the approximate solution and $k$ for the approximate gradients (not shown).

\begin{table}
\centering
    \addtolength{\leftskip} {-2cm}
    \addtolength{\rightskip}{-2cm}
\begin{tabular}{|c|c||cc|cc|cc|cc|cc|}
\hline
Degree &  Mesh & \multicolumn{2}{c|}{$\norm{\rho_h^*-\rho^*}_{L^2(\Omega)}$} & \multicolumn{2}{c|}{$\norm{\bm{p}_h^*-\bm{p}^*}_{[L^2(\Omega)]^d}$} & \multicolumn{2}{c|}{$\norm{\rho E_h^*-\rho E^*}_{L^2(\Omega)}$} & \multicolumn{2}{c|}{$\norm{\bm{\tau}_h^*-\bm{\tau}^*}_{[L^2(\Omega)]^{d \times d}}$} & \multicolumn{2}{c|}{$\norm{\bm{f}_h^*-\bm{f}^*}_{[L^2(\Omega)]^d}$} \\
$k$ & $L/h$ & Error & Order & Error & Order & Error & Order & Error & Order & Error & Order \\
\hline
\multirow{4}{*}{1} & 8 & 8.28e-5 & $-$ & 7.79e-4 & $-$ & 6.00e-3 & $-$ & 3.28e-3 & $-$ & 6.65e-4 & $-$ \\
 & 16 & 1.69e-5 & 2.29  & 1.98e-4 & 1.98  & 1.32e-3 & 2.19  & 9.90e-4 & 1.73  & 1.68e-4 & 1.99 \\
 & 31 & 3.79e-6 & 2.16  & 5.01e-5 & 1.98  & 3.12e-4 & 2.08  & 2.95e-4 & 1.75  & 4.21e-5 & 1.99  \\
 & 64 & 1.06e-6 & 1.84  & 1.26e-5 & 1.99  & 8.50e-5 & 1.88  & 8.70e-5 & 1.76  & 1.06e-5 & 2.00  \\
\hline
\multirow{4}{*}{2} & 8 & 8.33e-6 & $-$ & 1.20e-5 & $-$ & 5.37e-4 & $-$ & 7.61e-5 & $-$ & 4.16e-6 & $-$ \\
 & 16 & 1.51e-6 & 2.47  & 1.52e-6 & 2.98  & 9.68e-5 & 2.47  & 1.15e-5 & 2.72  & 5.30e-7 & 2.97  \\
 & 32 & 2.49e-7 & 2.60  & 1.92e-7 & 2.98  & 1.59e-5 & 2.60  & 1.66e-6 & 2.80  & 6.72e-8 & 2.98  \\
 & 64 & 4.14e-8 & 2.59  & 2.43e-8 & 2.98  & 2.64e-6 & 2.59  & 2.37e-7 & 2.81  & 8.49e-9 & 2.99  \\
\hline
\multirow{4}{*}{3} & 8 & 1.18e-7 & $-$ & 2.30e-7 & $-$ & 7.63e-6 & $-$ & 2.28e-6 & $-$ & 5.73e-8 & $-$ \\
 & 16 & 8.31e-9 & 3.83  & 1.44e-8 & 4.00  & 5.37e-7 & 3.83  & 1.90e-7 & 3.58  & 3.77e-9 & 3.93 \\
 & 32 & 5.44e-10 & 3.93  & 9.70e-10 & 3.89  & 3.52e-8 & 3.93  & 1.73e-8 & 3.46  & 3.40e-10 & $\times$  \\
 & 64 & 3.48e-11 & 3.97  & 5.42e-11 & 4.16  & 2.25e-9 & 3.97  & 9.99e-10 & 4.12  & 4.26e-10 & $\times$ \\
\hline
\multirow{4}{*}{4} & 8 & 6.05e-9 & $-$ & 5.16e-9 & $-$ & 3.88e-7 & $-$ & 5.20e-8 & $-$ & 9.91e-10 & $-$ \\
 & 16 & 2.43e-10 & 4.64  & 2.21e-10 & 4.54  & 1.35e-8 & 4.84  & 2.21e-9 & 4.56 & 1.90e-10 & $\times$ \\
 & 32 & 7.55e-12 & 5.01  & 6.60e-12 & 5.07  & 3.47e-10 & 5.28  & 8.30e-11 & 4.73  & 2.59e-10 & $\times$ \\
 & 64 & 2.31e-13 & 5.03  & 2.04e-13 & 5.02  & 8.82e-12 & 5.30  & 8.04e-11 & $\times$  & 1.20e-10 & $\times$  \\
\hline
\end{tabular}
\caption{\label{t:CouetteConvergence_HDG} History of convergence of the entropy-variable HDG method for the Couette flow. The subscript $h$ denotes the numerical solution and no subscript denotes the exact solution. $\times$ indicates the convergence rate cannot be accurately computed due to finite precision arithmetic issues. Note finite precision affects before the errors in $\bm{\tau}^*$ and $\bm{f}^*$ than the errors in $\rho^*$, $\bm{p}^*$ and $\rho E^*$ due to the larger condition number to evaluate the former.}
\end{table}

\subsection{Inviscid supersonic flow past a circular duct}



This test case involves the steady supersonic flow in a two-dimensional channel with a 4\% thick circular bump on the bottom side. The length-to-height ratio of the channel is 3:1, the flow is inviscid, goes from left to right, and the inlet Mach number is $M_{\infty} = 1.4$. 
We use a finite element mesh with 2,400 isoparametric triangular elements and polynomials of degree $4$ to represent both the solution and the geometry.

Numerical results for entropy-variable IEDG without a shock capturing method are presented on the top images in Figure \ref{f:ductFigures}. No visual differences are observed in the numerical solution with the entropy-variable HDG and EDG schemes (not shown). Despite the severe Gibbs oscillations near the shock wave, the entropy-variable hybridized DG schemes are stable in all cases. Conservation-variable HDG, IEDG and EDG failed to converge for this problem without shock capturing. 
The solution with the entropy-variable IEDG scheme and the shock capturing method in \cite{Fernandez:SciTech:18a,Fernandez:JCPshock:18b} is shown at the bottom of Figure \ref{f:ductFigures}. The shock is well resolved and non-oscillatory in this case. We note that entropy stability can be shown to be preserved with this shock capturing method \cite{Fernandez:PhD:2018}.

 \begin{figure}
 \centering
 {\includegraphics[width=0.48\textwidth]{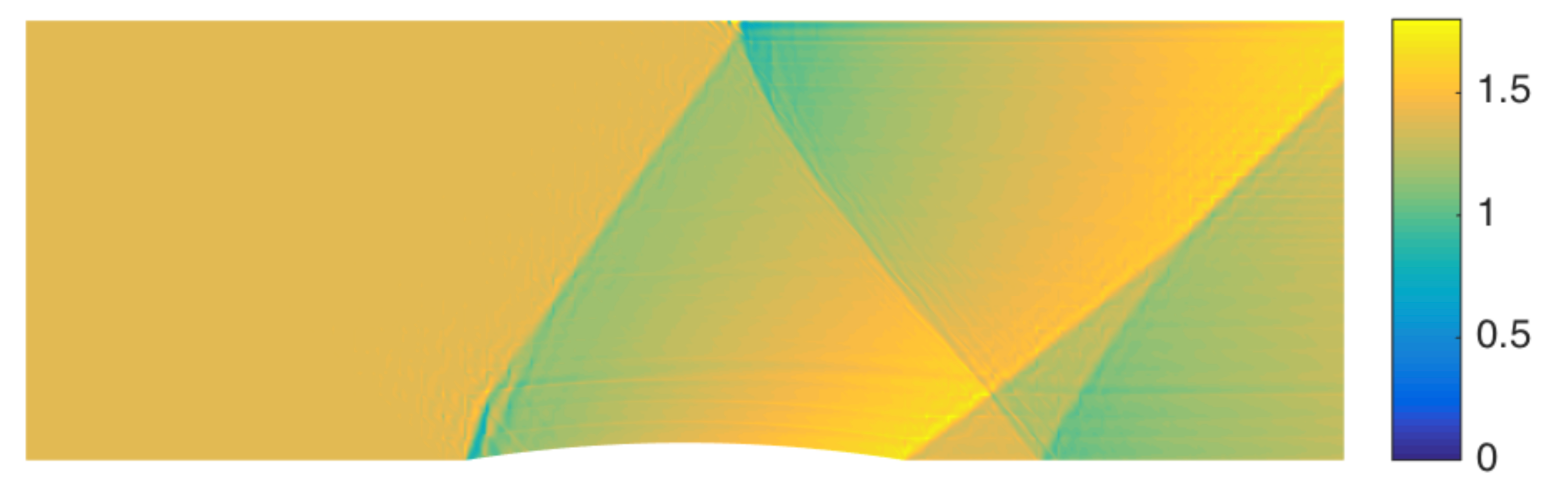}}
 \hfill {\includegraphics[width=0.48\textwidth]{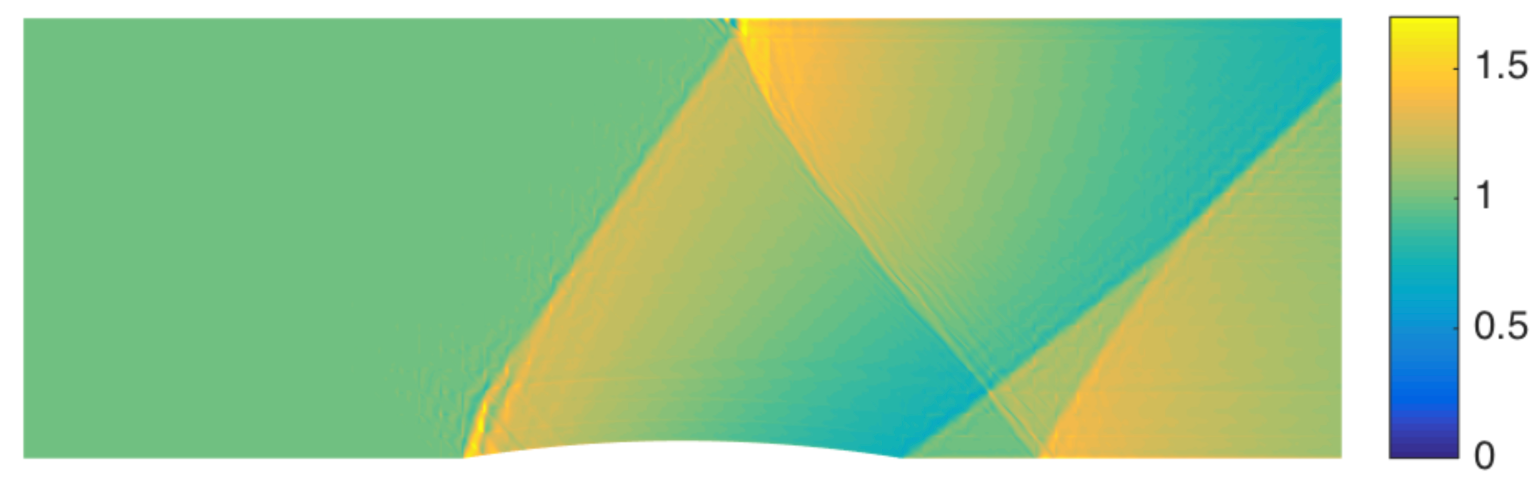}}
 \hfill {\includegraphics[width=0.48\textwidth]{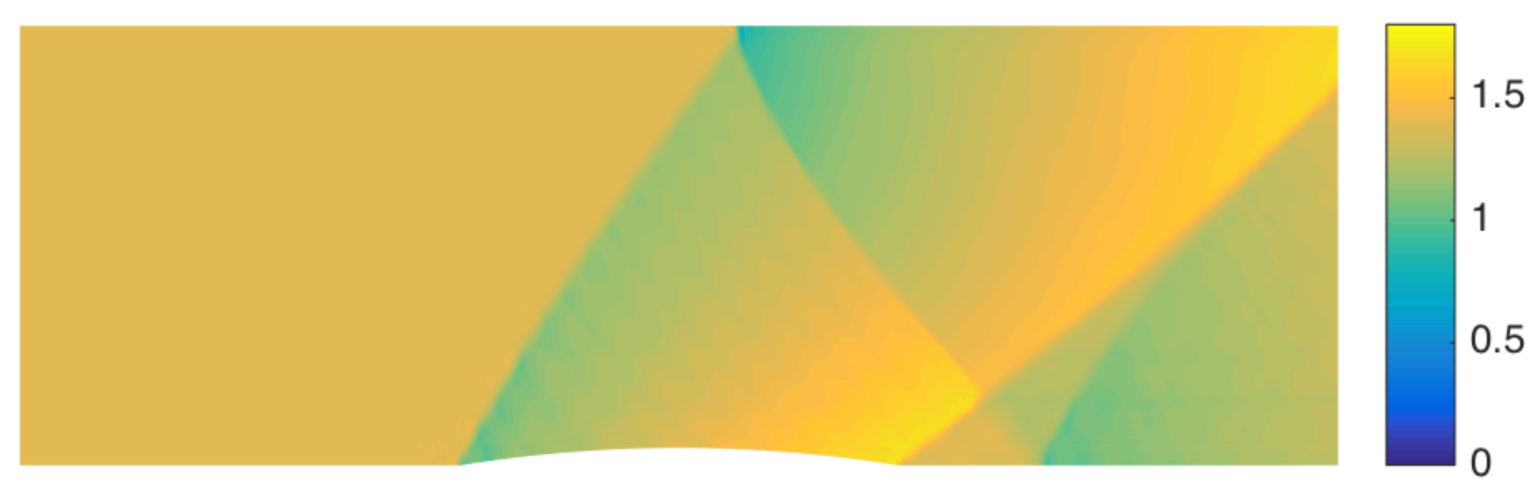}}
 \hfill {\includegraphics[width=0.48\textwidth]{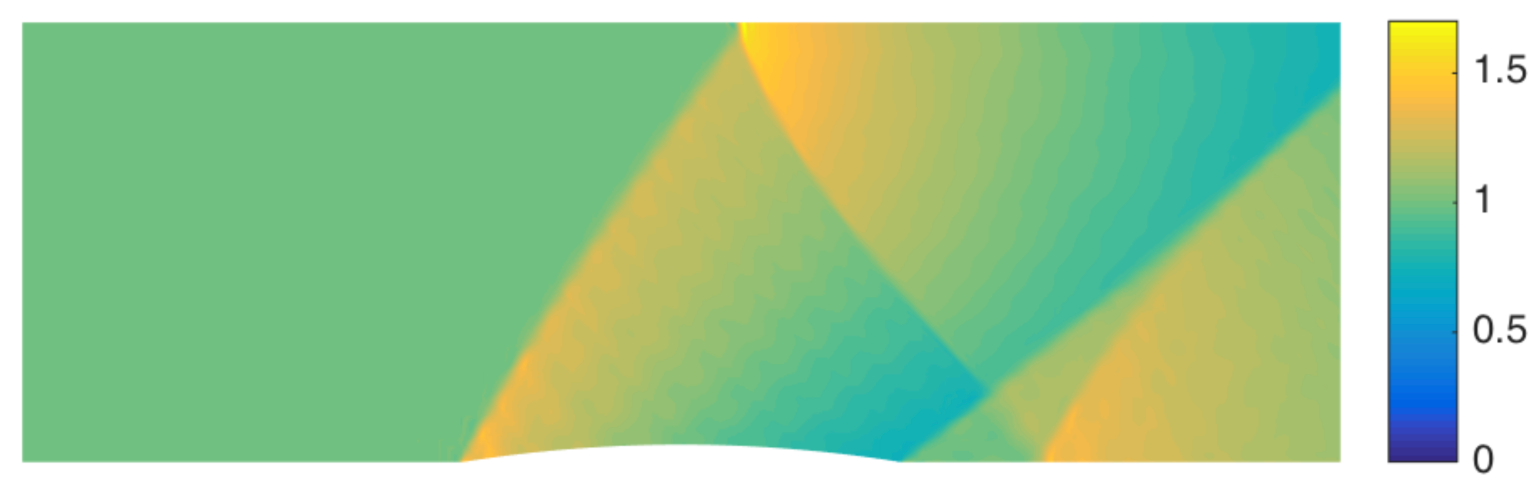}}
  \caption{Steady inviscid flow past a circular duct at $M_{\infty} = 1.4$. Figures show the Mach number (left) and non-dimensional density $\rho / \rho_{\infty}$ (right) for the entropy-variable IEDG method without (top) and with (bottom) shock capturing. No visual differences are observed between the entropy-variable HDG, IEDG and EDG solutions. Conservation-variable HDG, IEDG and EDG failed to converge for this problem without shock capturing.}\label{f:ductFigures}
 \end{figure}

%
%
%
%

\subsection{Shu vortex}



\subsubsection{Case description and numerical discretization}

The goal of this test problem is to investigate the stability of the scheme for the long time integration of vortex transport phenomena. To this end, we consider the two-dimensional inviscid vortex problem proposed by Shu\footnote{While this and other similar types of exact solutions of the Euler and Navier-Stokes equations have been known for a number of years (see, e.g., \cite{Colonius:1991}), we refer to this problem as the Shu vortex since, to our best knowledge, it was Shu \cite{Shu:98} that first proposed it to assess the advantages of high-order methods for long time simulations, and then other authors followed \cite{Castonguay:2011,Hesthaven:2008,Spiegel:2015,Vermeire:2013,Vincent:2011,Wang:2007,Wang:2013}.} \cite{Shu:98}. The vortex is homentropic, initially located at $(x,y)=(0,0)$ and advected downstream by the freestream velocity $V_{\infty}$. The exact solution is given by
\begin{subequations}
\begin{alignat}{1}
\rho & = \rho_{\infty} \Big( 1 - \psi^2 \, M_{\infty} \frac{\gamma-1}{16 \, \pi^2} \, \exp \big( 2 \, \big( 1-r^2 / L^2 \big) \big) \Big)^{1/(\gamma-1)} , \\
V_x & = V_{\infty} \Big( 1 - \psi \, \frac{y}{2 \pi L} \exp \big( 1- r^2 / L^2 \big) \Big) , \\
V_y & = V_{\infty} \, \psi \, \frac{x}{2 \pi L} \exp \big( 1- r^2 / L^2 \big) , \\
p & = \frac{\rho_{\infty}^{1-\gamma} \, V_{\infty}^2}{\gamma \, M_{\infty}^2} \, \rho^{\gamma} , 
\end{alignat}
\end{subequations}
where $r = \big( (x - V_{\infty} \, t)^2 + y^2 \big)^{1/2}$ denotes the distance to the vortex center, $\psi$ is the vortex strength, and $\rho_{\infty}$ and $M_{\infty}$ are the freestream density and Mach number. In particular, we set $\psi = 5$ and $M_{\infty} = \gamma^{-0.5}$, with $\gamma = 1.4$ as discussed before. This completes the non-dimensional description of the problem.

The Shu vortex is simulated in a doubly periodic square $\Omega = (-5 L , 5 L)^2$. The computational domain is partitioned into a triangular mesh obtained by splitting a uniform $10 \times 10$ Cartesian grid into $200$ triangles. Conservation-variable and entropy-variable HDG schemes are considered, and polynomials of degree $k=4$ are used to approximate the solution. The time-step size is $\Delta t = 0.05 \, L \, V_{\infty}^{-1}$ and the simulation is performed from $t_0 = 0$ to $t_f = 500 \, L \, V_{\infty}^{-1}$.

\subsubsection{Numerical results}

Figure \ref{f:timeEvolutions} shows the time evolution of the $L^2$ norm of the error in non-dimensional conservation variables and non-dimensional entropy variables, where $\rho_{\infty}$, $\rho_{\infty} V_{\infty}$ and $\rho_{\infty} V_{\infty}^2$ are used as reference density, momentum and total energy for the non-dimensionalization. 
The evolution of total thermodynamic entropy in the computational domain is shown at the bottom of the figure. Conservation-variable HDG breaks down at time $t = 74.9 \, L \, V_{\infty}^{-1}$, whereas entropy-variable HDG remains stable throughout the simulation. 
The total thermodynamic entropy is non-decreasing in entropy-variable HDG and non-increasing in conservation-variable HDG (i.e.\ the total generalized entropy is non-increasing and non-decreasing, respectively). That is, under-resolution in this problem makes the discretized system evolve through entropy-satisfying and entropy-violating paths for entropy and conservation variables, respectively; which provides critical insights on the mechanisms responsible for numerical instability in the latter case. Induced by the entropy-violating evolution, the error in the conservation-variable solution is much larger when measured in the $\bm{v}$-norm than in the $\bm{u}$-norm.

 \begin{figure}[t!]
 \centering
 {\includegraphics[width=0.65\textwidth]{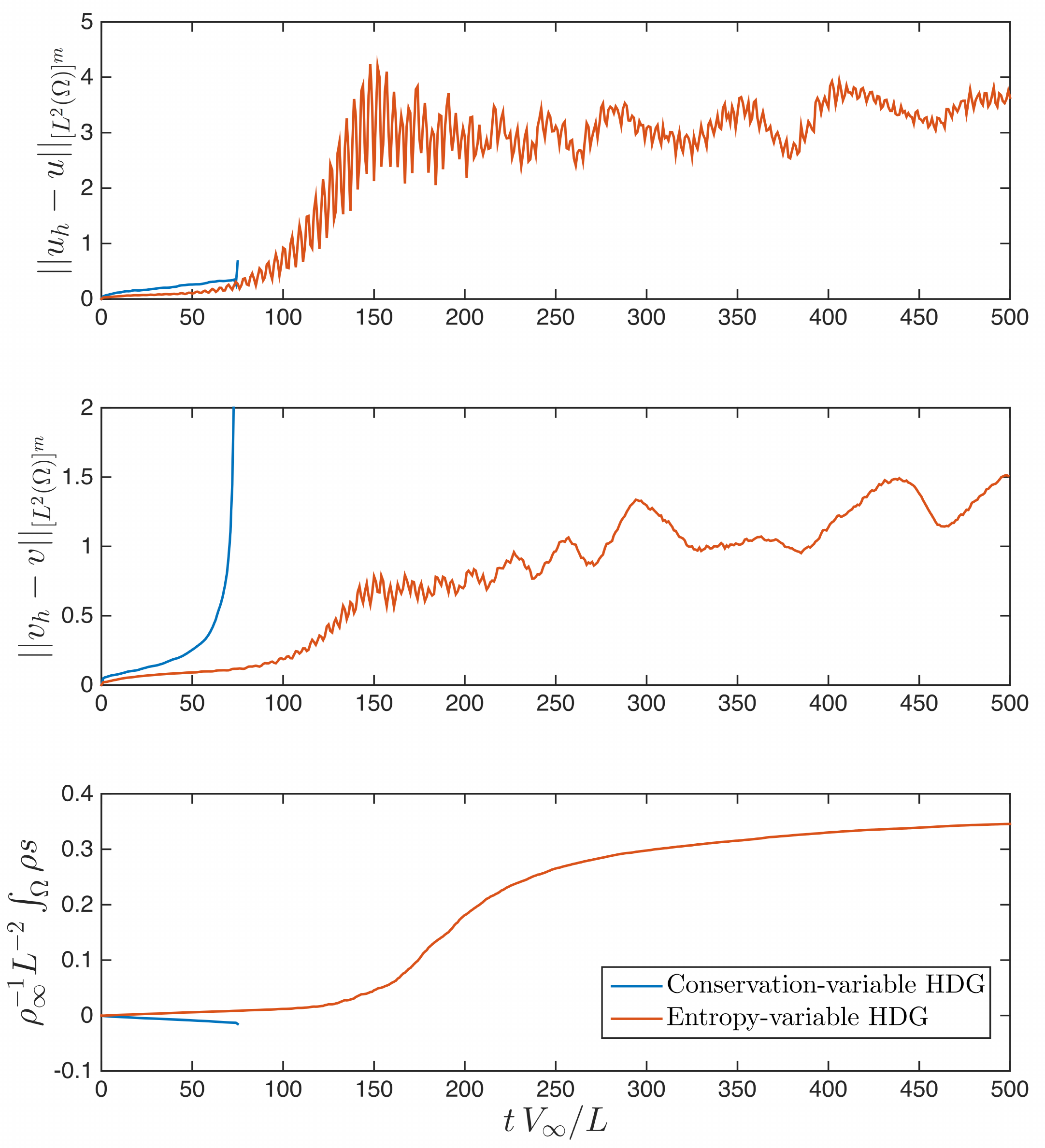}}
  \caption{
Time evolution of the $L^2$ norm of the error in non-dimensional conservation variables (top) and entropy variables (center), as well as the evolution of total thermodynamic entropy in the computational domain (bottom), for the Shu vortex. Conservation-variable and entropy-variable HDG schemes are considered. The subscript $h$ in the error norms denotes the numerical solution and no subscript denotes the exact solution. The initial entropy is used as baseline entropy $s_0$. We recall $s$ is already in non-dimensional form.}\label{f:timeEvolutions}
 \end{figure}

\subsection{\label{s:TGV}Inviscid compressible Taylor-Green vortex}

\subsubsection{Case description and numerical discretization}

The goal of this test case is to examine the stability of the scheme for severely under-resolved compressible flow simulations. To this end, we perform implicit large-eddy simulation (ILES) of the inviscid compressible Taylor-Green vortex (TGV) \cite{Taylor:37}. The TGV problem describes the evolution of the flow in a cubic domain $\Omega = (-L \pi, L \pi)^3$ with triple periodic boundaries, starting from the smooth initial condition
\begin{equation}
\label{initialCondTGV}
\begin{split}
\rho & = \rho_0 , \\
V_x & = V_0 \sin \Big( \frac{x}{L} \Big) \cos \Big( \frac{y}{L} \Big) \cos \Big( \frac{z}{L} \Big) , \\
V_y & = - V_0 \cos \Big( \frac{x}{L} \Big) \sin \Big( \frac{y}{L} \Big) \cos \Big( \frac{z}{L} \Big) , \\
V_z & = 0 , \\
p & = P_0 + \frac{\rho_0 \, V_0^2}{16} \ \bigg( \cos \Big( \frac{2x}{L} \Big) + \cos \Big( \frac{2y}{L} \Big) \bigg) \ \bigg( \cos \Big( \frac{2z}{L} \Big) + 2 \bigg) , 
\end{split}
\end{equation}
where $\rho_0$, $V_0$ and $P_0$ are positive constants. The large-scale eddy in the initial condition leads to smaller and smaller structures through vortex stretching, until the vortical structures eventually break down and the flow transitions to turbulence at $t \approx 8 \, L / V_0$. 
Due to the lack of viscous dissipation, the smallest turbulent length and time scales become arbitrarily small as time evolves. In particular, we consider the inviscid TGV at reference Mach number $M_0 = V_0 / c_0 = 0.8$, where $c_0$ is the speed of sound at temperature $T_0 = P_0 / (\gamma - 1) \, c_v \, \rho_0$. This completes the non-dimensional description of the problem.

For this problem, we consider conservation-variable and entropy-variable EDG schemes. 
In both cases, the computational domain is partitioned into a uniform $32 \times 32 \times 32$ Cartesian grid, and polynomials of degree $k=2$ are used to approximate the solution. The time-step size is set to $\Delta t = 5.71 \cdot 10^{-2} \, L \, V_0^{-1}$ and the numerical solution is computed from $t_0 = 0$ to $t_f = 40 \, L \, V_0^{-1}$. 
Three different phases can be distinguished in the simulation. Before $t \approx 3.5 \, L / V_0$, the flow is laminar and with no subgrid scales. This is followed by an under-resolved laminar phase (i.e.\ subgrid scales appear in the flow) that lasts until $t \approx 8 \, L / V_0$. From then on, the flow is turbulent and under-resolved.

\subsubsection{Numerical results}

Figure \ref{f:TGV_timeEvolutions_WVstudy} shows the temporal evolution of the mean thermodynamic entropy, mean-square vorticity, variance of temperature, and variance of dilatation from $t = 0$ to $t = 10 \, L \, V_0^{-1}$. The vorticity and dilatation are defined as $\bm{\omega} = \nabla \times \bm{V}$ and $\theta = \nabla \cdot \bm{V}$, respectively, and $\langle \, \cdot \, \rangle$ is used to denote spatial averaging. While $\langle \theta \rangle = 0$ in the exact solution due to periodicity in all directions, we note this does not hold exactly, and thus variance of dilatation and mean-square dilatation are different $\langle \theta' \, \theta' \rangle \neq \langle \theta \, \theta \rangle$, in the discrete solution. 
The conservation-variable EDG scheme is unstable and breaks down at $t \approx 7.12 \, L \, V_0^{-1}$, whereas entropy-variable EDG is stable throughout the simulation (from $t = 10 \, L \, V_0^{-1}$ onwards not shown here). 
From the results in Figure \ref{f:TGV_timeEvolutions_WVstudy}, we emphasize that:
\begin{enumerate}
\item When the exact solution does not contain subgrid scales and the simulation is well resolved, that is, before $t \approx 3.5 \, L / V_0$, the conservation-variable and entropy-variable solutions agree well with each other. As subgrid scales appear in the flow and the simulation becomes under-resolved, both numerical solutions start to differ from each other.

\item Before $t \approx 3.5 \, L / V_0$, the total entropy remains approximately constant in both simulations. The conservation-variable and entropy-variable schemes therefore succeed to detect there are no subgrid scales and do not numerically affect entropy under those conditions.

\item The entropy production is non-negligible when there are subgrid scales in the flow. In the entropy-variable scheme, under-resolution leads to an increase in total thermodynamic entropy (decrease in total generalized entropy), and in particular the entropy production is an increasing function of the energy contained in the subgrid scales, that we recall increases over time. In the conservation-variable scheme, however, under-resolution does not lead to an increase entropy. In particular, the numerical oscillations due to under-resolution, which are present in both schemes, are {\it less physical} (in the sense of being less consistent with the Second Law of Thermodynamics) with conservation variables than with entropy variables, and this eventually leads to the breakdown of the simulation.
\end{enumerate}

\begin{figure}[t!]
\centering
\includegraphics[width=1.0\textwidth]{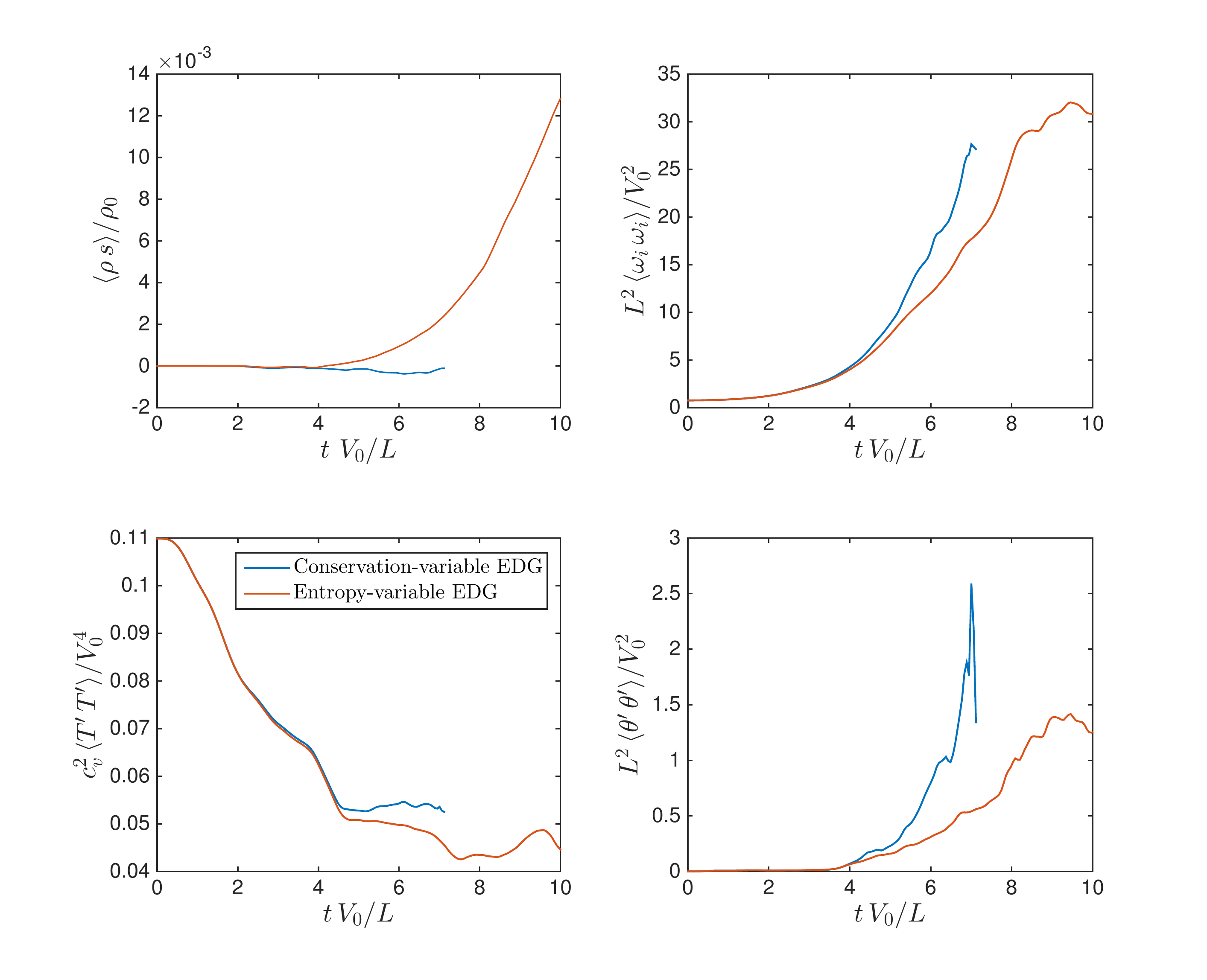}
\caption{\label{f:TGV_timeEvolutions_WVstudy} Temporal evolution of mean thermodynamic entropy, mean-square vorticity, temperature variance, and dilatation variance for the inviscid compressible Taylor-Green vortex. $\langle \, \cdot \, \rangle$ denotes spatial averaging. The mean initial entropy is used as baseline entropy $s_0$. We recall $s$ is already in non-dimensional form.}
\end{figure}

These three observations are justified as follows: On the one hand, if the exact solution does not contain subgrid scales, it is well represented in the approximation space and the scheme is in the asymptotic convergence regime. This implies the inter-element jumps in the numerical solution are small and in particular of order $\norm{ \llbracket \bm{v}_{h} \rrbracket } = \mathcal{O}(h^{k+1})$, where $h$ denotes the element size and $\llbracket \bm{v}_{h} \rrbracket = \bm{v}_{h}^{+} - \bm{v}_{h}^{-}$ is the inter-element jump. Since the amount of entropy introduced 
by entropy-variable schemes is of order $\mathcal{O}(\norm{ \llbracket \bm{v}_{h} \rrbracket }^2)$ (see Proposition \ref{semiDiscGlobEntrIdentity}), it is therefore negligible when the simulation is well resolved. This result applies also to conservation-variable schemes (see footnote \footnote{The amount of thermodynamic entropy introduced by conservation-variable schemes can be shown to be $\mathcal{O}(\norm{ \llbracket \bm{u}_{h} \rrbracket }^2)$ \cite{Fernandez:PhD:2018}. The key difference is that it cannot be shown to be positive in this case.}). On the other hand, if the exact solution contains subgrid scales (i.e.\ if the simulation is under-resolved), the inter-element jumps grow \cite{Fernandez:AIAA:17a,Fernandez:JCP_SGS:18c} and entropy-stable schemes lead to an increase in thermodynamic entropy, as given by Eq. \eqref{entrEqhDG}. Since the amount of entropy introduced by conservation-variable schemes cannot be shown to be positive, this property is not necessarily preserved with conservation variables. 
In fact, since arbitrary oscillations in density and pressure usually reduce the total thermodynamic entropy, it is not completely surprising that DG schemes that are not entropy stable decrease entropy in under-resolved simulations of compressible flows, in which density and pressure oscillate inside the elements.

Finally, we note that the numerical scheme increasing entropy in under-resolved computations is not only important for stability purposes, but also provides with a built-in (implicit) ``subgrid-scale model'' for large-eddy simulation of compressible flows.


\subsection{\label{s:CIT}Decay of compressible, homogeneous, isotropic turbulence}


\subsubsection{Case description and numerical discretization}

We perform implicit large-eddy simulation of the decay of compressible, homogeneous, isotropic turbulence with eddy shocklets \cite{Lee:1991}. The goal of this test problem is to investigate the stability and accuracy of the scheme for under-resolved simulations of compressible turbulence. The problem domain consists of a cube $\Omega = (-L \pi , L \pi)^3$ with triple periodic boundaries. The initial density, pressure and temperature fields are constant, and the initial velocity is solenoidal and with kinetic energy spectrum satisfying $E(k) \sim k^4 \exp [ - 2 \, (k / k_M)^2 ]$, where $k_M$ corresponds to the most energetic wavenumber and is set to $k_M = 4 / L$. The details of the procedure to generate the initial velocity field are described in \cite{Johnsen:2010}. The initial turbulent Mach number and Taylor-scale Reynolds number are
$$ \qquad M_{t,0} := \frac{\sqrt{ \langle V_{i,0} \, V_{i,0} \rangle}}{\langle c_0 \rangle} = 0.6 , \qquad \qquad Re_{\lambda,0} := \frac{\langle \rho_0 \rangle \, V_{rms,0} \, \lambda_0}{\langle \mu_0 \rangle} = 100 , $$
where the zero subscript denotes the initial value, $\langle \, \cdot \, \rangle$ denotes spatial averaging, and
$$ V_{rms} := \sqrt{\frac{\langle V_i \, V_i \rangle}{3}} , \qquad \qquad \lambda := \sqrt{\frac{ \langle V_1^2 \rangle }{ \langle (\partial_1 V_1)^2 \rangle}} $$
are the root mean square velocity and the Taylor microscale, respectively. Also, the dynamic viscosity is assumed to follow a power-law of the form
\begin{equation}
\mu = \mu_0 \, \bigg( \frac{T}{T_0} \bigg)^{3/4} . 
\end{equation}
This completes the non-dimensional description of the problem. Due to the imbalance in the initial condition, strong vortical, entropy and acoustic modes (i.e.\ all the compressible modes) develop and persist throughout the simulation. Weak shock waves (eddy shocklets) appear spontaneously from the turbulent motions as well.

The computational domain is discretized into a uniform $32 \times 32 \times 32$ Cartesian grid and polynomials of degree $k=2$ are used to approximate the solution; which leads to severe spatial under-resolution for this problem \cite{Hillewaert:2016,Johnsen:2010}. The simulation is performed from $t_0 = 0$ to $t_f = 4 \, \tau_0$ with time-step size $\Delta t = 1.183 \cdot 10^{-2} \, \tau_0$, where $\tau_0 = \lambda_0 / V_{rms,0}$ denotes the initial eddy turn-over time. This corresponds to a CFL number based on the initial mean-square velocity of $V_{rms,0} \, \Delta t / h = 0.02$. Conservation-variable and entropy-variable EDG schemes 
are considered.

\subsubsection{Numerical results}

Figure \ref{f:timeEvolutionsCIT_WVstudy} shows the temporal evolution of the mean-square velocity, mean-square vorticity, temperature variance, and dilatation variance for conservation-variable EDG, entropy-variable EDG, and the direct numerical simulation (DNS) data from Hillewaert {\it et al.} \cite{Hillewaert:2016}. 
The conservation-variable EDG scheme is unstable and breaks down at $t \approx 0.450 \, \tau_0$, whereas entropy-variable EDG is stable throughout the simulation. 
In addition, the entropy-variable scheme shows very good agreement with the DNS data, particularly when compared to the LES results obtained with other numerical schemes \cite{Hillewaert:2016,Johnsen:2010} and despite a slightly higher resolution was used in the simulations therein. The main discrepancy with DNS is observed for the time evolution of dilatation variance. We note, however, that the grid resolution $\hbar$ in DNS was 
such that the cell P\'eclet number $Pe_{\hbar,0} := \langle \rho_0 \rangle \, v_{rms,0} \, \hbar / \langle \mu_0 \rangle$ is approximately $3.3$. 
While this suffices to stabilize the shock waves, it may not suffice to accurately resolve them and it is therefore unclear whether the DNS results are grid converged. Some differences between unfiltered DNS solutions computed with a finite-volume code and a DG code are indeed reported in \cite{Hillewaert:2016}. 
In short, Figure \ref{f:timeEvolutionsCIT_WVstudy} shows the use of entropy variables stabilizes the scheme while having a small impact on the propagation of vortical, entropy and acoustic modes; which is critical for large-eddy simulation of compressible flows.

Figure \ref{f:numDissEntr_cHIT} shows the time evolution of total thermodynamic entropy in the computational domain as well as of the quantity
\begin{equation}
\label{e:PiS}
\Pi_{S} := \frac{d}{dt} \int_{\Omega} \rho s - \frac{1}{c_v} \int_{\Omega} \bigg( \frac{\Phi}{T} + \kappa \frac{\norm{\nabla T}^2}{T^2} \bigg) , 
\end{equation}
where $\Phi = \nabla \bm{V} : \bm{\tau}$ is the viscous dissipation of kinetic energy and $:$ is the Frobenius inner product of two matrices. 
The second term in the right-hand side of Eq. \eqref{e:PiS} 
corresponds to the generation of thermodynamic entropy due to physical mechanisms and is non-positive provided $\mu , \beta , \kappa \geq 0$. $\Pi_S$ is therefore the contribution to the entropy production due to the numerical scheme, and is referred to as the {\it numerical generation of entropy} (if positive) or {\it numerical destruction of entropy} (if negative). We note that $\Phi$ and $\nabla T$ can be computed using either $(\bm{v}_h , \nabla \bm{v}_h)$ or $(\bm{v}_h , \bm{q}_h)$. Since $\nabla \bm{v}_h$ converges suboptimally, whereas $\bm{q}_h$ converges optimally for some schemes within the hybridized DG family, as discussed in Section \ref{s:couette}, the latter approach is adopted here. From this figure, the total thermodynamic entropy in the domain increases over time (total generalized entropy decreases) both with conservation and entropy variables. Conservation variables, however, lead to very large numerical entropy destruction and this is in turn responsible for the simulation breakdown. This behavior is not observed with entropy variables. 

{\bf Remark:} For the Euler equations, $\Pi_S$ vanishes and entropy stability implies $\Pi_S$ is non-negative. For the Navier-Stokes equations, however, entropy stability is not a sufficient condition for $\Pi_S \geq 0$. In particular, it follows from the entropy stability proof for the Navier-Stokes equations in \ref{s:appProof} that non-negativity of $\Pi_S$ requires both entropy stability and
\begin{equation}
\label{e:tooBadWeDontHaveIt}
\int_{\mathcal{T}_h} \bm{q}_h^t \cdot \widetilde{\bm{\mathcal{K}}} (\bm{v}_h) \cdot \bm{q}_h \geq \frac{1}{c_v} \int_{\Omega} \bigg( \frac{\Phi}{T} + \kappa \frac{\norm{\nabla T}^2}{T^2} \bigg) . 
\end{equation}
While
\begin{equation}
\label{e:identity_}
\nabla \bm{v}^t \cdot \widetilde{\bm{\mathcal{K}}} (\bm{v}) \cdot \nabla \bm{v} = \frac{1}{c_v} \bigg( \frac{\Phi}{T} + \kappa \frac{\norm{\nabla T}^2}{T^2} \bigg) 
\end{equation}
holds pointwise as an identity for any physical and differentiable $\bm{v}$ field, Eq. \eqref{e:tooBadWeDontHaveIt} cannot be ensured regardless of whether $(\bm{v}_h , \nabla \bm{v}_h)$ or $(\bm{v}_h , \bm{q}_h)$ are used to compute $\Phi$ and $\nabla T$ (see footnote 
\footnote{For non-curved elements (i.e.\ $p=1$), it can be shown \cite{Fernandez:PhD:2018} that $\bm{q}_h = \nabla \bm{v}_h + \bm{\mathcal{R}}_{\partial \mathcal{T}_h} ( \bm{v}_h|_{\partial \mathcal{T}_h} - \widehat{\bm{v}}_h ; \bm{n})$, where $\bm{\mathcal{R}}_{\partial \mathcal{T}_h} : \bm{\mathcal{V}}_{h}^k|_{\partial \mathcal{T}_h} \to \bm{\mathcal{Q}}_h^k \ ; \ \bm{a} \mapsto \bm{\mathcal{R}}_{\partial \mathcal{T}_h}(\bm{a})$ is a lifting operator such that
$$ \int_{\partial \mathcal{T}_h} \bm{b}_{ij} \bm{a}_i \bm{n}_j = - \int_{\mathcal{T}_h} \bm{b}_{ij} [\bm{\mathcal{R}}_{\partial \mathcal{T}_h}(\bm{a})]_{ij} , \qquad \forall \bm{b} \in \bm{\mathcal{Q}}_h^k , $$
with $i = 1 , \dots , m$ and $j = 1 , \dots , d$. The existence and uniqueness of $\bm{\mathcal{R}}_{\partial \mathcal{T}_h}$ is given by the Riesz representation theorem. Using this result and Eq. \eqref{e:identity_}, it follows that
$$ \int_{\mathcal{T}_h} \bm{q}_h^t \cdot \widetilde{\bm{\mathcal{K}}} (\bm{v}_h) \cdot \bm{q}_h = \frac{1}{c_v} \int_{\Omega} \bigg( \frac{\Phi}{T} + \kappa \frac{\norm{\nabla T}^2}{T^2} \bigg) \bigg|_{(\bm{v}_h , \nabla \bm{v}_h)} + 2 \int_{\mathcal{T}_h} \bm{\mathcal{R}}_{\partial \mathcal{T}_h}^t \cdot \widetilde{\bm{\mathcal{K}}} (\bm{v}_h) \cdot \nabla \bm{v}_h + \int_{\mathcal{T}_h} \bm{\mathcal{R}}_{\partial \mathcal{T}_h}^t \cdot \widetilde{\bm{\mathcal{K}}} (\bm{v}_h) \cdot \bm{\mathcal{R}}_{\partial \mathcal{T}_h} . $$
This equality holds true and is to be compared to the desired inequality in Eq. \eqref{e:tooBadWeDontHaveIt}. For curved elements (i.e.\ $p \geq 2$), a similar result applies.
} for further discussion).

 \begin{figure}
 \centering
 {\includegraphics[width=1.0\textwidth]{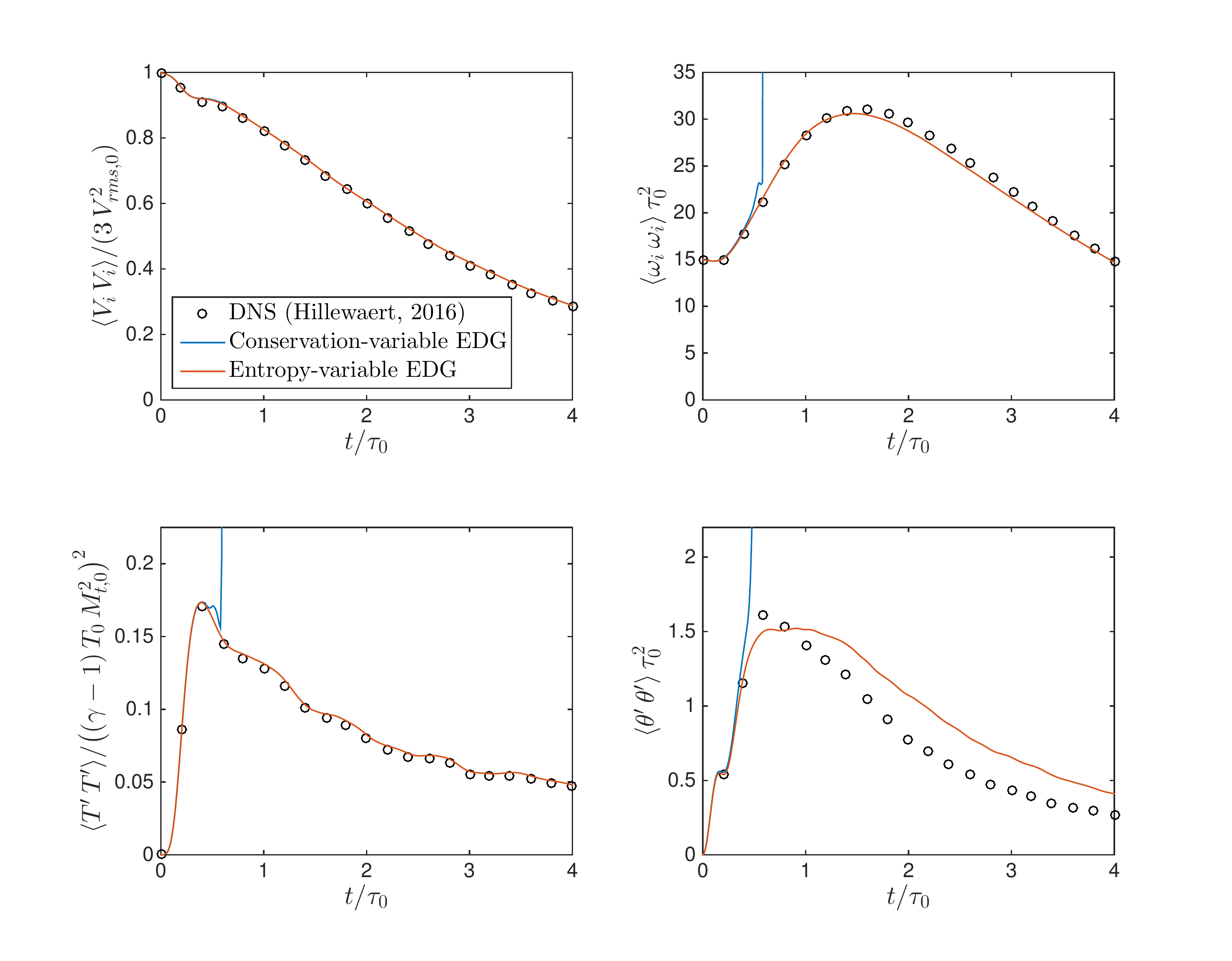}}
  \caption{Temporal evolution of mean-square velocity (top left), mean-square vorticity (top right), temperature variance (bottom left), and dilatation variance (bottom right) for the decay of compressible, homogeneous, isotropic turbulence. The zero subscript denotes the initial value and $\langle \, \cdot \, \rangle$ denotes spatial averaging.}\label{f:timeEvolutionsCIT_WVstudy}
 \end{figure}
 
  \begin{figure}
 \centering
 {\includegraphics[width=0.49\textwidth]{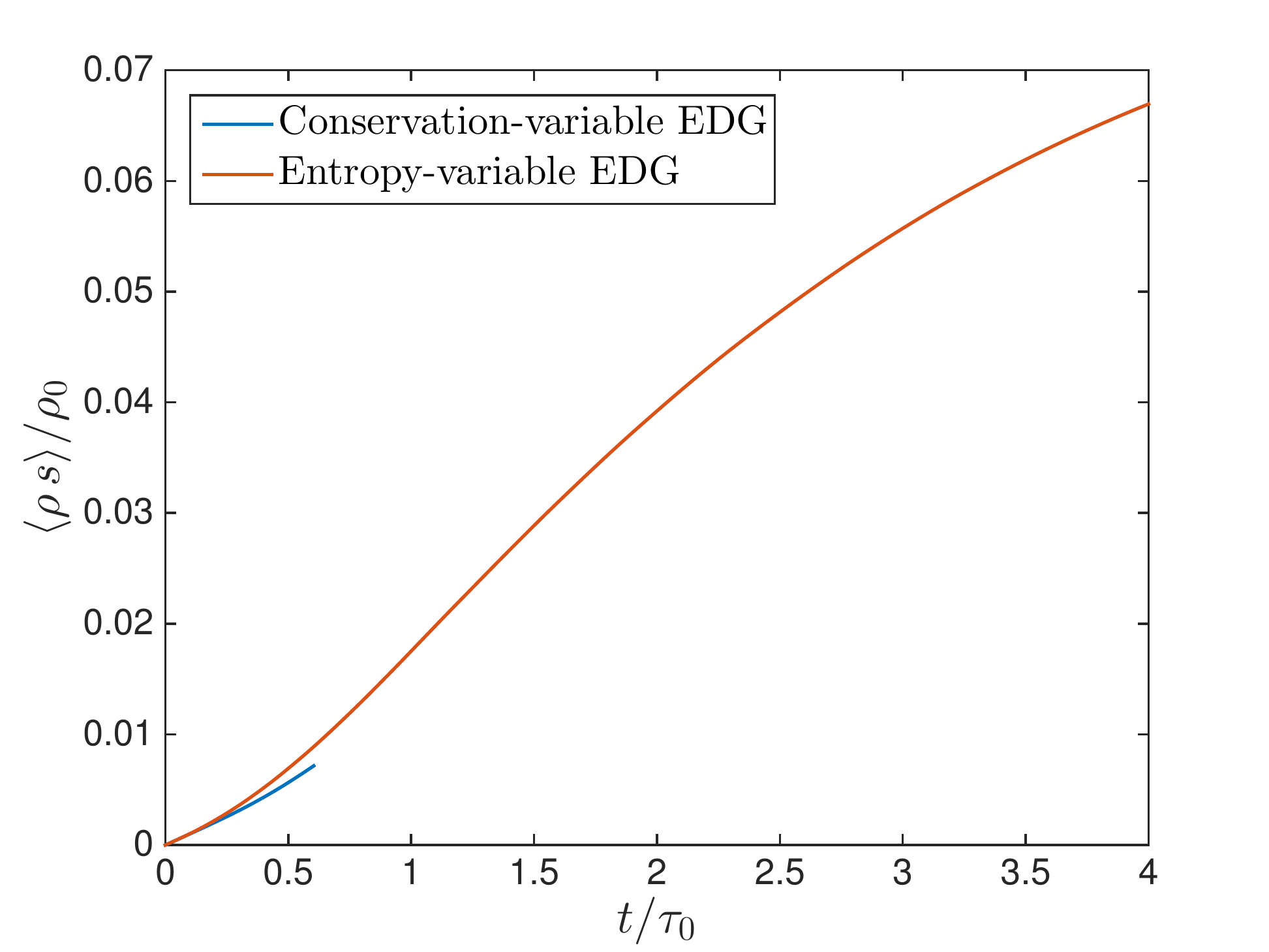}}
 \hfill {\includegraphics[width=0.49\textwidth]{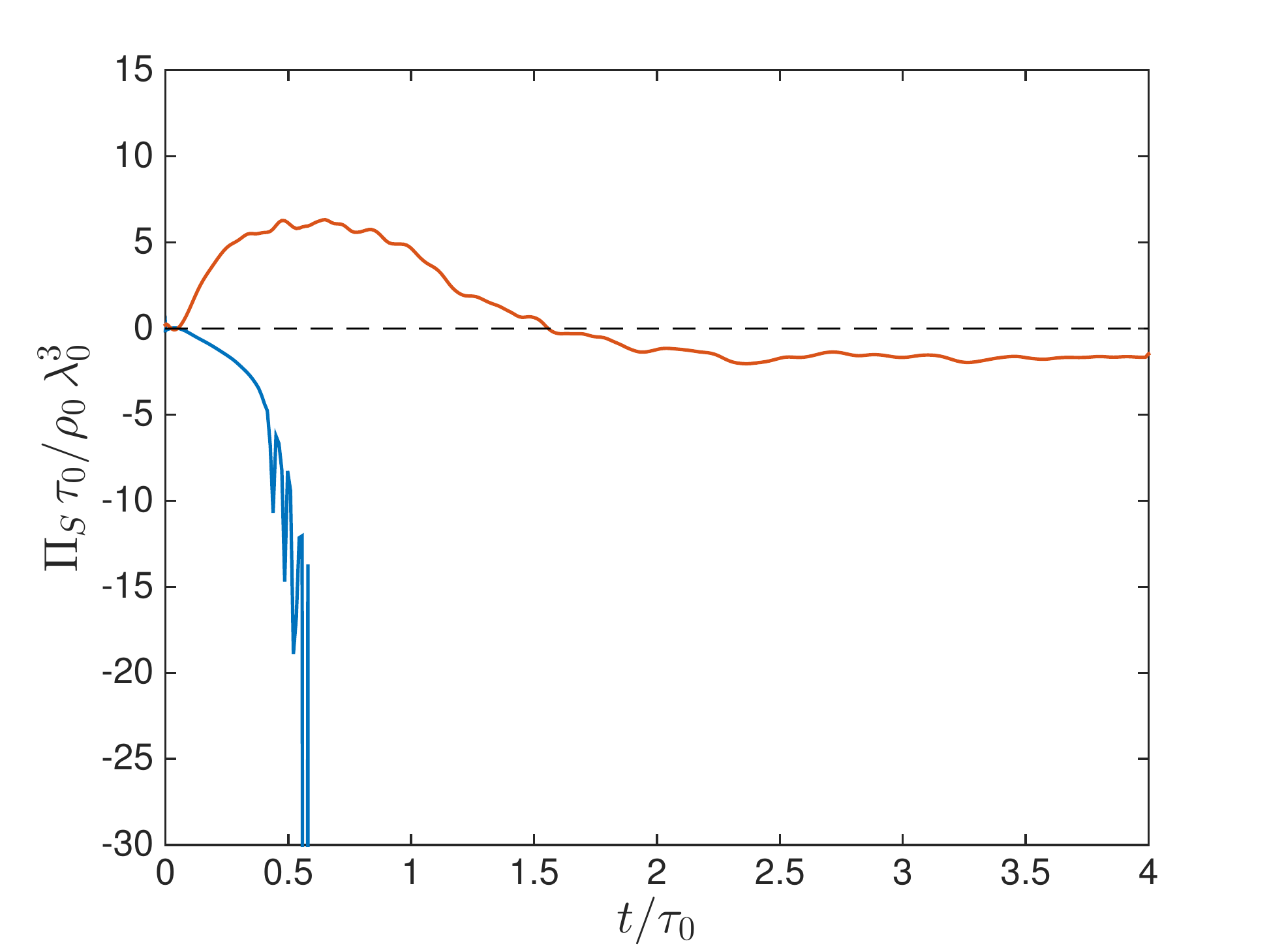}}
  \caption{Temporal evolution of mean thermodynamic entropy (left) and numerical generation of entropy as defined in Eq. \eqref{e:PiS} (right) for the decay of compressible, homogeneous, isotropic turbulence. $\langle \, \cdot \, \rangle$ denotes spatial averaging. The mean initial entropy is used as baseline entropy $s_0$. We recall $s$ is already in non-dimensional form.}\label{f:numDissEntr_cHIT}
 \end{figure}

\section{\label{s:conclusions}Conclusions}

We presented the entropy-variable hybridized DG methods for the compressible Euler and Navier-Stokes equations. The proposed schemes display optimal accuracy order and have important advantages over existing DG methods. First, as hybridized DG methods, they result in significantly fewer globally coupled degrees of freedom and number of nonzero entries in the Jacobian matrix than in standard DG methods; which allows for more computationally efficient implementations, both in terms of flop count and memory requirements. Second, as entropy-stable schemes, they lead to improved robustness and improved accuracy in under-resolved simulations of compressible flows with respect to their conservation-variable counterparts, as illustrated through a number of steady and unsteady flow problems in subsonic, transonic, and supersonic regimes.



\section*{Acknowledgments}

The authors acknowledge the Air Force Office of Scientific Research (FA9550-16-1-0214), the National Aeronautics and Space Administration (NASA NNX16AP15A) and Pratt \& Whitney for supporting this effort. The first author also acknowledges the financial support from the Zakhartchenko and ``la Caixa'' Fellowships.

\appendix

\section{\label{s:appProof}Proof of Proposition \ref{semiDiscGlobEntrIdentity}, Theorem \ref{mainTh}, Proposition \ref{semiDiscGlobEntrIdentityNS} and Theorem \ref{mainThNS}}

\subsection{Supporting lemmas}

\begin{lemma}[Jump entropy identity]
\label{lemmaInterElemJump}
For any $\bm{n} \in \mathbb{R}^d$ and any pair of physical states $\bm{v}_1, \bm{v}_2 \in X_v$, the following identity holds:
\begin{equation}
\begin{split}
\label{e:lemmaInterElemJump}
- & \big[ \bm{\mathcal{F}}_{n} \big] _{\bm{v}_{1}}^{\bm{v}_{2}} + \frac{1}{2} \big( \bm{v}_{1} + \bm{v}_{2} \big)^t \cdot \big[ \bm{F}_n \big]_{\bm{v}_1}^{\bm{v}_2} = \frac{1}{2} \big( \bm{v}_2 - \bm{v}_1 \big)^t \cdot \bm{\Sigma} (\bm{v}_1 , \bm{v}_2 ; \bm{n} ) \cdot \big( \bm{v}_2 - \bm{v}_1 \big) . 
\end{split}
\end{equation}
We recall that $\bm{\Sigma}(\bm{v}_1 , \bm{v}_2 ; \bm{n})$ was defined in Eq. \eqref{e:Lambda_n} as
\begin{equation}\tag{\ref{e:Lambda_n}}
\begin{split}
\bm{\Sigma} (\bm{v}_1 , \bm{v}_2 ; \bm{n}) := & \int_0^1 (1 - \theta) \, \Big( \tilde{\bm{A}}_n \big(\tilde{\bm{v}}(\theta ; \bm{v}_1 , \bm{v}_2) ; \bm{n} \big) - \tilde{\bm{A}}_n \big(\tilde{\bm{v}}(\theta ; \bm{v}_2 , \bm{v}_1) ; \bm{n} \big) \Big) \, d \theta \\
= & \int_0^1 (1 - 2 \theta) \ \tilde{\bm{A}}_n \big(\tilde{\bm{v}}(\theta ; \bm{v}_1 , \bm{v}_2) ; \bm{n} \big) \, d \theta = - \int_0^1 (1 - 2 \theta) \ \tilde{\bm{A}}_n \big(\tilde{\bm{v}}(\theta ; \bm{v}_2 , \bm{v}_1) ; \bm{n} \big) \, d \theta . 
\end{split}
\end{equation}
\end{lemma}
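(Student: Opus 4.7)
The plan is to reduce the endpoint expression on the left to a path integral, factor the integrand, and recognize the result as the $\bm{\Sigma}$-quadratic form on the right. The analytic engine is the entropy-flux compatibility identity from Eq.~\eqref{e:vTidentity}, which, after contraction with $\bm{n}$ and use of the chain rule, gives the infinitesimal identities $\tilde{\bm{A}}_n(\bm{v}) = \partial \bm{F}_n/\partial\bm{v}$ and $\bm{v}^t \tilde{\bm{A}}_n(\bm{v}) = \partial \mathcal{F}_n/\partial \bm{v}$. Both of these are available because the entropy symmetrization makes $\tilde{\bm{A}}_n$ symmetric and renders $\bm{F}_n$ the gradient of an entropy-flux potential.

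First I would differentiate $\bm{F}_n$ and $\bm{\mathcal{F}}_n$ along the connecting path $\tilde{\bm{v}}(\theta;\bm{v}_1,\bm{v}_2)$ and apply the fundamental theorem of calculus to obtain
\begin{align*}
[\bm{F}_n]_{\bm{v}_1}^{\bm{v}_2} &= \int_0^1 \tilde{\bm{A}}_n\bigl(\tilde{\bm{v}}(\theta)\bigr)\,\tilde{\bm{v}}'(\theta)\,d\theta, \\
[\bm{\mathcal{F}}_n]_{\bm{v}_1}^{\bm{v}_2} &= \int_0^1 \tilde{\bm{v}}(\theta)^t\,\tilde{\bm{A}}_n\bigl(\tilde{\bm{v}}(\theta)\bigr)\,\tilde{\bm{v}}'(\theta)\,d\theta,
\end{align*}
where the second line uses the entropy-flux compatibility in the form $d\mathcal{F}_n/d\theta = \bm{v}^t \tilde{\bm{A}}_n \tilde{\bm{v}}'$ rather than a bare chain-rule identity. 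Substituting both expressions into the left-hand side of \eqref{e:lemmaInterElemJump} and combining under a single integral yields
$$\mathrm{LHS} = \int_0^1 \bigl[\tfrac{1}{2}(\bm{v}_1+\bm{v}_2) - \tilde{\bm{v}}(\theta)\bigr]^t \tilde{\bm{A}}_n\bigl(\tilde{\bm{v}}(\theta)\bigr)\,\tilde{\bm{v}}'(\theta)\,d\theta.$$

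Next I would specialize to the straight-line parametrization $\tilde{\bm{v}}(\theta) = \bm{v}_1 + \theta(\bm{v}_2-\bm{v}_1)$, for which $\tilde{\bm{v}}'(\theta) = \bm{v}_2-\bm{v}_1$ and $\tfrac{1}{2}(\bm{v}_1+\bm{v}_2) - \tilde{\bm{v}}(\theta) = \tfrac{1}{2}(1-2\theta)(\bm{v}_2-\bm{v}_1)$. The integrand factors cleanly as $\tfrac{1}{2}(1-2\theta)(\bm{v}_2-\bm{v}_1)^t\tilde{\bm{A}}_n(\tilde{\bm{v}}(\theta))(\bm{v}_2-\bm{v}_1)$; pulling the constant vectors out of the integral and invoking the second representation of $\bm{\Sigma}$ in Eq.~\eqref{e:Lambda_n} produces $\mathrm{LHS} = \tfrac{1}{2}(\bm{v}_2-\bm{v}_1)^t\bm{\Sigma}(\bm{v}_1,\bm{v}_2;\bm{n})(\bm{v}_2-\bm{v}_1)$, which is exactly the right-hand side of \eqref{e:lemmaInterElemJump}.

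The main step requiring care is the identity for $[\bm{\mathcal{F}}_n]$: it is not a pure fundamental-theorem statement but rests on the entropy-flux compatibility identity, and thereby on the symmetrization of the system. Once the two line integrals are in hand, everything else is algebra. If one wishes to dispense with the straight-line parametrization (since $X_v$ need not be convex, so the segment may leave $X_v$), I would recast $\mathrm{LHS}$ through the entropy-flux potential $\varphi_n$ with $\bm{F}_n = \partial\varphi_n/\partial\bm{v}$ to obtain the manifestly endpoint-only form $\mathrm{LHS} = [\varphi_n]_{\bm{v}_1}^{\bm{v}_2} - \tfrac{1}{2}(\bm{v}_2-\bm{v}_1)^t\bigl(\bm{F}_n(\bm{v}_1)+\bm{F}_n(\bm{v}_2)\bigr)$, and then recover the $\bm{\Sigma}$-quadratic form by applying Taylor's theorem with integral remainder to $\varphi_n$ along the chosen path.
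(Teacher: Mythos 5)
Your proof is correct, and it is essentially the argument the paper leaves implicit: the paper's own ``proof'' of Lemma~\ref{lemmaInterElemJump} is a one-line appeal to Lemma~11 of \cite{Barth:99}, and your derivation---the fundamental theorem of calculus applied to $\bm{F}_n$ and $\bm{\mathcal{F}}_n$ along the connecting path, with $\partial \bm{F}_n/\partial\bm{v}=\tilde{\bm{A}}_n$ and $\partial\mathcal{F}_n/\partial\bm{v}=\tilde{\bm{A}}_n^t\bm{v}$ supplied by the compatibility relation \eqref{e:entropyFluxIdentity} and the chain rule, followed by the factorization $\tfrac{1}{2}(\bm{v}_1+\bm{v}_2)-\tilde{\bm{v}}(\theta)=\tfrac{1}{2}(1-2\theta)(\bm{v}_2-\bm{v}_1)$---is precisely the standard proof of that cited lemma, extended to an arbitrary direction $\bm{n}$. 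Two of your side remarks deserve comment. First, specializing to the affine path is not a convenience but a necessity: the quadratic form generated by $\bm{\Sigma}$ is parametrization-dependent, whereas the left-hand side of \eqref{e:lemmaInterElemJump} depends only on the endpoints. For instance, in the scalar model with $\tilde{A}_n(v)=v$, $v_1=0$, $v_2=1$, the paths $\tilde{v}(\theta)=\theta$ and $\tilde{v}(\theta)=\theta^3$ give $\int_0^1(1-2\theta)\,\tilde{A}_n(\tilde{v}(\theta))\,d\theta = -1/6$ and $-3/20$, respectively, so the identity can hold for at most one of them. Hence the identity---and, despite the paper's remark in Section~\ref{s:notationStab} that ``the results hereinafter hold for any choice of path,'' the definition \eqref{e:Lambda_n} itself---must be read with the straight-line path, exactly as in \cite{Barth:99}; your proof gets this right where the paper is loose. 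Second, your worry that the segment might leave $X_v$ is moot for the system at hand: for the ideal-gas Euler equations with $H=-\rho s$ and the mapping given in the paper, $X_v$ is the open half-space $\{\bm{v}\in\mathbb{R}^m : v_m<0\}$, which is convex, so the affine path between physical states is always admissible. Consequently your potential-based fallback is unnecessary, and one should not expect it to deliver the $\bm{\Sigma}$-form along a genuinely different path, since by the counterexample above the identity fails for such paths.
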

\begin{proof} 
Equation \eqref{e:lemmaInterElemJump} is a trivial generalization of Lemma 11 in \cite{Barth:99}. The last two equalities in Eq. \eqref{e:Lambda_n} follow from the change of variable $\theta' = 1 - \theta$ applied to the second and first terms in the integrand, respectively.
\end{proof}

\begin{corol}[Pointwise entropy production due to face jumps]
\label{corolLemmaInterElemJump} 
Let $F \in \partial \mathcal{T}_h$ be a face, $\bm{n} \in \mathbb{R}^d$ be the unit normal vector to $F$ pointing outwards from the element, and $\widehat{\bm{v}}_h (\bm{x}), \bm{v}_h (\bm{x}) \in X_v$ be physical states for all $\bm{x}$ in $F$. The following identity holds pointwise on $F$:
\begin{equation}
\begin{split}
\label{e:corolLemmaInterElemJump3}
- & \big[ \bm{\mathcal{F}}_{n} \big] _{\widehat{\bm{v}}_h}^{\bm{v}_h} + \frac{1}{2} \big( \widehat{\bm{v}}_h + \bm{v}_h \big)^t \cdot \big[ \bm{F}_{n} \big]_{\widehat{\bm{v}}_h}^{\bm{v}_h} = \frac{1}{2} \big( \bm{v}_h - \widehat{\bm{v}}_h \big)^t \cdot \bm{\Sigma} (\widehat{\bm{v}}_h , \bm{v}_h ; \bm{n} ) \cdot \big( \bm{v}_h - \widehat{\bm{v}}_h \big) . 
\end{split}
\end{equation}
\end{corol}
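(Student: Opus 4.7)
The plan is to apply Lemma \ref{lemmaInterElemJump} pointwise at each $\bm{x}\in F$. By hypothesis of the corollary, for every such $\bm{x}$ the two values $\widehat{\bm{v}}_h(\bm{x})$ and $\bm{v}_h(\bm{x})$ lie in $X_v$, and the vector $\bm{n}\in\mathbb{R}^d$ is the fixed unit normal to $F$. Hence, fixing $\bm{x}$ and setting $\bm{v}_1:=\widehat{\bm{v}}_h(\bm{x})$ and $\bm{v}_2:=\bm{v}_h(\bm{x})$, the hypotheses of Lemma \ref{lemmaInterElemJump} are satisfied; substituting this specific pair together with $\bm{n}$ into Eq.~\eqref{e:lemmaInterElemJump} yields Eq.~\eqref{e:corolLemmaInterElemJump3} at the point $\bm{x}$.

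Because both sides of Eq.~\eqref{e:lemmaInterElemJump} depend on $(\bm{v}_1,\bm{v}_2)$ only through their values at a single point, and involve no spatial differentiation or integration in the physical variable $\bm{x}$, the identity transfers immediately from the constant-state case of the lemma to the pointwise-on-$F$ case of the corollary. The only auxiliary check required is that, at each $\bm{x}$, a smooth path $\tilde{\bm{v}}(\theta;\widehat{\bm{v}}_h(\bm{x}),\bm{v}_h(\bm{x}))$ connecting the two states inside $X_v$ exists so that $\bm{\Sigma}(\widehat{\bm{v}}_h(\bm{x}),\bm{v}_h(\bm{x});\bm{n})$ is well-defined; this is guaranteed by the fact that $X_v$ is open and connected (as noted at the start of Section \ref{s:notationStab}), and the value of $\bm{\Sigma}$ is independent of the particular path chosen.

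There is no substantive obstacle to overcome: the corollary is a direct pointwise specialization of the preceding lemma, with no additional computation, regularity argument, or structural insight required beyond verifying the admissibility of the pointwise data on $F$.
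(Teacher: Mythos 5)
Your proof is correct and matches the paper's (implicit) treatment: the corollary is stated in the paper without a separate proof precisely because it is the direct pointwise specialization of Lemma \ref{lemmaInterElemJump} with $\bm{v}_1=\widehat{\bm{v}}_h(\bm{x})$, $\bm{v}_2=\bm{v}_h(\bm{x})$, exactly as you argue. One minor precision: what is path-independent is the quadratic form $(\bm{v}_2-\bm{v}_1)^t\cdot\bm{\Sigma}\cdot(\bm{v}_2-\bm{v}_1)$ (since it equals the path-independent left-hand side), not necessarily the matrix $\bm{\Sigma}$ itself, but this does not affect the validity of your argument since only the quadratic form enters the identity.
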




\subsection{Proof of Proposition \ref{semiDiscGlobEntrIdentity}}


Let $(\bm{v}_h(t),\widehat{\bm{v}}_h(t))$ denote the numerical solution at time $t > 0$, and let $a_h(\bm{v}_h,\widehat{\bm{v}}_h,\partial\bm{v}_h / \partial t; \bm{w})$ and $b_h(\bm{v}_h,\widehat{\bm{v}}_h;\bm{\mu})$ be shorthand notations for the left-hand sides in Equations \eqref{e:hDG1} and \eqref{e:hDG2}, respectively. Note we have omitted, and will omit hereinafter, the time dependency of the solution to simplify the notation. Integrating the inviscid flux term in $a_h$ by parts\footnote{We can integrate by parts since, first, $\mathcal{T}_h$ is Lipschitz and $\bm{\mathcal{V}}_h^k \subset \mathcal{C}^{\infty} (\mathcal{T}_h ; \mathbb{R}^m)$ for non-singular elements, and, second, $\bm{F} \in \mathcal{C}^{\infty} (X_v ; \mathbb{R}^{m \times d})$ for physical states $\bm{v} \in X_v \implies \bm{F}(\bm{v}_h) \in \mathcal{C}^{\infty} (\mathcal{T}_h  ; \mathbb{R}^{m \times d})$ for physical solutions and non-singular elements.}, these maps can be rewritten as
\begin{subequations}
\begin{alignat}{2}
a_h(\bm{v}_h,\widehat{\bm{v}}_h,\partial \bm{v}_h / \partial t;\bm{w}) & = \int_{\mathcal{T}_h} \bm{w}^t \cdot \frac{\partial \bm{u}(\bm{v}_h)}{\partial t} + \int_{\mathcal{T}_h} \bm{w}^t \cdot \big( \nabla \cdot \bm{F} (\bm{v}_h) \big) + \int_{\partial \mathcal{T}_h} \bm{w}^t \cdot \big( \widehat{\bm{f}}_h - \bm{F}_n (\bm{v}_h) \big) , \\
b_h(\bm{v}_h,\widehat{\bm{v}}_h;\bm{\mu}) & = \int_{\partial \mathcal{T}_h} \bm{\mu}^t \cdot \widehat{\bm{f}}_h - \int_{\partial \Omega} \bm{\mu}^t \cdot \big( \widehat{\bm{f}}_h - \widehat{\bm{b}}_h (\widehat{\bm{v}}_h,\bm{v}_h ; \bm{v}^{\partial \Omega}) \big) . 
\end{alignat}
\end{subequations}
Since ({\em i}) Equations \eqref{e:hDG1} and \eqref{e:hDG2} hold for all $(\bm{w},\bm{\mu}) \in \bm{\mathcal{V}}_h^k \otimes \bm{\mathcal{M}}_h^k$, and ({\em ii}) the approximation and test spaces are the same, it follows that $a_h(\bm{v}_h,\widehat{\bm{v}}_h,\partial \bm{v}_h / \partial t;\bm{v}_h) - b_h(\bm{v}_h,\widehat{\bm{v}}_h;\widehat{\bm{v}}_h) = 0$, that is,
\begin{equation}
\label{e:tmp1}
\begin{split}
0 & = \int_{\mathcal{T}_h} \bm{v}_h^t \cdot \frac{\partial \bm{u}(\bm{v}_h)}{\partial t} + \int_{\mathcal{T}_h} \bm{v}_h^t \cdot \big( \nabla \cdot \bm{F} (\bm{v}_h) \big) + \int_{\partial \mathcal{T}_h} \bm{v}_h^t \cdot \big( \widehat{\bm{f}}_h - \bm{F}_n (\bm{v}_h) \big) - \int_{\partial \mathcal{T}_h} \widehat{\bm{v}}_h^t \cdot \widehat{\bm{f}}_h \\
& + \int_{\partial \Omega} \widehat{\bm{v}}_h^t \cdot \big( \widehat{\bm{f}}_h - \widehat{\bm{b}}_h (\widehat{\bm{v}}_h,\bm{v}_h ; \bm{v}^{\partial \Omega}) \big) \\
& = \frac{d}{dt} \int_{\mathcal{T}_h} H(\bm{v}_h) + \int_{\mathcal{T}_h} \nabla \cdot \bm{\mathcal{F}} (\bm{v}_h) + \int_{\partial \mathcal{T}_h} \frac{1}{2} \, \bm{v}_h^t \cdot \big( \bm{F}_n(\widehat{\bm{v}}_h) - \bm{F}_n(\bm{v}_h) \big) + \frac{1}{2} \int_{\partial \mathcal{T}_h} \bm{v}_h^t \cdot \bm{{\sigma}}(\widehat{\bm{v}}_h , \bm{v}_h ; \bm{n}) \cdot \big( \bm{v}_h - \widehat{\bm{v}}_h \big) \\
& - \int_{\partial \mathcal{T}_h} \frac{1}{2} \, \widehat{\bm{v}}_h^t \cdot \big( \bm{F}_n(\bm{v}_h) + \bm{F}_n(\widehat{\bm{v}}_h) \big) - \frac{1}{2} \int_{\partial \mathcal{T}_h} \widehat{\bm{v}}_h^t \cdot \bm{{\sigma}}(\widehat{\bm{v}}_h , \bm{v}_h ; \bm{n}) \cdot \big( \bm{v}_h - \widehat{\bm{v}}_h \big) \\
& + \int_{\partial \Omega} \widehat{\bm{v}}_h^t \cdot \big( \widehat{\bm{f}}_h - \widehat{\bm{b}}_h (\widehat{\bm{v}}_h,\bm{v}_h ; \bm{v}^{\partial \Omega}) \big) , 
\end{split}
\end{equation}
where we have used the identities in Eq. \eqref{e:vTidentity}, the definition of the inviscid numerical flux \eqref{numericalFlux}, and the fact that the mesh $\mathcal{T}_h$ is stationary. 
Applying the divergence theorem and noting that
\begin{equation}
\int_{\partial \mathcal{T}_h \backslash \partial \Omega} \bm{\mathcal{F}}_n (\widehat{\bm{v}}_h) = 0 , \qquad \qquad \int_{\partial \mathcal{T}_h \backslash \partial \Omega} \widehat{\bm{v}}_h^t \cdot \bm{F}_n (\widehat{\bm{v}}_h) = 0 
\end{equation}
due to $\pm$ duplication of interior faces in $\partial \mathcal{T}_h \backslash \partial \Omega$, Eq. \eqref{e:tmp1} can be written as
\begin{equation}
\begin{split}
0 & = \frac{d}{dt} \int_{\mathcal{T}_h} H(\bm{v}_h) - \int_{\partial \mathcal{T}_h} \big( \bm{\mathcal{F}}_n (\widehat{\bm{v}}_h) - \bm{\mathcal{F}}_n (\bm{v}_h) \big) + \int_{\partial \mathcal{T}_h} \frac{1}{2} \, \big( \bm{v}_h + \widehat{\bm{v}}_h \big) ^t \cdot \big( \bm{F}_n(\widehat{\bm{v}}_h) - \bm{F}_n(\bm{v}_h) \big) \\
& + \frac{1}{2} \int_{\partial \mathcal{T}_h} \big( \bm{v}_h - \widehat{\bm{v}}_h \big)^t \cdot \bm{{\sigma}}(\widehat{\bm{v}}_h , \bm{v}_h ; \bm{n}) \cdot \big( \bm{v}_h - \widehat{\bm{v}}_h \big) \\
& + \int_{\partial \Omega} \bm{\mathcal{F}}_n (\widehat{\bm{v}}_h) - \int_{\partial \Omega} \widehat{\bm{v}}_h^t \cdot \bm{F}_n(\widehat{\bm{v}}_h) + \int_{\partial \Omega} \widehat{\bm{v}}_h^t \cdot \big( \widehat{\bm{f}}_h - \widehat{\bm{b}}_h (\widehat{\bm{v}}_h,\bm{v}_h ; \bm{v}^{\partial \Omega}) \big) . 
\end{split}
\end{equation}
The desired result then readily follows by applying Corollary \ref{corolLemmaInterElemJump}.

\subsection{\label{s:EulerProof}Proof of Theorem \ref{mainTh}}

Equation \eqref{e:entropyStability} is a corollary of Proposition \ref{semiDiscGlobEntrIdentity}. The upper bound in Eq. \eqref{e:entropyBound} then trivially follows from the Newton-Leibniz formula and positivity of integration. For the lower bound, let us consider the Taylor series with integral remainder of the entropy function between the states $\bm{u}^*(\bm{v}_h)$ and $\bm{u}(\bm{v}_h)$, namely,
\begin{equation}
\label{e:intRem}
H(\bm{u}) = H(\bm{u}^*) + \bigg( \frac{\partial H(\bm{u}^*)}{\partial \bm{u}} \bigg) ^t \cdot (\bm{u} - \bm{u}^*) + \int_0^1 (1-\theta) \, (\bm{u} - \bm{u}^*)^t \cdot \frac{\partial^2 H ( \bm{u}^* + \theta \, (\bm{u}-\bm{u}^*) )}{\partial \bm{u}^2} \cdot (\bm{u} - \bm{u}^*) \, d \theta . 
\end{equation}
Note that, for physical solutions, $H$ is sufficiently regular for Eq. \eqref{e:intRem} to hold everywhere in $\mathcal{T}_h$. Integrating \eqref{e:intRem} over $\mathcal{T}_h$, the second term in the right-hand side vanishes by definition of $\bm{u}^*$. It then follows from convexity of $H$ that
\begin{equation}
\int_{\mathcal{T}_h} H ( \bm{u}^*(\bm{v}_h(t)) ) \leq \int_{\mathcal{T}_h} H ( \bm{u}(\bm{v}_h(t)) ) . 
\end{equation}
The lower bound is finally established by noting that $\bm{u}^*(t)$ is constant in time, and in particular equal to $\bm{u}^*(\bm{v}_{h,0})$, since the entropy-variable hybridized DG methods are $\bm{u}$-conservative by construction, as discussed in Section \ref{s:hDG_euler}.

\subsection{\label{s:NSproof_timeEvol}Proof of Proposition \ref{semiDiscGlobEntrIdentityNS}}

Proposition \ref{semiDiscGlobEntrIdentityNS} is an extension of Proposition \ref{semiDiscGlobEntrIdentity}. In particular, let $c_h(\bm{v}_h,\bm{q}_h,\widehat{\bm{v}}_h,\partial \bm{v}_h / \partial t; \bm{w})$ and $d_h(\bm{v}_h,\bm{q}_h,\widehat{\bm{v}}_h;\bm{\mu})$ be shorthand notations for the left-hand sides in Equations \eqref{e:hDG_NS1} and \eqref{e:hDG_NS2}, respectively, where we have again omitted the time dependency of the solution. Since ({\em i}) Equations \eqref{e:hDG_NS1} and \eqref{e:hDG_NS2} hold for all $(\bm{w},\bm{\mu}) \in \bm{\mathcal{V}}_h^k \otimes \bm{\mathcal{M}}_h^k$, and ({\em ii}) the approximation and test spaces are the same, it follows that $c_h(\bm{v}_h,\bm{q}_h,\widehat{\bm{v}}_h,\partial \bm{v}_h / \partial t; \bm{v}_h) - d_h(\bm{v}_h,\bm{q}_h,\widehat{\bm{v}}_h;\widehat{\bm{v}}_h) = 0$, 
that is,
\begin{equation}
\label{e:eq124}
\begin{split}
0 & = a_h(\bm{v}_h,\widehat{\bm{v}}_h,\partial \bm{v}_h / \partial t;\bm{v}_h) - b_h(\bm{v}_h,\widehat{\bm{v}}_h;\widehat{\bm{v}}_h) \\
& - \int_{\mathcal{T}_h} \nabla \bm{v}_h^t \cdot \bm{G} (\bm{v}_h,\bm{q}_h) + \int_{\partial \mathcal{T}_h} \big( \bm{v}_h - \widehat{\bm{v}}_h \big)^t \cdot \widehat{\bm{g}}_h + \int_{\partial \Omega} \widehat{\bm{v}}_h^t \cdot \big( \widehat{\bm{g}}_h - \widehat{\bm{b}}_h^{\mathcal{V}} (\bm{v}_h,\bm{q}_h,\widehat{\bm{v}}_h ; \bm{v}^{\partial \Omega}) \big) , 
\end{split}
\end{equation}
where $\widehat{\bm{b}}_h^{\mathcal{V}}$ denotes the viscous contribution to the boundary condition term, i.e.\ the viscous terms in $\widehat{\bm{b}}_h$. Furthermore, using the $\bm{\mathcal{Q}}_h^k$ projection of $\bm{G}(\bm{v}_h , \bm{q}_h)$ as test function in Eq. \eqref{e:hDG_NS0}, the following identity holds for non-curved elements (i.e.\ $p=1$)
\begin{equation}
\label{e:eq123}
- \int_{\mathcal{T}_h} \nabla \bm{v}_h^t \cdot \bm{G} (\bm{v}_h,\bm{q}_h) = - \int_{\partial \mathcal{T}_h} \big( \bm{v}_h - \widehat{\bm{v}}_h \big)^t \cdot \Big( \Pi_{\bm{\mathcal{Q}}_h^k} \big[ \bm{G}(\bm{v}_h , \bm{q}_h) \big] \cdot \bm{n} \Big) + \int_{\mathcal{T}_h} \bm{q}_h^t \cdot \widetilde{\bm{\mathcal{K}}} (\bm{v}_h) \cdot \bm{q}_h , 
\end{equation}
where we have used ({\em i}) integration by parts\footnote{We can integrate by parts since, first, $\mathcal{T}_h$ is Lipschitz, $\bm{\mathcal{V}}_h^k \subset \mathcal{C}^{\infty} (\mathcal{T}_h ; \mathbb{R}^m)$ and $\bm{\mathcal{Q}}_h^k \subset \mathcal{C}^{\infty} (\mathcal{T}_h ; \mathbb{R}^{m \times d})$ for non-singular elements, and, second, $\bm{G} \in \mathcal{C}^{\infty} (X_v \otimes \mathbb{R}^{m \times d} ; \mathbb{R}^{m \times d})$ for physical states $\bm{v} \in X_v \implies \bm{G}( \bm{v}_h , \bm{q}_h ) \in \mathcal{C}^{\infty} (\mathcal{T}_h  ; \mathbb{R}^{m \times d})$ for physical solutions and non-singular elements $ \implies \big( \Pi_{\bm{\mathcal{Q}}_h^k} \big[ \bm{G} (\bm{v}_h , \bm{q}_h) \big] \big)  \in \mathcal{C}^{\infty} (\mathcal{T}_h  ; \mathbb{R}^{m \times d})$ for physical solutions and non-singular elements.}, ({\em ii}) $\bm{\mathcal{Q}}_h^k$-orthogonality, 
and ({\em iii}) the fact that $\nabla \bm{v}_h \in \bm{\mathcal{Q}}_h^k$ and $\nabla \bm{n}|_F = 0 \ \forall F \in \partial \mathcal{T}_h$ for non-curved elements. 
Note that $\widetilde{\bm{\mathcal{K}}}_{ij} = \bm{\mathcal{K}}_{ij} \, \tilde{\bm{A}}_0 , \, i,j=1,\dots,d$ are symmetric positive semi-definite \cite{Hughes:86}. 
The desired result then readily follows from Equations \eqref{e:eq124}$-$\eqref{e:eq123} and Proposition \ref{semiDiscGlobEntrIdentity}.

\subsection{\label{s:NSproof}Proof of Theorem \ref{mainThNS}}

Equation \eqref{e:entropyStabilityNS} is a corollary of Proposition \ref{semiDiscGlobEntrIdentityNS}. The upper bound in Eq. \eqref{e:entropyBoundNS} then trivially follows from the Newton-Leibniz formula and positivity of integration. The lower bound is established through an analogous procedure to that presented in the proof of Theorem \ref{mainTh}.

\section*{References}

\end{document}